\crefname{algocf}{Algorithm}{Algorithms}
\Crefname{algocf}{Algorithm}{Algorithms}
\crefname{claim}{Claim}{Claims}
\theoremstyle{plain}
\newtheorem{theorem}{Theorem}[section]
\newtheorem{lemma}[theorem]{Lemma}
\newtheorem{prop}[theorem]{Proposition}
\newtheorem{claim}[theorem]{Claim}
\theoremstyle{definition}
\newtheorem{definition}[theorem]{Definition}
\newtheorem{remark}[theorem]{Remark}
\DeclarePairedDelimiter{\tvdbasic}{\lVert}{\rVert}
\newcommand{\@tvdstar}[2]{\tvdbasic*{#1 - #2}_{\mathrm{TV}}}
\newcommand{\@tvdnostar}[3][]{\tvdbasic[#1]{#2 - #3}_{\mathrm{TV}}}
\newcommand{\tvd}{\@ifstar\@tvdstar\@tvdnostar}
\renewcommand{\epsilon}{\varepsilon}
\newcommand{\norm}[1]{\left\lVert #1 \right\rVert}
\newcommand{\midnorm}[1]{\lVert #1 \rVert}
\newcommand{\bk}[1]{\left( #1 \right)}
\newcommand{\bigbk}[1]{\bigl( #1 \bigr)}
\newcommand{\Bigbk}[1]{\Bigl( #1 \Bigr)}
\newcommand{\Biggbk}[1]{\Biggl( #1 \Biggr)}
\newcommand{\Bk}[1]{\left[ #1 \right]}
\newcommand{\bigBk}[1]{\bigl[ #1 \bigr]}
\newcommand{\BK}[1]{\left\{ #1 \right\}}
\newcommand{\midBK}[1]{\{ #1 \}}
\newcommand{\bigBK}[1]{\bigl\{ #1 \bigr\}}
\newcommand{\angbk}[1]{\left\langle #1 \right\rangle}
\DeclareMathOperator*{\E}{\mathbb{E}}
\let\Pr\myPr
\newcommand{\F}{\mathbb{F}}
\newcommand{\defeq}{\coloneqq}
\newcommand{\eps}{\varepsilon}
\newcommand{\T}{\mathcal{T}}
\newcommand{\N}{\mathbb{N}}
\newcommand{\R}{\mathbb{R}}
\newcommand{\Z}{\mathbb{Z}}
\renewcommand{\l}{\ell}
\newcommand{\numberthis}{\addtocounter{equation}{1}\tag{\theequation}}
\newcommand{\tall}{\vphantom{\sum}}
\let\left\mleft
\let\right\mright
\newcommand{\CW}{\mathrm{CW}}
\renewcommand{\split}{\textsf{\textup{split}}}
\newcommand{\Split}{\textsf{\textup{split}}}
\let\save@mathaccent\mathaccent
\newcommand*\if@single[3]{%
  \setbox0\hbox{${\mathaccent"0362{#1}}^H$}%
  \setbox2\hbox{${\mathaccent"0362{\kern0pt#1}}^H$}%
  \ifdim\ht0=\ht2 #3\else #2\fi
  }
\newcommand*\rel@kern[1]{\kern#1\dimexpr\macc@kerna}
\newcommand*\widebar[1]{\@ifnextchar^{{\wide@bar{#1}{0}}}{\wide@bar{#1}{1}}}
\newcommand*\wide@bar[2]{\if@single{#1}{\wide@bar@{#1}{#2}{1}}{\wide@bar@{#1}{#2}{2}}}
\newcommand*\wide@bar@[3]{%
  \begingroup
  \def\mathaccent##1##2{%
    \let\mathaccent\save@mathaccent
    \if#32 \let\macc@nucleus\first@char \fi
    \setbox\z@\hbox{$\macc@style{\macc@nucleus}_{}$}%
    \setbox\tw@\hbox{$\macc@style{\macc@nucleus}{}_{}$}%
    \dimen@\wd\tw@
    \advance\dimen@-\wd\z@
    \divide\dimen@ 3
    \@tempdima\wd\tw@
    \advance\@tempdima-\scriptspace
    \divide\@tempdima 10
    \advance\dimen@-\@tempdima
    \ifdim\dimen@>\z@ \dimen@0pt\fi
    \rel@kern{0.6}\kern-\dimen@
    \if#31
      \overline{\rel@kern{-0.6}\kern\dimen@\macc@nucleus\rel@kern{0.4}\kern\dimen@}%
      \advance\dimen@0.4\dimexpr\macc@kerna
      \let\final@kern#2%
      \ifdim\dimen@<\z@ \let\final@kern1\fi
      \if\final@kern1 \kern-\dimen@\fi
    \else
      \overline{\rel@kern{-0.6}\kern\dimen@#1}%
    \fi
  }%
  \macc@depth\@ne
  \let\math@bgroup\@empty \let\math@egroup\macc@set@skewchar
  \mathsurround\z@ \frozen@everymath{\mathgroup\macc@group\relax}%
  \macc@set@skewchar\relax
  \let\mathaccentV\macc@nested@a
  \if#31
    \macc@nested@a\relax111{#1}%
  \else
    \def\gobble@till@marker##1\endmarker{}%
    \futurelet\first@char\gobble@till@marker#1\endmarker
    \ifcat\noexpand\first@char A\else
      \def\first@char{}%
    \fi
    \macc@nested@a\relax111{\first@char}%
  \fi
  \endgroup
}
\xpatchcmd\thmt@restatable{%
\csname #2\@xa\endcsname\ifx\@nx#1\@nx\else[{#1}]\fi
}{%
\ifthmt@thisistheone
\csname #2\@xa\endcsname\ifx\@nx#1\@nx\else[{#1}]\fi
\else
\csname #2\@xa\endcsname[{Restated}]
\fi}{}{}
\renewcommand{\bar}{\widebar}
\newcommand{\Span}{\textup{span}}
\newcommand{\ind}{\mathbbm{1}}
\newcommand{\labs}[1]{\left\lvert #1 \right\rvert}
\newcommand{\lpr}[1]{\left( #1 \right)}
\newcommand{\lcr}[1]{\left\{ #1 \right\}}
\newcommand{\itX}{\textit{X}}
\newcommand{\itY}{\textit{Y}}
\newcommand{\itZ}{\textit{Z}}
\newcommand{\itW}{\textit{W}}
\newcommand{\hashx}{h_{\textit{X}}}
\newcommand{\hashy}{h_{\textit{Y}}}
\newcommand{\hashz}{h_{\textit{Z}}}
\newcommand{\alphx}[1][\alpha]{{#1}_{\textit{X}}}
\newcommand{\alphy}[1][\alpha]{{#1}_{\textit{Y}}}
\newcommand{\alphz}[1][\alpha]{{#1}_{\textit{Z}}}
\let\alphax\alphx
\let\alphay\alphy
\let\alphaz\alphz
\newcommand{\lvl}{\ensuremath{\l}}  %
\newcommand{\numxblock}{N_{\textup{BX}}}
\newcommand{\numyblock}{N_{\textup{BY}}}
\newcommand{\numzblock}{N_{\textup{BZ}}}
\newcommand{\numtriple}{N_{\alphax, \alphay, \alphaz}}
\newcommand{\numalpha}[1][\alpha]{N_{#1}}
\newcommand{\pcomp}{p_{\textup{comp}}}
\newcommand{\pcompY}{p_{\itY, \textup{comp}}}
\newcommand{\pcompZ}{p_{\itZ, \textup{comp}}}
\newcommand{\splres}{\beta}  %
\newcommand{\splresX}{\splres_{\textit{X}}}
\newcommand{\splresY}{\splres_{\textit{Y}}}
\newcommand{\splresZ}{\splres_{\textit{Z}}}
\newcommand{\splresavg}{\bar{\beta}}
\newcommand{\splresXt}[1][t]{\splres_{\textit{X}, #1}}
\newcommand{\splresYt}[1][t]{\splres_{\textit{Y}, #1}}
\newcommand{\splresZt}[1][t]{\splres_{\textit{Z}, #1}}
\newcommand{\splonelevelXt}[1][t]{\gamma_{\textit{X}, #1}}
\newcommand{\splonelevelYt}[1][t]{\gamma_{\textit{Y}, #1}}
\newcommand{\splonelevelZt}[1][t]{\gamma_{\textit{Z}, #1}}
\newcommand{\xiYt}[1][t]{\xi_{\textit{Y}, #1}}
\newcommand{\xiZt}[1][t]{\xi_{\textit{Z}, #1}}
\newcommand{\oeps}{o_{1/\varepsilon}}
\newcommand{\ang}[1]{\langle #1 \rangle}
\newcommand{\splavg}[1]{\splresavg_{#1, *, *, *}}
\newcommand{\+}{\textup{+}}
\newcommand{\<}{\textup{<}}
\newcommand{\THash}{\T_{\textup{hash}}}
\newcommand{\TYComp}{\T_{\textup{comp}}^Y}
\newcommand{\TYUseful}{\T_{\textup{useful}}^Y}
\newcommand{\TZComp}{\T_{\textup{comp}}^Z}
\newcommand{\TZUseful}{\T_{\textup{useful}}^Z}
\newcommand{\mmid}{\;\middle|\;}
\newcommand{\say}[1]{``#1''}
\newcommand{\EquationOnSameLine}[1]{\hspace*{\fill}$\displaystyle #1$\hspace*{\fill}\mbox{}}
\begin{document}

\newcommand{\authorspace}{1cm}

\author{
  \hspace{\authorspace}
  Josh Alman\thanks{Columbia University. \texttt{josh@cs.columbia.edu}. Supported in part by NSF Grant CCF-2238221 and a grant from the Simons Foundation (Grant Number 825870 JA).}
  \and
  Ran Duan\thanks{Institute for Interdisciplinary Information Sciences, Tsinghua University. \texttt{duanran@mail.tsinghua.edu.cn}.}
  \and
  Virginia Vassilevska Williams\thanks{Massachusetts Institute of Technology. \texttt{virgi@mit.edu}. Supported by NSF Grant CCF-2330048, BSF Grant 2020356 and a Simons Investigator Award. This work was done in part while the author was visiting the Simons Institute for the Theory of Computing.}
  \hspace{\authorspace}
  \and
  \hspace{\authorspace}
  Yinzhan Xu\thanks{Massachusetts Institute of Technology. \texttt{xyzhan@mit.edu}. Partially supported by NSF Grant CCF-2330048, BSF Grant 2020356 and a Simons Investigator Award. This work was done in part while the author was visiting the Simons Institute for the Theory of Computing.}
  \and
  Zixuan Xu\thanks{Massachusetts Institute of Technology. \texttt{zixuanxu@mit.edu}.}
  \and
  Renfei Zhou\thanks{Carnegie Mellon University. \texttt{renfeiz@andrew.cmu.edu}.}
  \hspace{\authorspace}
}
\title{More Asymmetry Yields Faster Matrix Multiplication}
\date{}
\pagenumbering{gobble} 
\maketitle

\begin{abstract}
We present a new improvement on the laser method for designing fast matrix multiplication algorithms. The new method further develops the recent advances by [Duan, Wu, Zhou FOCS 2023] and [Vassilevska Williams, Xu, Xu, Zhou SODA 2024]. Surprisingly the new improvement is achieved by incorporating more asymmetry in the analysis, circumventing a fundamental tool of prior work that requires two of the three dimensions to be treated identically. The method yields a new bound on the square matrix multiplication exponent
$$\omega<2.371339,$$
improved from the previous bound of $\omega<2.371552$. We also improve the bounds of the exponents for multiplying rectangular matrices of various shapes.
\end{abstract}

\newpage
\pagenumbering{arabic}

\section{Introduction}
Multiplication of matrices is a fundamental algebraic primitive with applications throughout computer science and beyond. The study of its algorithmic complexity has been a vibrant area in theoretical computer science and mathematics ever since Strassen's \cite{strassen} 1969 discovery that the {\em rank} of 2 by 2 matrix multiplication is $7$ (and not $8$), leading to the first truly subcubic, $O(n^{2.81})$-time algorithm for multiplying $n\times n$ matrices.
Fifty-five years later, researchers are still attempting to lower the exponent $\omega$, defined as the smallest real number for which $n\times n$ matrices can be multiplied in $O(n^{\omega+\eps})$ time for all $\eps>0$.

After many decades of work (e.g.~\cite{strassen,Pan78,BCRL79,Schonhage81,Romani82,cw81as,laser,cw90,stothers,virgi12,LeGall32power,AlmanW21,duan2023,VXXZ24}), 
the current best bound $\omega<2.371552$ was given by \cite{VXXZ24} optimizing a recent approach by Duan, Wu, and Zhou \cite{duan2023}.

There is a straightforward lower bound of $\omega\geq 2$ since the output size is $n^2$. 
No larger lower bound is known, leading many to optimistically conjecture that $\omega=2$.
Unfortunately, several papers \cite{ambainis,blasiak2017groups,blasiak2017cap,almanitcs,Alman21,aw2,ChristandlVZ21} prove significant limitations to all known approaches for designing matrix multiplication algorithms. The most general limitations \cite{Alman21,aw2,ChristandlVZ21} say that even major generalizations of the known approaches cannot prove $\omega=2$. The most restricted limitation~\cite{ambainis} focuses on the {\em laser method} defined by Strassen \cite{laser} and applied to the powers of a very particular tensor $\CW_5$ defined by Coppersmith and Winograd \cite{cw90}. This is the method that {\em all} the best results\footnote{Coppersmith and Winograd's paper \cite{cw90} technically used $\CW_6$ instead of $\CW_5$, but the limitation of \cite{ambainis} on $\CW_6$ is actually worse than those for $\CW_5$.} from the last 38 years have used. The limitation says that the laser method on $\CW_5$ cannot prove that $\omega<2.3078$. All papers on matrix multiplication algorithms from the last 10 years or so have been focused on bringing the $\omega$ upper bound closer to this $2.3078$ lower bound.

The main contribution of this paper is a new improvement over the laser method when applied to the powers of $\CW_5$. The new method builds upon work of \cite{duan2023} and \cite{VXXZ24}  and yields new improved bounds on $\omega$ and several rectangular matrix multiplication exponents\footnote{
  The constraint programs that our new method leads to are significantly larger and more complex than in prior work. The nonlinear solver we are using struggles, and it can take many days for it to get a solution for any fixed $\omega(1,k,1)$. Unfortunately, %
  we were not able to solve the constraint program for the value $\alpha$ studied by Coppersmith~\cite{Coppersmith82,coppersmith1997rectangular} defined as the largest number such that $n$ by $n^\alpha$ by $n$ matrix multiplication can be done in $O(n^{2+\eps})$ time for all $\eps>0$. %
} $\omega(1,k,1)$ defined as the smallest real value for which $n\times n^k$ by $n^k\times n$ matrix multiplication can be done in $O(n^{\omega(1,k,1)+\eps})$ time for all $\eps>0$; the previous best bounds for rectangular matrix multiplication were by \cite{LeGall24,VXXZ24}.

Our results are summarized in \cref{table:result}.
Our new bound on $\omega$ is 
\[\omega<2.371339,\]
improved from the previous bound by \cite{VXXZ24} of $\omega<2.371552$, inching towards the lower bound of $2.3078$.

As a specific example for a rectangular matrix multiplication exponent, we obtain a new bound for the exponent $\mu$ satisfying the equation $\omega(1,\mu,1)=1+2\mu$. The previous bound from \cite{VXXZ24} was $\mu<0.527661$ and we improve it to $\mu<0.5275$. The value $\mu$ is a key part of the best known running times of several important problems, including All-Pairs Shortest Paths (APSP) in unweighted directed graphs~\cite{zwickbridge}, computing minimum witnesses of Boolean Matrix Multiplication \cite{CzumajKL07}, and All-Pairs Bottleneck Paths in node-weighted graphs~\cite{ShapiraYZ11}. Our new bound implies %
that all the aforementioned problems can be solved in $O(n^{2.5275})$ time, improving on the previous known running time of $O(n^{2.527661})$.

\newcommand{\colwidth}{2.5cm}
\newcolumntype{M}[1]{>{\centering\arraybackslash}m{#1}}

\begin{table}[ht]
  \caption{Our bounds on $\omega(1, k, 1)$ from the fourth-power analysis of the CW tensor, compared to the previous bounds from \cite{VXXZ24}.}\label{table:result}
  \centering
  \begin{tabular}{|c|c|c|}
    \hline
    $k$ & upper bound on $\omega(1, k, 1)$ & previous bound on $\omega(1, k, 1)$ \\
    \hline
    0.33 & 2.000092 & 2.000100 \\
    0.34 & 2.000520 & 2.000600 \\
    0.35 & 2.001243 & 2.001363 \\
    0.40 & 2.009280 & 2.009541 \\
    0.50 & 2.042776 & 2.042994 \\
    \textbf{0.527500} & 2.054999 & N/A \\
    0.60 & 2.092351 & 2.092631 \\
    0.70 & 2.152770 & 2.153048 \\
    0.80 & 2.220639 & 2.220929 \\
    0.90 & 2.293941 & 2.294209 \\
    1.00 & \textbf{2.371339} & 2.371552 \\
    1.50 & 2.794633 & 2.794941 \\
    2.00 & 3.250035 & 3.250385 \\
    \hline
  \end{tabular}
\end{table}

\section{Technical Overview}

In this section, we give a high-level overview of recent algorithms for matrix multiplication and the new ideas we introduce in this paper for further improvement. We assume familiarity with notions related to tensors and matrix multiplication; the unfamiliar reader may want to read the preliminaries in Section~\ref{sec:prelim} below first. Afterward, in Section~\ref{sec:outline}, we give a more detailed overview of how we design our new algorithm.

\subsection{The Laser Method and Asymptotic Sum Inequality}

In order to design a matrix multiplication algorithm, using Sch{\"{o}}nhage's asymptotic sum inequality~\cite{Schonhage81}, it suffices to give an upper bound on the asymptotic rank of a direct sum of matrix multiplication tensors. However, directly bounding the rank of a matrix multiplication tensor is quite difficult (for example, even determining the rank of the tensor for multiplying $3 \times 3$ matrices is still an open problem today), and so algorithms since the work of Coppersmith and Winograd~\cite{cw90} have used an indirect approach:
\begin{enumerate}
    \item Start with the Coppersmith-Winograd tensor $\CW_q$, which is known to have minimal asymptotic rank.
    \item Take a large Kronecker power $\CW_q^{\otimes n}$, which must therefore also have a small (asymptotic) rank.
    \item \label{step3} Use the laser method to show that $\CW_q^{\otimes n}$ has a \emph{zeroing out}\footnote{If $T$ is a tensor over $X,Y,Z$, a zeroing out of $T$ is a tensor obtained by picking subsets $X' \subseteq X$, $Y' \subseteq Y$, $Z' \subseteq Z$, and restricting $T$ to only those subsets.} to a large direct sum $S$ of matrix multiplication tensors.
    \item Since zeroing out cannot increase the tensor rank, this shows that $S$ has a small (asymptotic) rank, and then we can apply Sch{\"{o}}nhage's asymptotic sum inequality.
\end{enumerate}
The laser method in Step \ref{step3} is a tool introduced by Strassen~\cite{strassenlaser1} for converting Kronecker powers of tensors into direct sums of matrix multiplication tensors. All improvements since Coppersmith and Winograd's algorithm have used this approach~\cite{stothers,virgi12,LeGall32power,AlmanW21,duan2023,VXXZ24}, focusing on improving the use of the laser method to yield a \emph{larger} direct sum $S$ of \emph{larger} matrix multiplication tensors, and this is the step which we improve here as well.
The laser method is a general tool which applies to any tensor $T$ that has been partitioned in a particular way. Let us first introduce some relevant notations.

Suppose $T$ is a tensor over $X,Y,Z$, and we partition $X = \bigsqcup_i X_i$, $Y = \bigsqcup_j Y_j$, and $Z = \bigsqcup_k Z_k$ and define the subtensor $T_{i,j,k}$ as $T$ restricted to $X_i, Y_j, Z_k$. Thus, $T$ is partitioned as $T = \sum_{i,j,k} T_{i,j,k}$. The laser method requires this partition to have additional structure\footnote{Namely, there must be an integer $p$ such that $T_{i,j,k} = 0$ whenever $i+j+k \neq p$. In the case of $\CW_q$, we will use a partition with $p=2$.} which we will not focus on in this overview. Notably, when we take a Kronecker power $T^{\otimes n}$, which is a tensor over $X^n, Y^n, Z^n$, this can also be partitioned as $T^{\otimes n} = \sum_{I,J,K} T_{IJK}$ where $I,J,K$ are vectors of length $n$, and $T_{IJK} \defeq \bigotimes_{\ell=1}^n T_{I_\ell, J_\ell, K_\ell}$ is a tensor over $X_I := \prod_{\ell=1}^n X_{I_\ell}$, $Y_J := \prod_{\ell=1}^n Y_{J_\ell}$, and $Z_K := \prod_{\ell=1}^n Z_{K_\ell}$.

Consider a probability distribution $\alpha$ on the subtensors of $T$, which assigns probability $\alpha_{ijk}$ to subtensor $T_{i,j,k}$. In the Kronecker power $T^{\otimes n}$, we can zero out according to the \emph{marginals} of $\alpha$. For instance, we zero out $X_I \subset X^n$ unless, for all $i$, we have $\frac{1}{n} \cdot |\{ \ell \in [n] \mid I_\ell = i\}| = \sum_{j,k} \alpha_{ijk}$. Assuming $\alpha$ is uniquely determined by its marginals (which is often not the case, as we will discuss more later), this means $T^{\otimes n}$ has been zeroed out so that every remaining $T_{IJK}$ is a copy (up to permutation of indices) of the tensor $B = \bigotimes_{i,j,k} T_{i,j,k}^{\otimes \alpha_{ijk} \cdot n}$.

Summarizing, we have so far zeroed out $T^{\otimes n}$ into a sum of many copies of the tensor $B$ (one copy of $B$ from each $T_{IJK}$). However, this is not a direct sum. Indeed, for example, there can be many remaining $T_{IJK}$ and $T_{I'J'K'}$ with $I = I'$, which both use $X_I$. 

The main result of the laser method is that a further, carefully-chosen zeroing out can result in a \emph{direct} sum of copies of the tensor $B$, where each two remaining $T_{IJK}$ and $T_{I'J'K'}$ have $I \neq I'$, $J \neq J'$, and $K \neq K'$. Moreover, this further zeroing out only removes a small number of $X_I, Y_J, Z_K$, so that a large number of copies of $B$ remain. (The exact number is a combinatorial expression in terms of the marginals of $\alpha$ which we will discuss in more detail in later sections.)

When this method is applied to $\CW_q$, each subtensor is in fact a matrix multiplication tensor, so $B$ is also a matrix multiplication tensor. Thus, the output of the laser method can be directly given to the asymptotic sum inequality.

\subsection{Recursive Applications of the Laser Method to Tensor Powers of \texorpdfstring{\boldmath $\CW_q$}{CWq}}

Coppersmith and Winograd \cite{cw90} then noticed that the laser method can yield improved results when applied to the Kronecker square, $\CW_q^{\otimes 2}$. 

Roughly speaking, the intuition is that the square gives us more freedom to pick the partition that is needed to apply the laser method, since Kronecker squares of the partitions of $\CW_q$ correspond to only a subset of the possible partitions of $\CW_q^{\otimes 2}$.  
Coppersmith and Winograd take advantage of this, and pick a partition of $\CW_q^{\otimes 2}$ which allows the tensors $T_{IJK}$ to become larger. 
In fact, compared to the partition obtained from analyzing  $\CW_q^{\otimes 2n}$, the partition from analyzing  $(\CW_q^{\otimes 2})^{\otimes n}$ is a \emph{coarsening} of the previous partition (where, for instance, each $X_I$ now is a union of possibly multiple previous $X_I$'s). Thus, $T_{IJK}$, which is a tensor over the variable blocks $X_I, Y_J, Z_K$, is now larger. 
If the tensors $T_{IJK}$'s were all matrix multiplication tensors, then the asymptotic sum inequality tends to give better bounds when they are larger. 
However, in this new partition, each subtensor is no longer a matrix multiplication tensor. In other words, the laser method still yields a direct sum of copies of $B = \bigotimes_{i,j,k} T_{i,j,k}^{\otimes \alpha_{ijk} \cdot n}$, but $B$ is not a matrix multiplication tensor, so the asymptotic sum inequality cannot be directly applied. 

Coppersmith and Winograd solved this by applying the laser method \emph{recursively}. They first applied the laser method to each subtensor $T_{i,j,k}$ which is not a matrix multiplication tensor, showing that $T_{i,j,k}^{\otimes m}$ for large $m$ can zero out into direct sums of matrix multiplication tensors. Then, since the tensor $B$ is a Kronecker product of such powers, this can be used to zero out into a direct sum of matrix multiplication tensors.

The next few improvements to $\omega$ \cite{stothers,virgi12,LeGall32power} applied this recursive laser method approach to $\CW_q^{\otimes 2^{\lvl-1}}$ for larger and larger integers $\lvl \ge 1$. Similar to before, there is a natural partition $X = \bigsqcup_i X_i$, $Y = \bigsqcup_j Y_j$, and $Z = \bigsqcup_k Z_k$ of the variables from the base tensor $\CW_q^{\otimes 2^{\lvl-1}}$, and products of $X_i$'s, $Y_j$'s or $Z_k$'s form subsets of variables $X_I$'s, $Y_J$'s, or $Z_K$'s in $(\CW_q^{\otimes 2^{\lvl-1}})^{\otimes n}$. We will call each subset of variables $X_I, Y_J, Z_K$ a \emph{level-$\lvl$ variable block} and we call a subtensor obtained from restricting to some level-$\ell$ variable blocks a \emph{level-$\ell$ subtensor}. Since the level-$\ell$ partition is a coarsening of the level-$(\ell-1)$ partition, each level-$\lvl$ variable block is the union of possibly many level-$(\lvl-1)$ variable blocks induced by the natural partition of $\CW_q^{\otimes 2^{\lvl-2}}$, and consequently each level-$\ell$ subtensor is the sum of many level-$(\ell-1)$ subtensors.
As $\lvl$ increases, the task of applying the laser method to increasingly more intricate subtensors becomes more difficult, and significant algorithmic and computational challenges arise. 

The next improvement~\cite{AlmanW21} focused on the case when the marginals of $\alpha$ do not uniquely determine $\alpha$. In this case, there is a corresponding penalty that reduces the number of copies of $B$ the laser method can achieve, and~\cite{AlmanW21} showed a new approach to reduce this penalty.

\subsection{Asymmetry and Combination Loss}

All improvements to the laser method thus far would apply equally well to any tensor with an appropriate partition. 
The next, recently improved algorithm~\cite{duan2023} focused on improving the recursive use of the laser method specifically when it is applied to the Coppersmith-Winograd tensor $\CW_q$ in the recursive fashion described above.

The key observation, referred to as ``combination loss'' in their paper, is as follows. Above, in order to ensure that the laser method gave a direct sum of copies of $B$, previous works ensured that any two remaining level-$\ell$ subtensors do not share any level-$\ell$ variable blocks $X_I$, $Y_J$ or $Z_K$. For $\ell = 1$, namely when the laser method is applied to the first power of $\CW_q$, this is both necessary and sufficient, because the level-$1$ subtensors are matrix multiplication tensors which use the entire level-1 variable blocks, so any two sharing an $I$, $J$, or $K$ must intersect.

However, for $\ell > 1$, the laser method needs to be applied recursively to every level-$k$ subtensor for $k = \ell,\,\ell\!-\!1,\dots, 1$. This means that by the end of the algorithm, we are potentially only using a small fraction of the variables in the level-$\ell$ subtensor rather than the entire subtensor. So it is possible that even if two level-$\ell$ subtensors $T_{IJK}$ and $T_{I'J'K}$ share the same level-$\ell$ $Z$-variable block $Z_K$ in the beginning, after recursive applications of the laser method, the subtensors of $T_{IJK}$ and $T_{I'J'K}$ which remain after zeroing out into matrix multiplication tensors may use disjoint subsets of $Z_K$, i.e., they may be independent. In this case, we could keep both $T_{IJK}$ and $T_{I'J'K}$ when we apply the laser method to $\CW_q^{\otimes 2^{\ell-1}}$ and still have a direct sum of subtensors, contrary to what was done in the previous algorithms where only one could be kept.

In order to benefit from this observation, one major issue is that when $T_{IJK}$ and $T_{I'J'K}$ are each zeroed out to their desired subtensors via the recursive applications of the laser method, they may each zero out parts of $Z_K$ that are meant to be used by the other. Thus it is unclear how to control and analyze the recursive procedure so that we maximize the number of these subtensors remaining in the end while ensuring that they are indeed independent.

However, for the $\CW_q$ tensor, Duan, Wu, and Zhou in \cite{duan2023} observed that the issue can be partially overcome. In particular, they were able to allow two level-$\ell$ subtensors to share level-$\ell$ variable blocks in the $Z$-dimension as long as they had the guarantee that any two level-$\ell$ subtensors do not share any level-$\ell$ blocks in the $X$- or $Y$-dimension. Due to the special structure of the $\CW_q$ tensor, it turns out that by ensuring independence among level-$\ell$ $X$- and $Y$-blocks, we can obtain sufficient information about the level-$(\ell-1)$ structure of $Z$-blocks in level-$\ell$ subtensors.
Therefore, by introducing asymmetry via treating the $Z$-variables differently from the $X$- and $Y$-variables, \cite{duan2023} obtained a procedure that obtains a collection of level-$\ell$ subtensors that do not share any level-$(\ell-1)$ $Z$-variable blocks, so they are independent. The follow-up work~\cite{VXXZ24} further generalized the techniques presented in \cite{duan2023} and allowed the sharing of level-$(\ell-1)$ $Z$-variable blocks. In fact, their approach only requires that each level-$1$ $Z$-variable block belongs to a unique remaining level-$\ell$ subtensor $T_{IJK}$. This allowed them to obtain an improvement since each subtensor $T_{IJK}$ can use a more ``fine-grained'' subset of $Z$-variables.

A priori, it is perhaps surprising that such an asymmetric approach, applied to the very symmetric Coppersmith-Winograd tensor, leads to an improved algorithm. In previous work, much effort was devoted to ensuring that the algorithm was entirely symmetric with respect to the $X,Y,Z$-dimensions. For instance, many intermediate steps of the laser method require taking the \emph{minimum} of three quantities depending separately on $X,Y,Z$ (like the entropies of the three marginal distributions of $\alpha$), and picking a symmetric distribution prevents losses in these steps. Similarly, in the design of algorithms for \emph{rectangular} matrix multiplication using the laser method~\cite{legallrect,legallrect2,VXXZ24}, one seems to truly suffer a penalty due to the necessary asymmetry. (For a rough comparison, symmetrizing the best rectangular matrix multiplication algorithms leads to a substantially worse square matrix multiplication algorithm.) However, the gains from the approach above ultimately outweigh the losses from asymmetry.

\subsection{Our Contributions: More Asymmetry}

Following from the sequence of improvements obtained by \cite{duan2023,VXXZ24}, the natural end goal for this line of work is to allow different remaining subtensors to share level-$\lvl$ variable blocks in all three dimensions, and only require level-$1$ variable blocks to belong to unique remaining level-$\ell$ subtensors. However, it is unclear how one may achieve this goal since the algorithms of both  \cite{duan2023} and \cite{VXXZ24} crucially use the fact that different remaining level-$\ell$ subtensors use unique level-$\lvl$ $X$- and $Y$-variable blocks. In this work, we make substantial progress toward this goal by allowing different remaining subtensors to share level-$\lvl$ variable blocks in both the $Y$- and $Z$-dimensions.

In this work, we give a new approach in which we \emph{only} require that any two level-$\ell$ subtensors do not share level-$\ell$ $X$-blocks. In particular, two level-$\ell$ subtensors can share the same level-$\ell$ $Y$-variable block or $Z$-variable block, but cannot share the same level-$1$ $Y$-variable block or $Z$-variable block, strengthening ideas in \cite{VXXZ24}. In this way, we are able to keep a larger number of level-$\ell$ subtensors and thus zero out the input $\CW_q^{\otimes 2^{\ell-1}}$ into more matrix multiplication tensors. This is achieved by introducing more asymmetry into our procedure: we now give a different zeroing-out procedure for each of the $X,Y,Z$-dimensions. This is perhaps unexpected, because in the end goal all $X$-, $Y$-, $Z$-dimensions have the same requirement that every level-$1$ variable block belongs to unique remaining level-$\ell$ subtensors. One might expect that a more symmetric implementation of the ideas of prior work would help towards the goal, but we instead use more asymmetry to get around a major limitation of the prior work.

Compared to the prior works, which use constrained information from both the $X$- and $Y$-dimensions to control the $Z$-dimension, we first use just the $X$-dimension to partially control the $Y$-dimension, and then use both the fully-controlled $X$-dimension and partially-controlled $Y$-dimension to control the $Z$-dimension. More specifically, by utilizing more structure specific to the  $\CW_q$ tensor, combined with the requirement that each level-$\ell$ subtensor $T_{IJK}$ uses a unique level-$\ell$ $X$-block, we find that this already constrains $T_{IJK}$ to use a smaller and more structured subset of level-$1$ $Y$-blocks, which is sufficient to perform zero-outs over level-$1$ $Y$-blocks so that each of them is contained in a unique level-$\ell$ subtensor. After obtaining a set of level-$\ell$ subtensors that do not share any level-$\ell$ $X$-blocks or level-$1$ $Y$-blocks, we then have enough information to do zero-outs over level-$1$ $Z$-blocks so that each of them are contained in a unique level-$\ell$ subtensor. See \cref{sec:outline} for a more detailed overview of our algorithm.

Similar to recent prior matrix multiplication algorithms, we need to solve a large, non-convex optimization problem in order to determine parameters for the laser method to achieve the best exponent. 
That said, we can already see that our new approach could potentially lead to an improvement without solving the optimization program to get the final bound for $\omega$. A key parameter that reflects the effectiveness of the laser method is the hash modulus being used in the hashing step standard to all applications of the laser method, which determines the number of remaining level-$\ell$ subtensors we are able to keep in expectation. Minimizing the value of this hash modulus can give an improvement to the bound we are able to obtain for $\omega$. The new hash modulus we use in this work is taken to be (ignoring low-order terms)
\[
\max\left\{\frac{\numtriple}{\numxblock},  
\frac{\numalpha \cdot \pcompY}{\numyblock}, 
\frac{\numalpha \cdot \pcompZ}{\numzblock}\right\},
\]
in comparison with the following value (similarly without low-order terms) used in \cite{VXXZ24}
\[
\max\left\{\frac{\numtriple}{\numxblock},  
\frac{\numtriple}{\numyblock}, 
\frac{\numalpha \cdot \pcompZ}{\numzblock}\right\}. 
\]
The definitions of all the variables in these expressions are not important for the purpose of this overview, and we focus instead on their structure.  
Two out of the three terms in the two bounds are the same, but we have improved the middle term from $\frac{\numtriple}{\numyblock}$ to $\frac{\numalpha \cdot \pcompY}{\numyblock}$, where $\pcompY$ is a variable between $0$ and $1$ which we can pick in our optimization problem. In fact, even in the worst case $\pcompY = 1$, we improve the second term by a factor of $\numtriple/\numalpha$, which is guaranteed to be at least $1$ and is exactly the penalty term studied by \cite{AlmanW21}.  Thus, even if we pick $\pcompY = 1$, we would at least recover the modulus of prior work. That said, since the modulus is defined by taking the maximum over the three values, it is not immediately clear that our approach uses a strictly smaller modulus. 
However, by solving the new constraint program,
one can verify that even in $\CW_q^{\otimes 2}$, the optimal parameters use a smaller value of $\pcompY$, which leads to a better bound on $\omega$. Indeed, our new approach gives a better analysis of $\CW_q^{\otimes 2}$ than prior work, improving from the bound $\omega < 2.374399$ achieved by~\cite{duan2023}\footnote{The subsequent work~\cite{VXXZ24} improves the analysis of higher powers of $\CW_q$ but not the square.} to  $\omega < 2.37432$.

\subsection{Limitations to Our Techniques}

In our approach, we successfully allow all obtained level-$\ell$ subtensors to share level-$k$ variable blocks for any $1<k\leq \ell$ in both the $Y$- and $Z$-dimensions, as long as they are independent with respect to the level-$1$ blocks. However, we still require that none of them share  level-$\ell$ $X$-blocks. In an ideal approach, to truly maximize the number of independent tensors, the subtensors should also be allowed to share level-$\ell$ $X$-blocks as long as they do not share level-$1$ $X$-blocks. 
However, there does not seem to be a way to continue our approach to also allow the $X$ blocks to share level-$\ell$-blocks. Our approach crucially uses the fact that a level-$\ell$ $X$-block uniquely determines the level-$\ell$ subtensor that it belongs in, to control the structure of level-$1$ blocks in level-$\ell$ subtensors. It seems that a truly new idea is needed to be able to control this structure without making the level-$\ell$ subtensors to have unique level-$\ell$ variable blocks in one of the dimensions. In other words, we believe that we have reached the limit of generalizing the techniques introduced in \cite{duan2023}.

Second, as in all recent improvements, determining the final running time of our matrix multiplication algorithm requires solving a non-convex optimization problem to determine the best choice of the probability distribution $\alpha$ to use in the laser method. Since we can no longer use a symmetry between the $X$- and $Y$-dimensions, our new optimization problem has more variables and appears to be considerably harder to solve to high precision than the problems in prior work. It would be particularly exciting to find an approach to improving $\omega$ which does not need powerful optimization software and computing systems.

\section{Preliminaries}
\label{sec:prelim}

We use by now standard notation as in \cite{AlmanW21,duan2023,VXXZ24}. The experienced reader may skip this section, as the definitions and notation are the same as in prior work.

\subsection{Tensors and Tensor Operations}

\paragraph{Tensors.} 
We deal with tensors of order $3$ which can be viewed as either $3$-dimensional arrays or as trilinear polynomials. We consider the latter representation:

A tensor $T$ over variable sets $X = \midBK{x_1, \ldots, x_{|X|}}$, $Y = \midBK{y_1, \ldots, y_{|Y|}}$, $Z = \midBK{z_1, \ldots, z_{|Z|}}$, and field $\F$ is a trilinear form
\[
  T = \sum_{i=1}^{|X|} \sum_{j=1}^{|Y|} \sum_{k=1}^{|Z|} a_{i,j,k} \cdot x_i y_j z_k,
\]
where all $a_{i,j,k}$ are from $\F$.

All tensors in this paper have all their entries $a_{i,j,k} \in \midBK{0, 1}$.
As any field $\F$ has a $0$ and a $1$, all tensors in the paper can be considered to be over any $\F$.

In the following, assume $T$ is a tensor over $X = \midBK{x_1, \ldots, x_{|X|}}$, $Y = \midBK{y_1, \ldots, y_{|Y|}}$, $Z = \midBK{z_1, \ldots, z_{|Z|}}$ and $T'$ is a tensor over $X' = \midBK{x'_1, \ldots, x'_{|X'|}}$, $Y' = \midBK{y'_1, \ldots, y'_{|Y'|}}$, $Z' = \midBK{z'_1, \ldots, z'_{|Z'|}}$, written as
\[
  T = \sum_{i=1}^{|X|} \sum_{j=1}^{|Y|} \sum_{k=1}^{|Z|} a_{i,j,k} \cdot x_i y_j z_k, \qquad
  T' = \sum_{i=1}^{|X'|} \sum_{j=1}^{|Y'|} \sum_{k=1}^{|Z'|} b_{i,j,k} \cdot x'_i y'_j z'_k,
\]
\paragraph{Tensor operations.} 
We define the sum, direct sum, and Kronecker product of two tensors $T$ and $T'$:
\begin{itemize}
\item The \emph{sum} $T + T'$ is only defined when both tensors are over the same variable sets $(X, Y, Z) = (X', Y', Z')$. $T+T'$ is the tensor defined by the sum of the two polynomials defining $T$ and $T'$:
  \[
    T + T' = \sum_{i=1}^{|X|} \sum_{j=1}^{|Y|} \sum_{k=1}^{|Z|} (a_{i,j,k} + b_{i,j,k}) \cdot x_i y_j z_k.
  \]
\item 
The \emph{direct sum} $T \oplus T'$ is the sum $T+T'$ when the variable sets of $T$ and $T'$ are disjoint. In this case, $T'$ and $T'$ are said to be \emph{independent}.

If $T$ and $T'$ share variables, we first relabel their variables so that the variable sets are disjoint and then $T \oplus T'$ is the sum of the two polynomials over new variables.

We write $T^{\oplus n} \defeq \underbrace{T \oplus T \oplus \cdots \oplus T}_{n~\textup{copies}}$ to denote the sum of $n$ independent copies\footnote{This is often written as $n\odot T$.} of $T$.

\item The \emph{Kronecker product} is defined as the tensor
  \[
    T \otimes T' = \sum_{i=1}^{|X|} \sum_{j=1}^{|Y|} \sum_{k=1}^{|Z|} \sum_{i'=1}^{|X'|} \sum_{j'=1}^{|Y'|} \sum_{k'=1}^{|Z'|} a_{i,j,k} \cdot b_{i', j', k'} \cdot (x_i, x'_{i'}) \cdot (y_j, y'_{j'}) \cdot (z_k, z'_{k'})
  \]
  over variable sets $X \times X'$, $Y \times Y'$, and $Z \times Z'$. We write $T^{\otimes n} \defeq \underbrace{T \otimes T \otimes \cdots \otimes T}_{n~\textup{times}}$ to denote the \emph{$n$-th tensor power} of $T$.
\item We say $T$ and $T'$ are \emph{isomorphic}\footnote{Note that in the literature, a more general type of isomorphism is considered, where $T$ and $T'$ can be transformed into one another under arbitrary linear transformations between $X$ and $X'$, $Y$ and $Y'$, and $Z$ and $Z'$, or even more generally under degenerations.}, denoted by $T \equiv T'$, if $|X| = |X'|$, $|Y| = |Y'|$, $|Z| = |Z'|$, and there are permutations $\pi_\textit{X}, \pi_\textit{Y}, \pi_\textit{Z}$ over $[|X|], [|Y|], [|Z|]$ respectively, such that $a_{i,j,k} = b_{\pi_\textit{X}(i), \pi_\textit{Y}(j), \pi_\textit{Z}(k)}$ for all $i, j, k$. In other words, both tensors are equivalent up to a relabeling of the variables. 
\end{itemize}

\subsection{Tensor Rank}
A tensor $T$ over $X, Y, Z$ has rank $1$ if there are some vectors $a,b,c$ of length $|X|,|Y|,|Z|$, respectively, so that 
\[
T=\Biggbk{\sum_{i = 1}^{|X|} a_{i}\cdot x_i}
\Biggbk{\sum_{j = 1}^{|Y|} b_{j}\cdot y_j}
\Biggbk{\sum_{k = 1}^{|Z|} c_{k}\cdot z_k}.
\]

For a tensor $T$ over $X, Y, Z$, the tensor rank $R(T)$ is defined to be the minimum integer $r\ge 0$ such that $T$ can be written as the sum of $r$ rank-$1$ tensors.
The corresponding sum is called the \emph{rank decomposition} of $T$.

Tensor rank satisfies subadditivity and submultiplicativity:

\begin{itemize}
    \item $R(T+T')\le R(T) + R(T')$.
    \item $R(T\oplus T')\le R(T) + R(T')$.
    \item $R(T\otimes T')\le R(T)\cdot R(T').$
\end{itemize}

The third item above implies that for all $m$, $R(T^{\otimes m})\leq R(T)^m$, and for many tensors (e.g., $2\times 2$ matrix multiplication) the inequality is strict. Due to Fekete's lemma, the following
 \emph{asymptotic rank} $\Tilde{R}(T)$ is well-defined for every tensor $T$: 
\[\Tilde{R}(T) \defeq \lim_{m\to \infty}\lpr{R(T^{\otimes m})}^{1/m}.\]
The asymptotic rank is upper bounded by $R(T^{\otimes m})^{1/m}$ for any fixed integer $m > 0$.

\subsection{Degenerations, Restrictions, Zero-outs}
Let $T$ and $T'$ be tensors over a field $\F$, $T$ is over $X, Y, Z$, and $T'$ is over $X', Y', Z'$. 

\paragraph{Degeneration.}
Let $\F[\lambda]$ be the ring of polynomials of the formal variable $\lambda$. We say that $T'$ is a degeneration of $T$, written as $T \unrhd T'$, if there exists $\F[\lambda]$-linear maps 
\begin{align*}
    \phi_\itX &: \Span_{\F[\lambda]}(X)\to \Span_{\F[\lambda]}(X'),\\
    \phi_\itY &: \Span_{\F[\lambda]}(Y)\to \Span_{\F[\lambda]}(Y'),\\
    \phi_\itZ &: \Span_{\F[\lambda]}(Z)\to \Span_{\F[\lambda]}(Z'),
\end{align*}
and $d\in \N$ such that  
\[T' = \lambda^{-d}\lpr{\sum_{i = 1}^{|X|}\sum_{j = 1}^{|Y|}\sum_{k = 1}^{|Z|}a_{i,j,k} \cdot \phi_\itX(x_i) \cdot \phi_\itY(y_j) \cdot \phi_\itZ(z_k)} + O(\lambda).\]

If $T' \unrhd T$, then $\Tilde{R}(T')\le \Tilde{R}(T)$.

\paragraph{Restriction.}
Restriction is a special type of degeneration for the case where the maps $\phi_\itX, \phi_\itY, \phi_\itZ$ are $\F$-linear maps. More specifically, $T'$ is a restriction of $T$ if there exist $\F$-linear maps
\begin{align*}
    \phi_\itX &: \Span_{\F}(X)\to \Span_{\F}(X'),\\
    \phi_\itY &: \Span_{\F}(Y)\to \Span_{\F}(Y'),\\
    \phi_\itZ &: \Span_{\F}(Z)\to \Span_{\F}(Z'),
\end{align*}
such that
\[T' = \sum_{i = 1}^{|X|}\sum_{j = 1}^{|Y|}\sum_{k = 1}^{|Z|}a_{i,j,k} \cdot \phi_\itX(x_i) \cdot \phi_\itY(y_j) \cdot \phi_\itZ(z_k).\]
If $T'$ is a restriction of $T$, $R(T')\le R(T)$, and consequently $\Tilde{R}(T')\le \Tilde{R}(T)$.

\paragraph{Zero-out.}

The laser method utilizes a limited type of restriction called zero-out, namely the maps $\phi_\itX, \phi_\itY, \phi_\itZ$ set some variables to zero. More specifically, for subsets $X'\subseteq X$, $Y'\subseteq Y$, $Z'\subseteq Z$ we define the maps as 
\[\phi_\itX(x_i) = \begin{cases}x_i & \text{If }x_i\in X',\\ 0 & \text{otherwise},\end{cases}\]
and similarly for $\phi_\itY, \phi_\itZ$. The resulting tensor 
\[T' = \sum_{i = 1}^{|X|}\sum_{j = 1}^{|Y|}\sum_{k = 1}^{|Z|}a_{i,j,k} \cdot \phi_\itX(x_i) \cdot \phi_\itY(y_j) \cdot \phi_\itZ(z_k) = \sum_{x_i\in X'}\sum_{y_j\in Y'}\sum_{z_k\in Z'}a_{i,j,k} \cdot x_iy_jz_k\]
is called a zero-out of $T$. Throughout this paper, we use the notation $T' = T\vert_{X',Y',Z'}$ to denote the tensor $T'$ obtained from $T$ by setting the variables in $X\setminus X'$, $Y\setminus Y'$, $Z\setminus Z'$ to zero. In this case, we also call $T'$ the \emph{subtensor} of $T$ over $X', Y', Z'$.

\subsection{Matrix Multiplication Tensors}

For positive integers $a,b,c$, the $a\times b \times c$ matrix multiplication tensor $\ang{a,b,c}$ is a tensor over the variable sets $\{x_{ij}\}_{i\in [a], j\in [b]}, \{y_{jk}\}_{j\in [b],k\in[c]}, \{z_{ki}\}_{i\in [a],k\in [c]}$ defined as the tensor computing the $a\times c$ (transpose of the) product matrix
$\{z_{ki}\}_{i\in [a], k\in [c]}$
of an $a\times b$ matrix $\{x_{ij}\}_{i\in [a], j\in [b]}$ and $b\times c$ matrix $\{y_{jk}\}_{j\in [b], k\in [c]}$. Specifically, 
\[\ang{a,b,c} = \sum_{i\in [a]}\sum_{j\in [b]}\sum_{k\in [c]} x_{ij} y_{jk} z_{ki}.\]
Notice that the coefficient in front of $z_{ki}$ is the $(i,k)$ entry of the product of $x$ with $y$.
It is not hard to check that $\ang{a,b,c}\otimes \ang{d,e,f} \equiv \ang{ad, be, cf}$.

Following the recursive approach introduced by Strassen in \cite{strassen}, for any integer $q \ge 2$, if $R(\ang{q,q,q})\le r$, then one can use the rank decomposition of $\ang{q,q,q}$ to design an arithmetic circuit of size $O(n^{\log_q(r)})$ to multiply two $n \times n$ matrices. This motivates the definition of the \emph{matrix multiplication exponent} $\omega$ as follows:
\[\omega \defeq \inf_{q \in \N, \, q \ge 2} \log_q (R(\ang{q,q,q})).\]
Namely, for every $\eps > 0$, there exists an arithmetic circuit of size $O(n^{\omega + \eps})$ that computes the multiplication of two $n\times n$ matrices. Since $\ang{q,q,q}^{\otimes n} \equiv \ang{q^n, q^n, q^n}$, equivalently $\omega$ can be written in terms of the asymptotic rank of $\ang{q,q,q}$ as 
\[\omega = \log_q (\Tilde{R}(\ang{q,q,q})).\]

We also consider the arithmetic complexity of multiplying rectangular matrices of sizes $n^a\times n^b$ and $n^b\times n^c$ where $a,b,c \in \R_{\ge 0}$. We define $\omega(a,b,c)$ as 
\[\omega(a,b,c) = \log_q\lpr{\Tilde{R}(\ang{q^a,q^b,q^c})}\]
where $q \ge 2$ is a positive integer. This means that for any $\eps > 0$, there exists an arithmetic circuit of size $O(n^{\omega(a,b,c) + \eps})$ that computes the multiplication of an $n^a\times n^b$ matrix with an $n^b\times n^c$ matrix. 

It is known that $\omega(a,b,c)=\omega(a,c,b)=\omega(c,a,b)$ so that the value remains the same for any permutation of the dimensions $a,b,c$.
In this paper, we focus on bounds for $\omega(1,\kappa,1)$ for $\kappa > 0$. 

\subsection{Sch\"onhage's Asymptotic Sum Inequality}

From a bound $r$ on the rank of any $\ang{a,b,c}$ tensor, one can obtain an upper bound $\omega\leq 3\log_{abc}(r)$.
 Sch\"onhage \cite{Schonhage81} extended this fact by showing that one can obtain an upper bound on $\omega$ using any upper bound on the asymptotic rank of a {\em direct sum of matrix multiplication tensors}:

\begin{theorem}[Asymptotic Sum Inequality \cite{Schonhage81}]\label{thm:schonhage-ineq}
For positive integers $r > m$ and $a_i, b_i,c_i$ for $i\in [m]$, if 
\[\Tilde{R}\bk{\bigoplus_{i = 1}^m \ang{a_i,b_i,c_i}}\le r,\]
then $\omega\le 3\tau$ where $\tau\in [2/3,1]$ is the solution to the equation
\[\sum_{i = 1}^m (a_i\cdot b_i\cdot c_i)^\tau = r.\]
\end{theorem}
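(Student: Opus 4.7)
The plan is to prove this via Schönhage's classical cyclic-product argument, which converts a rank bound on a direct sum of (potentially rectangular) matrix multiplication tensors into a bound on a single \emph{square} matrix multiplication tensor, to which the definition of $\omega$ applies directly. Write $T := \bigoplus_{i=1}^m \ang{a_i,b_i,c_i}$, and let $T_1 := T$, together with $T_2$ and $T_3$, be the three tensors obtained by cyclically permuting the roles of $X,Y,Z$, so that the $i$-th summand of $T_2$ is $\ang{b_i,c_i,a_i}$ and of $T_3$ is $\ang{c_i,a_i,b_i}$. All three have asymptotic rank at most $r$, so the auxiliary tensor $S := T_1 \otimes T_2 \otimes T_3$ satisfies $\tilde R(S) \le r^3$ and hence $R(S^{\otimes N}) \le r^{3N(1+o_N(1))}$ for every $N$.

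Next, expand $S^{\otimes N} = T_1^{\otimes N} \otimes T_2^{\otimes N} \otimes T_3^{\otimes N}$ by the multinomial theorem. Each $T_j^{\otimes N}$ is a direct sum indexed by tuples $(n_1,\dots,n_m)$ with $\sum_i n_i = N$; for any fixed tuple, $T_1^{\otimes N}$ contains $\binom{N}{n_1,\dots,n_m}$ independent copies of $\ang{A,B,C}$, where $A := \prod_i a_i^{n_i}$, $B := \prod_i b_i^{n_i}$, $C := \prod_i c_i^{n_i}$; similarly $T_2^{\otimes N}$ contributes copies of $\ang{B,C,A}$ and $T_3^{\otimes N}$ copies of $\ang{C,A,B}$. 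Choosing the \emph{same} tuple in all three factors therefore produces $\binom{N}{n_1,\dots,n_m}^3$ independent copies of
\[
\ang{A,B,C} \otimes \ang{B,C,A} \otimes \ang{C,A,B} \equiv \ang{D,D,D}, \qquad D := ABC = \prod_{i=1}^m (a_ib_ic_i)^{n_i}.
\]
Subadditivity of rank under direct sums, combined with the definitional bound $R(\ang{D,D,D}) \ge \tilde R(\ang{D,D,D}) = D^\omega$, gives
\[
\binom{N}{n_1,\dots,n_m}^3 \cdot D^\omega \;\le\; R(S^{\otimes N}) \;\le\; r^{3N(1+o_N(1))}.
\]

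To conclude, take $\log_2$, divide by $N$, set $n_i = \tau_i N$ for a probability distribution $\tau$ on $[m]$, and apply Stirling's formula: $\frac{1}{N}\log_2\binom{N}{\tau_1 N,\dots,\tau_m N} \to H(\tau) := -\sum_i \tau_i \log_2 \tau_i$ and $\frac{1}{N}\log_2 D \to L(\tau) := \sum_i \tau_i \log_2(a_ib_ic_i)$. Passing to the limit $N \to \infty$ yields the master inequality $\omega\, L(\tau) \le 3\bigl(\log_2 r - H(\tau)\bigr)$, valid for every probability distribution $\tau$. Specializing to Schönhage's choice $\tau_i := (a_ib_ic_i)^\tau / r$---a valid probability distribution exactly because $\sum_i (a_ib_ic_i)^\tau = r$---a direct calculation gives $H(\tau) = \log_2 r - \tau L(\tau)$, so $\log_2 r - H(\tau) = \tau L(\tau)$ and dividing through by $L(\tau) > 0$ delivers $\omega \le 3\tau$. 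The main conceptual obstacle is the cyclic setup of $S$, which is what transforms otherwise rectangular matrix multiplication tensors into cubical ones and thereby lets us invoke the definition of $\omega$; the rest reduces to routine entropy bookkeeping (with the integer rounding of the $\tau_i N$ absorbed into the $o_N(1)$ term, and the degenerate case $L(\tau) = 0$ corresponding only to trivial $1\times 1\times 1$ summands).
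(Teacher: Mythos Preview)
The paper does not prove this theorem; it is quoted from Sch\"onhage's original work and used as a black box, so there is no in-paper proof to compare against. I will therefore evaluate your argument on its own.

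Your overall outline (cycling to symmetrize, multinomial expansion of the $N$-th power, entropy optimization with the Gibbs-type choice $\tau_i = (a_ib_ic_i)^\tau/r$) is indeed the classical Sch\"onhage route, and steps 1, 2, 4, 5 of your sketch are fine. The gap is at the key inequality
\[
\binom{N}{n_1,\dots,n_m}^{3}\cdot D^{\omega}\ \le\ R\!\left(S^{\otimes N}\right),
\]
which you justify by ``subadditivity of rank under direct sums, combined with $R(\langle D,D,D\rangle)\ge D^\omega$.'' Subadditivity is the statement $R(\bigoplus_i T_i)\le \sum_i R(T_i)$; it gives \emph{upper} bounds on the rank of a direct sum and points the wrong way here. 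What you would need is $R(k\odot\langle D,D,D\rangle)\ge k\cdot R(\langle D,D,D\rangle)$, i.e.\ Strassen's additivity conjecture, which is false in general (Shitov, 2019). If instead you only convert the $k=\binom{N}{\vec n}^3$ copies into a single square tensor via $k\odot\langle D,D,D\rangle \unrhd \langle k^{1/3}D,k^{1/3}D,k^{1/3}D\rangle$, you obtain the weaker bound $\binom{N}{\vec n}^{\omega}D^{\omega}\le R(S^{\otimes N})$, and plugging this into your entropy calculation yields $\omega\le 3\log r/(H(\tau)+L(\tau))$, which does \emph{not} simplify to $3\tau$ for the Sch\"onhage choice of $\tau$.

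The displayed inequality \emph{is} true, but proving it is precisely the nontrivial core of Sch\"onhage's argument that your sketch omits: one needs the lemma that $R(s\odot\langle h,h,h\rangle)\le r$ forces $s\,h^{\omega}\le r$. This is shown by a recursion, not by subadditivity: view $\langle h^M,h^M,h^M\rangle$ as $\langle h,h,h\rangle$ over $h^{M-1}\times h^{M-1}$ blocks, apply the rank-$r$ algorithm for $s$ simultaneous $\langle h,h,h\rangle$ products (padding with zeros), obtain $r$ subproblems of size $h^{M-1}$, batch those $r$ subproblems into $\lceil r/s\rceil$ groups of $s$, and recurse. This gives $R(\langle h^M,h^M,h^M\rangle)\le r\cdot\lceil r/s\rceil^{M-1}$, hence $h^{\omega}\le r/s$ after a tensor-power cleanup. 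With this lemma in hand, the rest of your argument goes through verbatim; without it, the proof does not close.
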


Analogously, the asymptotic sum inequality can also be used to obtain bounds on the rectangular matrix multiplication as follows.

\begin{theorem}[Asymptotic Sum Inequality for $\omega(a,b,c)$ \cite{Schonhage81,blaser}]\label{thm:schonhage-ineq-rect}
    Let $t, \, q > 0$ be positive integers and $a,b,c \ge 0$ , then
   \[t\cdot q^{\omega(a,b,c)}\le \Tilde{R}\lpr{\bigoplus_{i = 1}^t \ang{q^a, q^b,q^c}}.\]
\end{theorem}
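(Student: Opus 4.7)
My plan is to follow Sch\"onhage's cyclic product symmetrization, which reduces direct sums of rectangular matrix multiplication tensors to direct sums of cube matrix multiplication tensors, where \cref{thm:schonhage-ineq} can be invoked. Writing $T = \langle q^a, q^b, q^c\rangle$ and letting $\pi$ denote the cyclic permutation of tensor modes (so $T^\pi = \langle q^b, q^c, q^a\rangle$ has the same rank as $T$), the Kronecker product rule $\langle a,b,c\rangle \otimes \langle a',b',c'\rangle = \langle aa',bb',cc'\rangle$ for matrix multiplication tensors gives the identity
\[
T \otimes T^\pi \otimes T^{\pi^2} \equiv \langle q^{a+b+c}, q^{a+b+c}, q^{a+b+c}\rangle,
\]
a cube matrix multiplication tensor.

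To extend this to the direct sum $T' = \bigoplus_{i=1}^t T$, I would use that mode permutations commute with direct sums and that Kronecker products distribute over direct sums on disjoint variables. This yields
\[
T' \otimes (T')^\pi \otimes (T')^{\pi^2} \equiv \bigoplus_{i=1}^{t^3} \langle q^{a+b+c}, q^{a+b+c}, q^{a+b+c}\rangle.
\]
Submultiplicativity and mode-permutation invariance of asymptotic rank then give $\tilde{R}(\bigoplus_{i=1}^{t^3} \langle q^{a+b+c}, q^{a+b+c}, q^{a+b+c}\rangle) \le \tilde{R}(T')^3$. Applying \cref{thm:schonhage-ineq} to the left-hand side (a direct sum of $t^3$ cube matrix multiplication tensors of side $q^{a+b+c}$), with the solution $\tau = \omega/3$, yields $t^3 \cdot q^{(a+b+c)\omega} \le \tilde{R}(T')^3$.

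The main obstacle is upgrading this bound, which involves the square exponent $\omega$, to one involving the rectangular exponent $\omega(a,b,c)$: in general $\omega(a,b,c) \ge (a+b+c)\omega/3$, so the desired rectangular bound is strictly stronger than what the chain above yields. To bridge this gap, I would apply the cyclic product at higher Kronecker powers $(T')^{\otimes N}$, producing the chain $\tilde{R}(\bigoplus_{i=1}^{t^{3N}} \langle q^{N(a+b+c)}, q^{N(a+b+c)}, q^{N(a+b+c)}\rangle) \le \tilde{R}(T')^{3N}$, and then invoke a rectangular analog of \cref{thm:schonhage-ineq} on the resulting direct sums: using the scaling law $\omega(ka,kb,kc) = k\omega(a,b,c)$ together with the self-consistent definition $q^{\omega(a,b,c)} = \tilde{R}(\langle q^a, q^b, q^c\rangle)$, the rectangular exponent emerges as the correct limit as $N \to \infty$, matching the statement of \cref{thm:schonhage-ineq-rect}. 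This limiting argument, which replaces the invocation of Sch\"onhage's square sum inequality with an iterative rectangular version, is the technical heart of the proof.
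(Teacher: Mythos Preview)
The paper does not prove \cref{thm:schonhage-ineq-rect}; it is quoted from \cite{Schonhage81} without argument. So there is no paper proof to compare against, and I will simply evaluate your proposal on its own.

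Your approach has a genuine gap that cannot be closed along the lines you sketch. The cyclic symmetrization $T\otimes T^{\pi}\otimes T^{\pi^2}$ collapses every rectangular shape $(a,b,c)$ to the cube $(a+b+c,a+b+c,a+b+c)$, and this remains true after taking Kronecker powers: $(T')^{\otimes N}\otimes ((T')^{\otimes N})^{\pi}\otimes ((T')^{\otimes N})^{\pi^2}$ is still a direct sum of \emph{cube} matrix multiplication tensors. Any asymptotic sum inequality applied to those cubes can only produce the exponent $\omega=\omega(1,1,1)$, never $\omega(a,b,c)$. Your scaling law $\omega(ka,kb,kc)=k\,\omega(a,b,c)$, applied to the cubes you actually have, reads $\omega(k,k,k)=k\omega$; it does not let you trade $\omega$ back for $\omega(a,b,c)$. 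Thus the strongest conclusion your chain yields is $t\cdot q^{(a+b+c)\omega/3}\le\tilde R(T')$, which is strictly weaker than the claim whenever $a,b,c$ are not all equal. Invoking ``a rectangular analog of \cref{thm:schonhage-ineq}'' at the end is circular: that analog is \cref{thm:schonhage-ineq-rect} itself.

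The standard argument avoids symmetrization entirely. Writing $r=\tilde R\bigl(\bigoplus_{i=1}^{t}\langle q^a,q^b,q^c\rangle\bigr)$ and taking the $N$-th Kronecker power gives $\bigoplus_{i=1}^{t^N}\langle q^{Na},q^{Nb},q^{Nc}\rangle$ with rank at most $(r+\eps)^N$. The key step is the elementary restriction $\bigoplus_{i=1}^{L}\langle A,B,C\rangle \unrhd \langle LA,B,C\rangle$ (identify the $Y$-variables across the $L$ summands), which preserves the aspect ratio $(a:b:c)$ rather than destroying it. One then reads off an upper bound on $\tilde R(\langle q^a,q^b,q^c\rangle)=q^{\omega(a,b,c)}$ directly from the definition, obtaining $t\cdot q^{\omega(a,b,c)}\le r$ after letting $N\to\infty$ and $\eps\to 0$.
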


\subsection{The Coppersmith-Winograd Tensor}

For a nonnegative integer $q\ge 0$, the Coppersmith-Winograd tensor $\CW_q$ over the variables $X = \{x_0,\dots, x_{q+1}\}$, $Y = \{y_0,\dots, y_{q+1}\}$, $Z = \{z_0,\dots, z_{q+1}\}$ is defined as
\[\CW_q \defeq x_0y_0z_{q+1} + x_0y_{q+1}z_0 + x_{q+1}y_0z_0 + \sum_{i = 1}^q \lpr{x_0y_iz_i + x_iy_0z_i + x_iy_iz_0}.\]
Coppersmith and Winograd \cite{cw90} showed that $\Tilde{R}(\CW_q) \le q+2$.

Observe that 
\[\CW_q \equiv \ang{1,1,q} + \ang{q,1,1} + \ang{1,q,1} + \ang{1,1,1}+\ang{1,1,1}+\ang{1,1,1},\]
where unfortunately \say{$+$} is a sum and not a direct sum, so that Sch\"{o}nhage's theorem doesn't immediately apply.
The goal of the laser method is to zero out variables in $\CW_q^{\otimes n}$ so that Sch\"{o}nhage's theorem can be applied.

\subsection{Base Leveled Partition of \texorpdfstring{\boldmath $\CW_q$}{CW\_q}}

Consider the $2^{\lvl-1}$-th tensor power of $\CW_q$ for $\lvl \ge 1$. For convenience, denote $T^{(\lvl)} \defeq \CW_q^{\otimes 2^{\lvl-1}}$. 

Coppersmith and Winograd \cite{cw90}
defined a partitioning of the variables of $\CW_q$ which is used in all following works including \cite{virgi12,AlmanW21,LeGall32power,duan2023}. 
We call this the level-$1$ partition. More generally, we
describe the leveled partition of $T^{(\lvl)}$ which also comes from \cite{cw90} and the subsequent works.

\paragraph{Level-1 partition.} For $T^{(1)} = \CW_q$, its variable sets $X^{(1)}, Y^{(1)}, Z^{(1)}$ are partitioned into three parts
\begin{align*}
    X^{(1)} &= X^{(1)}_0 \sqcup X^{(1)}_1 \sqcup X^{(1)}_2 = \{x_0\}\sqcup \{x_1,\dots, x_q\}\sqcup \{x_{q+1}\},\\
    Y^{(1)} &= Y^{(1)}_0 \sqcup Y^{(1)}_1 \sqcup Y^{(1)}_2 = \{y_0\}\sqcup \{y_1,\dots, y_q\}\sqcup \{y_{q+1}\},\\
    Z^{(1)} &= Z^{(1)}_0 \sqcup Z^{(1)}_1 \sqcup Z^{(1)}_2 =  \{z_0\}\sqcup \{z_1,\dots, z_q\}\sqcup \{z_{q+1}\}.
\end{align*}
We  denote the subtensor $T^{(1)}\big\vert_{X^{(1)}_i, Y^{(1)}_j,Z^{(1)}_k}$ by $T_{i,j,k}^{(1)}$ and we call it a \emph{level-$1$ constituent tensor}. Note that $T^{(1)}_{i, j, k}$ is nonzero if and only if $i+j+k = 2$. In particular, we can write $\CW_q$ as a sum of constituent tensors as follows
\[T^{(1)} = \CW_q = \sum_{\substack{i,j,k\ge 0\\ i+j+k = 2}} T_{i,j,k}^{(1)}.\]

\paragraph{Level-\boldmath$\lvl$ partition.} For $T^{(\lvl)}=\CW_q^{\otimes 2^{\lvl-1}}$ with variable sets $X^{(\lvl)}, Y^{(\lvl)}, Z^{(\lvl)}$, the above level-$1$ partition on $T^{(1)}$ induces a natural partition on the variable sets $X^{(\lvl)}, Y^{(\lvl)}, Z^{(\lvl)}$ 
where parts are indexed by $\{0, 1, 2\}$-sequences of length $2^{\lvl-1}$. Specifically, for the $X$ variables we get the partition
\[X^{(\lvl)} = \bigsqcup_{(\hat{i}_1, \hat{i}_2, \ldots, \hat{i}_{2^{\lvl-1}}) \in \{0, 1, 2\}^{2^{\lvl-1}}}X^{(1)}_{\hat{i}_1} \otimes  X^{(1)}_{\hat{i}_2} \otimes \cdots \otimes X^{(1)}_{\hat{i}_{2^{\lvl-1}}}.\]
The partitions for $Y$- and $Z$-variables are analogous.

Prior work starting with \cite{cw90} used partitions obtained from the above induced partitions by merging parts whose sequences indices sum to the same number.

\[
  X^{(\lvl)} = \bigsqcup_{i = 0}^{2^\lvl} X_i^{(\lvl)},
  \qquad \textup{where} \quad
  X_i^{(\lvl)} \defeq
  \bigsqcup_{\substack{(\hat{i}_1, \hat{i}_2, \ldots, \hat{i}_{2^{\lvl-1}}) \in \{0, 1, 2\}^{2^{\lvl-1}}  \\\sum_t \hat{i}_t = i}}X^{(1)}_{\hat{i}_1} \otimes  X^{(1)}_{\hat{i}_2} \otimes \cdots \otimes X^{(1)}_{\hat{i}_{2^{\lvl-1}}}.
\]
This above partition of $T^{(\lvl)}$ is called the \emph{level-$\lvl$ partition}.
One can also view this partition as a coarsening of the level-($\lvl-1$) partition via merging, i.e.,
\[X_i^{(\lvl)} = \bigsqcup_{\substack{0 \le i' \le i \\ 0 \le i', i - i' \le 2^{\lvl - 1}}}X^{(\lvl-1)}_{i'} \otimes X^{(\lvl-1)}_{i-i'}.\]
The variable sets $Y^{(\lvl)}$ and $Z^{(\lvl)}$ are partitioned similarly. 
 
 Under the level-$\lvl$ partition, denote the subtensor $T^{(\lvl)} \big\vert_{X_i^{(\lvl)}, Y_j^{(\lvl)}, Z_k^{(\lvl)}}$ by $T^{(\lvl)}_{i, j, k}$ and call it a \emph{level-$\lvl$ constituent tensor}.
 Call $X^{(\lvl)}_i, Y^{(\lvl)}_j, Z^{(\lvl)}_k$ level-$\lvl$ variable blocks. We omit the superscript $(\lvl)$ when $\lvl$ is clear from context.
 
 Note that $T^{(\lvl)}_{i, j, k}\neq 0$ if and only if $i+j+k = 2^{\lvl}$. Thus:
 
 \[T^{(\lvl)} = \CW_q^{\otimes 2^{\lvl-1}}= \sum_{\substack{i,j,k\ge 0\\ i+j+k = 2^{\lvl}}} T_{i,j,k}^{(\lvl)}.\]

\subsection{Leveled Partition for Large Tensor Powers of \texorpdfstring{\boldmath $\CW_q$}{CW\_q}}

In the laser method, we focus on a tensor power of $\CW_q$ in the form $(T^{(\lvl)})^{\otimes n} = (\CW_q)^{\otimes n\cdot 2^{\lvl-1}}$. We set $N \defeq n\cdot 2^{\lvl-1}$ and note that the leveled partition of $T^{(\lvl)}$ induces a partition on $(T^{\lvl})^{\otimes n}$. We recall some basic terminology and notation with respect to the leveled partition of $(T^{\lvl})^{\otimes n}$.

\paragraph{Level-1 partition of \boldmath$(\CW_q)^{\otimes N}$.}
In level-$1$, we view $(\CW_q)^{\otimes N}$ as the tensor $(T^{(1)})^{\otimes N}$ and consider the partition induced by the level-$1$ partition on $T^{(1)}$. Each level-$1$ $X$-variable block $X_{\hat{I}}$ is indexed by a sequence $\hat{I} = (\hat{I}_1,\dots, \hat{I}_N)$ of length $N$ in $\{0,1,2\}^N$. The variable block $X_{\hat I}$ is defined as 
\[X_{\hat{I}} \defeq X_{\hat{I}_1}^{(1)} \otimes \dots \otimes X_{\hat{I}_N}^{(1)},\]
where $X_{\hat I_t}^{(1)}$ for $t\in [N]$ is the level-$1$ partition of $T^{(1)}$. We call $X_{\hat{I}}$ a \emph{level-$1$ variable block} and $\hat{I}$ its \emph{level-$1$ index sequence}. The level-$1$ $Y$- and $Z$-variable blocks $Y_{\hat{J}}$ and $Z_{\hat{K}}$ are defined similarly for level-$1$ index sequences $\hat{J}, \hat{K}\in \{0,1,2\}^N$. Then notice that $X_{\hat{I}}, Y_{\hat{J}}, Z_{\hat{K}}$ form a nonzero subtensor of $(T^{(1)})^{\otimes N}$ if $\hat{I}_t + \hat{J}_t + \hat{K}_t = 2$ for all $t\in [N]$. Thus we write $(T^{(1)})^{\otimes N}$ as a sum of subtensors
\[(T^{(1)})^{\otimes N} = \sum_{\substack{\hat{I}, \hat{J}, \hat{K}\in \{0,1,2\}^{N}\\ \hat{I}_t + \hat{J}_t + \hat{K}_t = 2\,\ \forall t\in [N]}} (T^{(1)})^{\otimes N} \big\vert_{X_{\hat{I}}, Y_{\hat{J}}, Z_{\hat{K}}}.\]
For convenience, we use $X_{\hat{I}}Y_{\hat{J}}Z_{\hat{K}}$ to denote the subtensor $(T^{(1)})^{\otimes N}\big\vert_{X_{\hat{I}}, Y_{\hat{J}}, Z_{\hat{K}}}$ and we call $X_{\hat{I}}Y_{\hat{J}}Z_{\hat{K}}$ a \emph{level-$1$ triple}.

\paragraph{Level-\boldmath$\lvl$ partition of $(\CW_q)^{\otimes N}$.}
In level-$\lvl$, we view $(\CW_q)^{\otimes N}$ as the tensor $(T^{(\lvl)})^{\otimes n}$ where $n = N / 2^{\lvl - 1}$ and consider the partition induced by the level-$\lvl$ partition on $T^{(\lvl)}$. Each level-$1$ $X$-variable block $X_{I}$ is indexed by a sequence $I\in \{0,1,\dots, 2^{\lvl}\}^n$ of length $n$. The variable block $X_I$ is defined as 
\[X_I \defeq  X_{I_1}^{(\lvl)}\otimes \dots \otimes X_{I_n}^{(\lvl)}\]
where $X_{i}^{(\lvl)}$ ($0 \le i \le 2^{\lvl}$) is the $i$-th part in the level-$\lvl$ partition of $T^{(\lvl)}$. We call $X_{I}$ a \emph{level-$\lvl$ variable block} and $I$ its \emph{level-$\lvl$ index sequence}. The level-$\lvl$ $Y$- and $Z$-variable blocks $Y_{J}$ and $Z_{K}$ are defined similarly for level-$\lvl$ index sequences $J,K\in \{0,1,\dots, 2^{\lvl}\}^n$. Similarly, the level-$\lvl$ variable blocks $X_I, Y_J, Z_K$ form a nonzero subtensor of $(T^{(\lvl)})^{\otimes n}$ when $I_t + J_t + K_t = 2^\lvl$ for all $t\in [n]$. So we can write
\[(T^{(\lvl)})^{\otimes n} = \sum_{\substack{I, J, K \in \{0,1,2^\lvl\}^{n} \\ I_t + J_t + K_t = 2^\lvl\,\ \forall t\in [N]}}(T^{(\lvl)})^{\otimes n}\big\vert_{X_I, Y_J, Z_K}.\]
As in the level-$1$ partition, for convenience, we use the notation $X_I Y_J Z_K$ to denote the subtensor $(T^{(\lvl)})^{\otimes n}\vert_{X_I, Y_J, Z_K}$ and we call such $X_I Y_J Z_K$ a \emph{level-$\lvl$ triple.}

In addition, note that since the level-$\lvl$ partition of $T^{(\lvl)}$ is a coarsening of the partition induced by the level-$1$ partition of $T^{(1)}$, a level-$1$ variable block $X_{\hat{I}}$ is contained in a level-$\lvl$ variable block $X_I$ if the sequence $I' = (I'_1,\dots, I'_n)$ formed by taking $I'_t = \sum_{i = 1}^{2^{\lvl-1}} \hat{I}_{(t-1)\cdot 2^{\lvl-1} + i}$ satisfies $I'_t = I_t$ for all $t\in [n]$. Namely, if taking the sum of consecutive length-$2^{\lvl-1}$ subsequences in $\hat{I}$ yields the sequence $I$, then $X_{\hat{I}}$ is contained in $X_I$. In this case, we use the notation $\hat{I}\in I$ and $X_{\hat{I}}\in X_I$.

\subsection{Distributions and Entropy}
We only consider distributions with a finite support. Let $\alpha$ be a distribution supported on a set $S$, i.e., we have $\alpha(s)\ge 0$ for all $s\in S$ and $\sum_{s\in S}\alpha(s) = 1$. The \emph{entropy} of $\alpha$, denoted as $H(\alpha)$, is defined as\[H(\alpha) \defeq - \sum_{\substack{s\in S\\ \alpha(s) > 0}}\alpha(s)\log \alpha(s), \]
where the $\log$ has base $2$. 
We use the following well-known combinatorial fact.
\begin{lemma}
  Let $\alpha$ be a distribution over the set $[s] = \{1,\dots, s\}$. Let $N > 0$ be a positive integer, then we have
  \[\binom{N}{\alpha(1)N,\dots, \alpha(s)N}= 2^{N(H(\alpha) \pm o(1))}.\]
\end{lemma}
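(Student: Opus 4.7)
The plan is to reduce the statement to the standard Stirling approximation $\log(n!) = n \log n - n \log e + O(\log n)$ (valid for any positive integer $n$, with $\log 0! = 0$). Since the lemma only asks for a bound up to a multiplicative $2^{o(N)}$ factor, absolute constants and $O(\log N)$ terms are all absorbed into the $o(1)$ error in the exponent.

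First I would take $\log$ of the multinomial coefficient and expand:
\[
\log \binom{N}{\alpha(1)N, \ldots, \alpha(s)N} \;=\; \log(N!) \;-\; \sum_{i=1}^{s} \log\bigl((\alpha(i)N)!\bigr).
\]
Substituting Stirling, the $N \log e$ term from $\log(N!)$ exactly cancels against $\sum_i \alpha(i)N \log e = N \log e$ coming from the denominators, using $\sum_i \alpha(i) = 1$. What remains is
\[
N \log N \;-\; \sum_{i=1}^{s} \alpha(i) N \log(\alpha(i) N) \;+\; O(\log N).
\]
Expanding $\log(\alpha(i)N) = \log \alpha(i) + \log N$ and again using $\sum_i \alpha(i) = 1$ kills the $N \log N$ terms, leaving $-\sum_i \alpha(i) N \log \alpha(i) = N \cdot H(\alpha)$ (indices with $\alpha(i) = 0$ contribute $0$ by convention $0 \log 0 = 0$, matching the definition of $H$). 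Thus the logarithm of the multinomial equals $N \cdot H(\alpha) + O(\log N)$, which is exactly $N(H(\alpha) \pm o(1))$ since $\log N / N \to 0$.

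The only mild obstacle is that $\alpha(i) N$ need not be an integer, so strictly speaking the multinomial coefficient in the statement should be interpreted with integer arguments (e.g.\ round each $\alpha(i)N$ to a nearby integer so that the rounded values still sum to $N$). Rounding shifts each $\alpha(i)N$ by at most $1$, hence each factorial $\log((\alpha(i)N)!)$ changes by at most $O(\log N)$, and the sum over $s = O(1)$ coordinates still contributes only $O(\log N)$ to the log of the multinomial coefficient. This is absorbed into the $o(N)$ slack, so the same equality $2^{N(H(\alpha) \pm o(1))}$ holds. No further ingredients are needed beyond Stirling's formula and the identity $\sum_i \alpha(i) = 1$.
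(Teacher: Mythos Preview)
Your proof via Stirling's approximation is correct and is the standard argument for this well-known fact. The paper does not actually give a proof of this lemma; it simply states it as a ``well-known combinatorial fact'' and moves on, so there is nothing to compare against beyond noting that your approach is exactly the expected one.
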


 For two distributions $\alpha$ and $\beta$ over the sets $S$ and $S'$ respectively, we define the joint distribution $\alpha \times \beta$ as the distribution over $S \times S' = \{(s,s') \mid s \in S, \, s' \in S'\}$ such that
 \[(\alpha\times \beta) (s,s') = \alpha(s)\cdot \beta(s').\]
  When $S$ and $S'$ are sets of integer sequences, we will instead define $\alpha \times \beta$ as a distribution over all integer sequences that can be obtained by concatenating one sequence in $S$ and one sequence in $S'$, such that 
$$(\alpha \times \beta)(s \circ s') = \alpha(s) \cdot \beta(s'), $$
where $s \circ s'$ denotes the concatenation of $s$ and $s'$. 

\subsection{Complete Split Distributions}

Motivated by the leveled partition of tensor powers of $\CW_q$, the complete split distribution is defined to characterize the level-$1$ variable blocks contained in level-$\lvl$ variable blocks. (Complete split distributions were first defined and used by \cite{VXXZ24}.)

\begin{definition}[Complete Split Distribution]
    A \emph{complete split distribution} for a level-$\lvl$ constituent tensor $T_{i,j,k}$ with $i+j+k=2^{\lvl}$ is a distribution on all length $2^{\lvl-1}$ sequences $(\hat{i}_1, \hat{i}_2, \ldots, \hat{i}_{2^{\lvl-1}}) \in \{0, 1, 2\}^{2^{\lvl-1}}$. 
\end{definition}

For a level-$1$ index sequence $\hat I \in \{0, 1, 2\}^{2^{\lvl-1} \cdot n}$, we say that it is \emph{consistent} with a complete split distribution $\splres$ if the proportion of any  index sequence $(\hat{i}_1, \hat{i}_2, \ldots, \hat{i}_{2^{\lvl-1}})$ in
\[ \left\{ \bigbk{\hat{I}_{(t-1)\cdot 2^{\lvl - 1}+p}}_{p=1}^{2^{\lvl-1}} \;\middle|\; t \in [n] \right\} \]
equals $\splres(\hat{i}_1, \hat{i}_2, \ldots, \hat{i}_{2^{\lvl-1}})$. Namely, for every $(\hat{i}_1,\dots, \hat{i}_{2^{\lvl-1}})\in \{0,1,2\}^{2^{\lvl-1}}$, we have
\[\labs{\lcr{t \in [n] \mmid \bigbk{\hat{I}_{(t-1)\cdot 2^{\lvl - 1}+p}}_{p=1}^{2^{\lvl - 1}} = (\hat{i}_1,\dots, \hat{i}_{2^{\lvl-1}})}}  = \splres(\hat{i}_1, \hat{i}_2, \ldots, \hat{i}_{2^{\lvl-1}}) \cdot n.\]

Notice that any level-1 index sequence $\hat{I}\in \{0,1,2\}^{2^{\lvl - 1}\cdot n}$ defines a complete split distribution by computing the proportion of each type of length-$2^{\lvl-1}$ consecutive chunks present in $\hat{I}$. More specifically, we have the following definition.

\begin{definition}\label{def:split-hatI}
    Given a level-1 index sequence $\hat{I}\in \{0,1,2\}^{2^{\lvl - 1}\cdot n}$, its complete split distribution over $(\hat{i}_1,\dots, \hat{i}_{2^{\lvl-1}})\in \{0,1,2\}^{2^{\lvl - 1}}$ is defined as
    \[\split\bigbk{\hat{I}}\bigbk{\hat{i}_1,\dots, \hat{i}_{2^{\lvl-1}}} = \frac{1}{n}\cdot \labs{\lcr{t \in [n] \mmid \bigbk{\hat{I}_{(t-1)\cdot 2^{\lvl - 1}+p}}_{p=1}^{2^{\lvl - 1}} = \bigbk{\hat{i}_1,\dots, \hat{i}_{2^{\lvl-1}}}}}.\]
    Given a subset $S\subseteq [n]$, we can define the complete split distribution over $(\hat{i}_1,\dots, \hat{i}_{2^{\lvl-1}})\in \{0,1,2\}^{2^{\lvl - 1}}$ given by $\hat{I}$ restricted to the subset $S$ as 
    \[\split\bigbk{\hat{I}, S}\bigbk{\hat{i}_1,\dots, \hat{i}_{2^{\lvl-1}}} = \frac{1}{|S|}\cdot \labs{\lcr{t \in S \mmid \bigbk{\hat{I}_{(t-1)\cdot 2^{\lvl - 1}+p}}_{p=1}^{2^{\lvl - 1}} = \bigbk{\hat{i}_1,\dots, \hat{i}_{2^{\lvl-1}}}}}.\]
\end{definition}

Given two complete split distributions $\splres_1$ and $\splres_2$ over the length-$2^{\lvl-1}$ index sequences $\{0,1,2\}^{2^{\lvl - 1}}$, the $L_\infty$ distance between $\splres_1$ and $\splres_2$ is defined to be 
\[\norm{\splres_1 - \splres_2}_\infty = \max_{\sigma\in \{0,1,2\}^{2^{\lvl-1}}}|\splres_1(\sigma) - \splres_2(\sigma)|.\]
For any constant $\eps > 0$ and a fixed complete split distribution $\splres$, we say that a level-1 index sequence $\hat{I}\in \{0,1,2\}^{2^{\lvl - 1}\cdot n}$ is consistent with $\splres$ up to $\eps$ error if  $\midnorm{\split(\hat{I}) - \splres}_\infty \le \eps$. When the $\eps$ is clear from context, we say that $\hat{I}$ is \emph{approximately consistent} with $\splres$ if it is consistent with $\splres$ up to $\eps$ error.

\begin{definition}
    For a level-$\lvl$ constituent tensor $T_{i,j,k}$, an integer exponent $N$, a constant $\eps \ge 0$, and three complete split distributions $\splresX, \splresY, \splresZ$ for the $X$-, $Y$-, $Z$-variables respectively, we define
    $$T_{i,j,k}^{\otimes N}[\splresX, \splresY, \splresZ,\eps] \defeq \sum_{\substack{\text{level-}1 \text{ triple } X_{\hat I}Y_{\hat J}Z_{\hat K} \textup{ in } T_{i,j,k}^{\otimes N} \\ \hat I \text{ approximately consistent with } \splresX \\ \hat J \text{ approximately consistent with } \splresY \\ \hat K \text{ approximately consistent with } \splresZ}} X_{\hat I} Y_{\hat J} Z_{\hat K}. $$
    It is a subtensor of $T_{i,j,k}^{\otimes N}$ over all level-1 $X$-, $Y$-, $Z$-variable blocks that are approximately consistent with $\splresX$, $\splresY$, $\splresZ$, respectively. When $\eps = 0$, we will simplify the notation to $T_{i,j,k}^{\otimes N}[\splresX, \splresY, \splresZ]$.
\end{definition}

\subsection{Salem-Spencer Sets}
In the hashing step of the laser method, we make use of the existence of a large subset of $\Z_M$ that avoids $3$-term arithmetic progressions, as given by Salem and Spencer \cite{salemspencer} and improved by Behrend \cite{behrend1946sets}.

\begin{theorem}[\cite{behrend1946sets,elsholtz2024improving}]
\label{thm:salemspencer}
 For every positive integer $M > 0$, there exists a subset $B\subseteq \Z_M$ of size 
 \[|B|\ge M\cdot e^{-O(\sqrt{\log M})} = M^{1-o(1)}\]
 that contains no nontrivial $3$-term arithmetic progressions. Specifically, any $a,b,c\in B$ satisfy $a+b \equiv 2c \pmod M$ if and only if $a = b = c$.
\end{theorem}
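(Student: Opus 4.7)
The plan is to use the classical Behrend construction: we embed a high-dimensional integer "sphere" into $\Z_M$ via a base-$(2d)$ representation, using the strict convexity of Euclidean spheres to rule out three-term progressions. The whole argument fits into five short steps; parameters will be optimized at the end. For $M$ smaller than some absolute constant the conclusion is vacuous, so we assume $M$ is large.

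First, I fix integer parameters $n$ and $d$ with $2(2d)^n \le M$, and consider the cube $C = \{0, 1, \ldots, d-1\}^n \subseteq \Z^n$ of size $d^n$. Partitioning $C$ by squared Euclidean norm $r(v) = v_1^2 + \cdots + v_n^2 \in \{0, 1, \ldots, n(d-1)^2\}$, by pigeonhole some "sphere" $S = \{v \in C : r(v) = r^*\}$ satisfies
\[
  |S| \;\ge\; \frac{d^n}{n(d-1)^2 + 1} \;\ge\; \frac{d^{n-2}}{2n}.
\]
Next I verify that $S$ is free of nontrivial three-term arithmetic progressions in $\Z^n$. If $u, v, w \in S$ satisfy $u + v = 2w$, then
\[
  \|u - v\|^2 \;=\; \|u\|^2 + \|v\|^2 - 2\langle u, v\rangle \;=\; 2r^* - \bigl(\|u+v\|^2 - 2r^*\bigr)/1 \cdot \tfrac{1}{2} \cdot 2,
\]
which after simplification using $\|u+v\|^2 = 4\|w\|^2 = 4r^*$ gives $\|u-v\|^2 = 0$, hence $u = v = w$.

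The third step is to transfer $S$ into $\Z_M$ via the base-$(2d)$ embedding $\phi : C \to \Z_M$ defined by $\phi(v) = \sum_{i=1}^n v_i (2d)^{i-1} \bmod M$. Because each coordinate lies in $\{0, \ldots, d-1\}$, every digit of $\phi(u) + \phi(v)$ in base $2d$ is at most $2d-2$, so no carries occur; in particular $\phi$ is injective on $C$ and $\phi(u) + \phi(v) \le 2(2d)^n \le M$ as integers. Consequently, whenever $\phi(u) + \phi(v) \equiv 2\phi(w) \pmod{M}$ with $u, v, w \in S$, the congruence lifts to an equality in $\Z$, which digit-uniqueness forces to be $u + v = 2w$ in $\Z^n$, and then the previous step forces $u = v = w$. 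Thus $B := \phi(S) \subseteq \Z_M$ has no nontrivial three-term APs, with $|B| = |S| \ge d^{n-2}/(2n)$.

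Finally I optimize. From $2(2d)^n \le M$ I may take $d = \lfloor M^{1/n}/2 \rfloor$, giving (for large $M$)
\[
  \log_2 |B| \;\ge\; (n-2)\log_2 d - \log_2(2n) \;\ge\; \log_2 M - n - \frac{2}{n}\log_2 M - O(\log n).
\]
Choosing $n = \lceil \sqrt{2\log_2 M}\rceil$ minimizes the leading error terms and yields $\log_2 |B| \ge \log_2 M - O(\sqrt{\log M})$, i.e.\ $|B| \ge M \cdot e^{-O(\sqrt{\log M})} = M^{1-o(1)}$. The only real bookkeeping issues are the integrality of $n$ and $d$ and the small-$M$ case (handled by taking $B = \{0\}$); the conceptual core is the sphere step, and I expect no serious obstacle beyond keeping the digits in the range $\{0,\ldots,d-1\}$ so that the lift from $\Z_M$ to $\Z$ is valid.
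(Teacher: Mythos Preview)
The paper does not prove this theorem at all; it is stated as a quoted result of Salem--Spencer and Behrend and used as a black box in the hashing step. So there is nothing in the paper to compare your argument against.

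Your write-up is the standard Behrend construction and is essentially correct. Two small cosmetic points: the displayed computation in your sphere step is garbled (the line with ``$\cdot\tfrac12\cdot 2$'' does not parse), though the one-line fix is simply $\|u+v\|^2 = 2\|u\|^2 + 2\|v\|^2 - \|u-v\|^2 = 4r^* - \|u-v\|^2 = 4\|w\|^2 = 4r^*$, giving $\|u-v\|^2=0$. Also, your stated constraint $2(2d)^n\le M$ is not literally met by your choice $d=\lfloor M^{1/n}/2\rfloor$, which only gives $(2d)^n\le M$; but since each digit is at most $d-1$ you actually have $\phi(u)+\phi(v)<(2d)^n\le M$ and $2\phi(w)<(2d)^n\le M$, so the lift from $\Z_M$ to $\Z$ goes through with the weaker constraint. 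Neither issue affects the validity of the argument.
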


\section{Algorithm Outline}
\label{sec:outline}

Our high-level framework is similar to previous works \cite{duan2023, VXXZ24}, and we borrow definitions such as interface tensors, compatibility, and usefulness. 
We begin by defining the notion of interface tensors, which captures the structure of the intermediate tensors we obtain from different stages of our algorithm and pass to the next stages.

\begin{definition}[Interface Tensor]\label{def:interface-tensor}
  For a positive integer $\lvl\ge 1$ and any constant $0\le \eps \le 1$, a level-$\lvl$ $\eps$-\emph{interface tensor} $\T^*$ with parameter list 
  \[\{(n_t, i_t, j_t, k_t, \splresXt, \splresYt, \splresZt)\}_{t \in [s]}\]
  is defined as
  \[
    \T^* \defeq \bigotimes_{t = 1}^{s} T_{i_t, j_t, k_t}^{\otimes n_t}[\splresXt, \splresYt, \splresZt, \eps],
  \]
  where $i_t+j_t+k_t = 2^{\lvl}$ for every $t\in [s]$ (i.e., $T_{i_t, j_t, k_t}$ is a level-$\lvl$ constituent tensor) and  $\splresXt, \splresYt, \splresZt$ are level-$\lvl$ complete split distributions for $X$-, $Y$-, $Z$-variables respectively. We call each $T_{i_t, j_t, k_t}^{\otimes n_t}[\splresXt, \splresYt, \splresZt, \eps]$ a \emph{term} of $\T^*$. When $\eps = 0$, we will simply call $\T^*$ a level-$\lvl$ interface tensor. 
\end{definition}

Our algorithm takes $\bigbk{\CW_q^{\otimes 2^{\lvl^*-1}}}^{\otimes n}$ as input, and applies the global stage of the algorithm described in \cref{sec:global} to degenerate it into independent copies of a  level-$\lvl^*$ $\eps_{\lvl^*}$-interface tensor. For every $\lvl \in [\lvl^*]$, we will have a parameter $\eps_\lvl$ controlling the error margin of level-$\lvl$ interface tensors. Then for $\lvl = \lvl^*, \lvl^* \!-\! 1, \ldots, 2$, we apply the constituent stage of the algorithm described in \cref{sec:constituent} to degenerate a level-$\lvl$ $\eps_\lvl$-interface tensor into a tensor product between independent copies of level-$(\lvl-1)$ $\eps_{\lvl-1}$-interface tensors and a matrix multiplication tensor. Eventually, we obtain a tensor product between independent copies of level-$1$ $\eps_1$-interface tensors and a matrix multiplication tensor, which can then be degenerated into independent copies of matrix multiplication tensors. Thus, the overall algorithm degenerates $\bigbk{\CW_q^{\otimes 2^{\lvl^*-1}}}^{\otimes n}$ into independent copies of matrix multiplication tensors as desired.

\subsection{Algorithm Outline}\label{subsec:outline}

The framework for the constituent stage is similar to the framework for the global stage.
For simplicity, we only present the outline of our global stage algorithm for $\eps = 0$ as it captures the main ideas of our algorithm and illustrates the main differences between our algorithm and \cite{VXXZ24}'s. A high-level comparison between our algorithm and \cite{VXXZ24}'s is provided in \cref{fig:outline-compare}. 

\begin{figure}
    \centering
    \tikzset{>={Latex[width=1.5mm,length=1mm]},
 base/.style = {rectangle, rounded corners, draw=black, minimum width=11cm, minimum height=1cm, text centered},
 regularblock/.style = {base},
 small/.style = {rectangle, rounded corners, draw=black, minimum width=3.25cm, minimum height=1.5cm, text centered},
 big/.style = {rectangle, rounded corners, draw=black, minimum width=11cm, minimum height=2cm, text centered}
}

\begin{tikzpicture}[scale=0.7,every node/.style={scale=0.7}]

\node (input) at (0,0) [regularblock] {\textbf{Input}: $(\CW_q^{\otimes 2^{\ell-1}})^{\otimes n}$};

\node (alpha) at (0,-2) [regularblock] {Zero out level-$\ell$ blocks according to $\alpha$};

\node (hash) at (0,-4) [regularblock] {Asymmetric hashing};

\node[align=left] (Zcomp1) at (-3.875, -7) [small] {$Z$-Compatibility\\ Zero out I};

\node[align=left] (Zcomp2) at (0, -7) [small] {$Z$-Compatibility\\ Zero out II};

\node[align=left] (Zcomp3) at (3.875, -7) [small] {$Z$-Usefulness\\ Zero out};

\node (fix) at (0,-9.75) [regularblock] {Fix holes};

\node (out) at (0,-11.75) [regularblock] {\textbf{Output}: Independent copies of interface tensors};

\node[scale=1.3] at (0,-13) {(a) Global stage algorithm outline in \cite{VXXZ24}};

\draw[->] (input) -- (alpha);
\draw[->] (alpha) -- (hash);
\draw[->] (Zcomp1) -- (Zcomp2);
\draw[->] (Zcomp2) -- (Zcomp3);
\draw[->] (fix) -- (out);

\draw[->] (5.5,-4) -- (5.75, -4) -- (5.75, -5.5) -- (-5.75, -5.5) -- (-5.75, -7) -- (-5.5, -7);
\draw[->] (5.5,-7) -- (5.75, -7) -- (5.75, -8.5) -- (-5.75, -8.5) -- (-5.75,-9.75) -- (-5.5, -9.75);

\node[blue] at (0,-5.25) {Every level-$\ell$ $X$-block and $Y$-block is in a unique level-$\ell$ triple};
\node[blue] at (0,-8.25) {Every level-$1$ $Z$-block is in a unique level-$\ell$ triple};

\coordinate (shift1) at (12.5,0);
\begin{scope}[shift=(shift1)]

\node (input) at (0,0) [regularblock] {\textbf{Input}: $(\CW_q^{\otimes 2^{\ell-1}})^{\otimes n}$};

\node (alpha) at (0,-1.5) [regularblock] {Zero out level-$\ell$ blocks according to $\alpha$};

\node (hash) at (0,-3) [regularblock,fill=green!30] {More asymmetric hashing};

\node[align=left,fill=green!30] (Ycomp1) at (-3.875, -5.25) [small] {$Y$-Compatibility\\ Zero out I};

\node[align=left,fill=green!30] (Ycomp2) at (0, -5.25) [small] {$Y$-Compatibility\\ Zero out II};

\node[align=left,fill=green!30] (Ycomp3) at (3.875, -5.25) [small] {$Y$-Usefulness\\ Zero out};

\node[align=left] (Zcomp1) at (-3.875, -7.75) [small] {$Z$-Compatibility\\ Zero out I};

\node[align=left] (Zcomp2) at (0, -7.75) [small] {$Z$-Compatibility\\ Zero out II};

\node[align=left] (Zcomp3) at (3.875, -7.75) [small] {$Z$-Usefulness\\ Zero out};

\node (fix) at (0,-10.25) [regularblock] {Fix holes};

\node (out) at (0,-11.75) [regularblock] {\textbf{Output}: Independent copies of interface tensors};

\draw[->] (input) -- (alpha);
\draw[->] (alpha) -- (hash);
\draw[->] (Ycomp1) -- (Ycomp2);
\draw[->] (Ycomp2) -- (Ycomp3);
\draw[->] (Zcomp1) -- (Zcomp2);
\draw[->] (Zcomp2) -- (Zcomp3);
\draw[->] (fix) -- (out);

\node[blue] at (0,-4) {Every level-$\ell$ $X$-block is in a unique level-$\ell$ triple};

\draw[->] (5.5,-3) -- (5.75, -3) -- (5.75, -4.25) -- (-5.75, -4.25) -- (-5.75, -5.25) -- (-5.5, -5.25);

\node[blue] at (0,-6.5) {Every level-$1$ $Y$-block is in a unique level-$\ell$ triple};

\draw[->] (5.5,-5.25) -- (5.75, -5.25) -- (5.75, -6.75) -- (-5.75, -6.75) -- (-5.75, -7.75) -- (-5.5, -7.75);

\node[blue] at (0,-9) {Every level-$1$ $Z$-block is in a unique level-$\ell$ triple};

\draw[->] (5.5,-7.75) -- (5.75, -7.75) -- (5.75, -9.25) -- (-5.75, -9.25) -- (-5.75,-10.25) -- (-5.5, -10.25);

\node[scale=1.3] at (0,-13) {(b) Global stage algorithm outline in this work};
\end{scope}
    
\end{tikzpicture}
    \caption{Main differences between the global stage algorithm in \cite{VXXZ24} and this work as outlined in \cref{subsec:outline}. Other technical differences are omitted.}
    \label{fig:outline-compare}
\end{figure}

As mentioned earlier, the goal of the global stage is to take $\bigbk{\CW_q^{\otimes 2^{\lvl-1}}}^{\otimes n}$ as input (for convenience, we will use $\lvl$ to denote $\lvl^*$), and output a set of level-$1$-independent level-$\lvl$ interface tensors. Since our algorithm treats the $X, Y, Z$ dimensions asymmetrically and there are six permutations of the $X, Y, Z$ dimensions, we can essentially apply our algorithm in six different ways. As a result, we will split $\bigbk{\CW_q^{\otimes 2^{\lvl-1}}}^{\otimes n}$ into $\bigotimes_{r \in [6]} \bigbk{\CW_q^{\otimes 2^{\lvl-1}}}^{\otimes A_r \cdot n}$ for some $A_1, \ldots, A_6 \ge 0$  and $\sum_{r = 1}^6 A_r = 1$, where we call each $\bigbk{\CW_q^{\otimes 2^{\lvl-1}}}^{\otimes A_r \cdot n}$ a \emph{region}. We apply our algorithm for each permutation of the $X, Y, Z$ dimensions on one of the regions. We note that in \cite{VXXZ24}, two of the three dimensions are treated symmetrically, so they only need to split the input tensor into $3$ regions instead of $6$. For simplicity, we outline each step of our algorithm on a fixed permutation without splitting $\bigbk{\CW_q^{\otimes 2^{\lvl-1}}}^{\otimes n}$ into regions in the following.

\begin{enumerate}
\item \textbf{Zero-out according to \boldmath$\alpha$.} This is a standard first step present in all previous applications of the laser method. For some distribution $\alpha$ over level-$\lvl$ constituent tensors, let $\alphx, \alphy, \alphz$ be the marginal distributions of $\alpha$ on the $X$-, $Y$-, $Z$-dimensions respectively. We zero out $X$-, $Y$-, $Z$-variable blocks that are not consistent with $\alphx, \alphy, \alphz$ respectively. 

\item \textbf{Asymmetric hashing.} Similar to \cite{duan2023, VXXZ24}, we then apply asymmetric hashing. Unlike previous works, where after the asymmetric hashing each level-$\lvl$ $X$-variable block $X_I$ or $Y$-variable block $Y_J$ is contained in a unique level-$\lvl$ triple $X_I Y_J Z_K$, we only require each level-$\lvl$ $X$-block to be contained in a unique triple. Moreover, the number of triples containing each level-$\lvl$ $Y$-block is at most the number of triples containing each level-$\lvl$ $Z$-block.

\item \textbf{\boldmath$Y$-compatibility zero-outs.} In \cite{duan2023, VXXZ24}, the next step is to do some zeroing outs based on a notion of compatibility defined for level-$1$ $Z$-blocks, provided that every level-$1$ $X$- and $Y$-block is contained in a unique block triple. So in our algorithm, we first need to ensure that each level-$1$ $Y$-block is contained in a unique block triple by performing the following sub-steps. 

Note that our goal is to obtain an interface tensor with  $\left\{\splres_{\itX, i, j, k}, \splres_{\itY, i, j, k}, \splres_{\itZ, i, j, k}\right\}_{i+j+k = 2^\lvl}$ as its complete split distributions. Since each level-$\lvl$ $X$-block is contained in a unique block triple, we are already able to identify the subset of indices $t$ with $X_t = i$, $Y_t = j$, $Z_t = k$ for every $i, j, k$. This means that we can compute the complete split distributions for each level-$1$ $X$-block with respect to these indices, and zero out any level-$1$ $X$-block that is not consistent with $\left\{\splres_{\itX, i, j, k}\right\}_{i+j+k = 2^\lvl}$. 

\begin{enumerate}
    \item \textbf{\boldmath$Y$-compatibility zero-out I.} Then we define a notion of compatibility for a level-$1$ $Y$-block with respect to a level-$\lvl$ block triple containing it. We say a level-$1$ $Y$-block is compatible with a level-$\lvl$ block triple if they satisfy the compatibility conditions. We zero out all level-$1$ $Y$-blocks that are not compatible with any triple. 
    \item \textbf{\boldmath$Y$-compatibility zero-out II.} In this step, we zero out the level-$1$ $Y$-blocks that are compatible with more than one triple. After this step, we can ensure that each level-$1$ $Y$-block is contained in a unique level-$\lvl$ block triple and it is compatible with that triple. 
    \item \textbf{\boldmath$Y$-usefulness zero-out.} At this point, each level-$1$ $Y$-block is contained in a unique triple, so we can zero out a level-$1$ $Y$-block if it is not consistent with $\left\{\splres_{\itY, i, j, k}\right\}_{i+j+k = 2^\lvl}$, similar to what we did to the level-$1$ $X$-blocks. 
\end{enumerate}

\item \textbf{\boldmath$Z$-compatibility zero-outs.} Although after the more asymmetric hashing step, we only have $X$-blocks in a unique triple, notice that after $Y$-compatibility zero-outs, we achieve the property that now every level-$1$ $X$- and $Y$-block is compatible with a unique level-$\ell$ triple. This property is weaker than the property achieved in \cite{duan2023, VXXZ24} after their original asymmetric hashing steps, but this is sufficient for later steps. In particular, in our $Z$-compatibility zero-out, we can perform essentially the same procedure as in \cite{VXXZ24} to achieve the property that every level-$1$ $Z$-block is compatible with a unique level-$\lvl$ triple.

\item  \textbf{Fixing holes.} Now every level-$1$ block is contained in a unique level-$\ell$ triple, but some of the level-$1$ blocks we zeroed out in $Y$-compatibility zero-outs and $Z$-compatibility zero-outs become holes, i.e., the copies of the interface tensors we obtain miss some variables due to the compatibility zero-outs. We use the following result to fix these holes.

\begin{theorem}[{\cite[Corollary 3.2]{VXXZ24}}]\label{thm:fix-holes}
    Let $T$ be a level-$\ell$ interface tensor with parameter list
    \[\{(n_t, i_t, j_t, k_t, \splresXt, \splresYt, \splresZt)\}_{t \in [s]}.\]
    Let $N = 2^{\ell-1} \cdot \sum_{t\in [s]} n_t$. Suppose $T_1, \dots, T_r$ are broken copies of $T$ where $\le \frac{1}{8N}$ fraction of level-1 $X$, $Y$, and $Z$-blocks are holes. If $r \ge 2^{C_1 N / \log N}$ for some large enough constant $C_1>0$, the direct sum $\bigoplus_{i = 1}^r T_i$ can degenerate into an unbroken copy of $T$.
\end{theorem}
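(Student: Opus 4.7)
The plan is to adapt the standard hole-fixing machinery from the Coppersmith--Winograd line of work, which combines a random permutation argument (to show covers exist) with a Salem--Spencer based $\lambda$-substitution (to realize the combination as a formal degeneration). The setup: each broken copy $T_i$ is a subtensor of the full interface tensor $T$, missing at most a $1/(8N)$ fraction of its level-1 $X$-, $Y$-, $Z$-blocks in each dimension. Since the blocks in $T$ can be indexed by $N$-long coordinate sequences, the full ambient tensor has a natural action by cyclic shifts (or, more generally, by permutations of the $N$ level-1 coordinates that respect the partition into terms of the interface tensor). My proof will combine shifted broken copies so that every level-1 block of $T$ is present in at least one shifted copy.

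First I would show a covering lemma: there exists a set of shifts $\sigma_1, \ldots, \sigma_s$ with $s = O(\log N)$ such that for every level-1 $X$-, $Y$-, and $Z$-block $B$ of $T$, some $\sigma_i^{-1}(B)$ is \emph{not} a hole in $T_i$. This follows from a union bound: for each fixed block, the probability that a uniformly random shift places it on a hole of $T_i$ is at most $1/(8N)$, so the probability all $s$ shifts fail is at most $(1/(8N))^s$, and a union bound over the at most $\poly(N) \cdot 2^{O(N)}$ level-1 blocks succeeds once $s \ge C \cdot N / \log N$ for a suitable constant $C$.

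Second, I would realize the combined copy as a degeneration using Salem--Spencer (\cref{thm:salemspencer}). Choose $M$ large enough that $\abs{B} \ge s$ for a 3-AP-free set $B \subseteq \Z_M$; one can take $M = s \cdot e^{O(\sqrt{\log s})}$, which is still $2^{O(N/\log N)}$. Assign distinct elements $a_1, \ldots, a_s \in B$ to the $s$ shifted broken copies. In each copy $T_i$, substitute every remaining $X$-variable, $Y$-variable, $Z$-variable by $\lambda^{a_i}$ times itself (with an appropriate rescaling by $\lambda^{-3a_i}$ or by replacing the indices inside an exponent-encoding as in the standard CW hashing). In the direct sum, the trilinear monomials coming from matching variables across copies would require $a_{i_1} + a_{i_2} + a_{i_3} = 3a$ for some $a$; by the 3-AP-freeness of $B$, this forces $i_1 = i_2 = i_3$, so the only surviving contributions at the chosen power of $\lambda$ come from within a single copy. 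Collecting the surviving monomials across all $s$ copies yields every level-1 block of $T$ at least once, hence a (possibly redundant) realization of $T$, from which a further zero-out extracts one clean copy.

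The main obstacle is bookkeeping rather than a new idea: one must verify that the shift action preserves the leveled-partition structure of the interface tensor $T$ (so the shifted copy is again a valid subtensor of $T$ and not of some different tensor), and that the $\lambda$-substitution indeed separates cross terms without destroying intra-copy terms. This is where the restriction to \emph{interface tensors} with complete-split structure is essential: the terms $T_{i_t,j_t,k_t}^{\otimes n_t}[\splresXt,\splresYt,\splresZt]$ are permutation-invariant within each term, so cyclic shifts within each block of $n_t$ positions preserve the tensor and its complete split distributions. Finally, we just need $r \ge s$, and by choice $s = 2^{C_1 N / \log N}$ for $C_1$ absorbing the union-bound and Salem--Spencer overheads, which matches the hypothesis and completes the proof.
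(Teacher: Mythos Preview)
This theorem is quoted from \cite{VXXZ24} (their Corollary~3.2) and the present paper does not give its own proof, so there is no in-paper argument to compare against directly. Evaluating your sketch on its own merits:

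Your first step --- randomizing hole positions via coordinate permutations within each term of the interface tensor and covering by a union bound --- is in the right spirit and is indeed part of the approach in \cite{duan2023,VXXZ24}. (Minor slip: you first write $s = O(\log N)$ but then correctly derive $s \ge C\,N/\log N$ from the union bound over $2^{O(N)}$ blocks; only the latter is right.)

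The genuine gap is your second step. In a \emph{direct sum} $\bigoplus_i T_i$ there are no cross-copy monomials to kill: each monomial already lives in a single summand. After your substitution $x^{(i)}\mapsto\lambda^{a_i}x$ (and likewise for $y,z$) followed by identification of target variables, the image is
\[
\sum_{xyz\in T}\Bigl(\sum_{i:\,xyz\in T_i}\lambda^{3a_i}\Bigr)\,xyz.
\]
The inner polynomial has leading exponent $3\min\{a_i : xyz\in T_i\}$, which \emph{varies with the monomial} since different triples are missing from different subsets of copies. Extracting any fixed power of $\lambda$ therefore returns one broken $T_i$, not an unbroken $T$, and your proposed ``further zero-out'' cannot repair this because zero-outs set variables to zero --- they cannot rescale or equalize coefficients. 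More fundamentally, your covering lemma only guarantees that each \emph{block} is present in some shifted copy; producing a single copy of $T$ requires routing every level-1 \emph{triple} $(\hat I,\hat J,\hat K)$ through a copy in which all three of its blocks survive, and doing so \emph{consistently} across the many triples that share a given block. That consistency problem is exactly what the three-AP-free trick does not solve; the Salem--Spencer device belongs to the hashing step of the laser method, where its role is to prevent distinct buckets from interacting, not to merge broken copies. The actual merging argument in \cite{VXXZ24} (building on \cite{duan2023}) uses a different mechanism that your sketch does not supply.
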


\end{enumerate}

\section{Global Stage}\label{sec:global}

For our global stage algorithm, the input is the tensor $\CW_q^{\otimes N}$ where $N = n\cdot 2^{\lvl^*-1}$. The algorithm will output a set of independent copies of level-$\lvl^*$ interface tensors that are degenerated from the input tensor. For convenience, we use $\lvl$ to denote $\lvl^*$ in the rest of this section.

Given a distribution $\alpha$ over the set $\{(i, j, k) \in \mathbb{Z}_{\ge 0}^3 \mid i + j + k = 2^{\lvl}\}$ and level-$\lvl$ complete split distributions $\splres_{\itX, i, j, k}$, $\splres_{\itY, i, j, k}$, $\splres_{\itZ, i, j, k}$ we define the quantities listed in \cref{tab:notation}.

\begin{table}[ht]
    \centering
    {\def\arraystretch{1.5}
    \begin{tabular}{|| c  p{12cm} ||}
    \hline
       Notation  & Definition \\
       \hline\hline
        \multirow{1.7}*{$\alphx, \alphy,\alphz$} & The marginal distributions of $\alpha$ on the $X,Y,Z$-dimension respectively. E.g., for any $i$, $\alphx(i) = \sum_{j, k} \alpha(i, j, k)$; similarly for $\alphy(j), \alphz(k)$.\\

        \hline
        
        \multirow{1.7}*{$P_\alpha$} & The penalty term $P_\alpha \defeq \max_{\alpha' \in D} H(\alpha') - H(\alpha) \ge 0$ where $D$ is the set of distributions with marginals $\alphx, \alphy, \alphz$.\\

        \hline

        $\alpha(i,\+, \+)$ & $\alpha(i, \+, \+) \defeq \sum_{j > 0, k > 0} \alpha(i, j, k)$\\

        $\alpha(\+,j,\+)$ & $\alpha(\+,j, \+) \defeq \sum_{i > 0, k > 0} \alpha(i, j, k)$ \\

        $\alpha(\+,\+,k)$ & $\alpha(\+, \+,k) \defeq \sum_{i > 0, j > 0} \alpha(i, j, k)$ \\

        \hline

        $\splresavg_{\itX, i, \+, \+}$ &  $\splresavg_{\itX, i, \+, \+}\defeq \frac{1}{\alpha(i,\+, \+)} \sum_{j > 0, k > 0} \alpha(i, j, k) \cdot \splres_{\itX, i, j, k}$\\

        $\splresavg_{\itY, \+, j, \+}$ &  $\splresavg_{\itY, \+, j, \+}\defeq \frac{1}{\alpha(\+, j, \+)} \sum_{i > 0, k > 0} \alpha(i, j, k) \cdot \splres_{\itY, i, j, k}$\\

        $\splresavg_{\itZ, \+, \+, k}$ & $\splresavg_{\itZ, \+, \+, k}\defeq \frac{1}{\alpha(\+, \+, k)} \sum_{i > 0, j > 0} \alpha(i, j, k) \cdot \splres_{\itZ, i, j, k}$\\

        \hline

        $\splavg{\itX}$ & $\splavg{\itX} \defeq \sum_{i, j, k} \alpha(i, j, k) \cdot \splres_{\itX, i, j, k}$ \\

        $\splavg{\itY}$ & $\splavg{\itY} \defeq \sum_{i, j, k} \alpha(i, j, k) \cdot \splres_{\itY, i, j, k}$ \\

        $\splavg{\itZ}$ & $\splavg{\itZ} \defeq \sum_{i, j, k} \alpha(i, j, k) \cdot \splres_{\itX, i, j, k}$ \\

        \hline

        $\splresavg_{\itX, i,*,*}$ & $\splresavg_{\itX, i,*,*} = \frac{1}{\alpha(i,*,*)}\sum_{j+k = 2^\lvl - i} \alpha(i, j, k) \cdot \splres_{\itX,i,j,k}$\\

        $\splresavg_{\itY, *,j,*}$ & $\splresavg_{\itY, *,j,*} = \frac{1}{\alpha(*,j,*)}\sum_{i+k = 2^\lvl - j} \alpha(i, j, k) \cdot \splres_{\itY,i,j,k}$\\

        $\splresavg_{\itZ, *,*,k}$ & $\splresavg_{\itZ, *,*,k} = \frac{1}{\alpha(*,*,k)}\sum_{i+j = 2^\lvl - k} \alpha(i, j, k) \cdot \splres_{\itZ,i,j,k}$\\

        \hline

        $\eta_\itY$ & $\eta_\itY = \sum_{i,j,k \,:\, k =0}\alpha(i,j,k)\cdot H(\splres_{\itY, i,j,k}) + \sum_{j}\alpha(*,j,\+)\cdot H(\splresavg_{\itY,*,j,\+})$\\

        $\lambda_\itZ$ & $\lambda_\itZ = \sum_{i, j, k \,:\, i = 0 \textup{ or } j = 0} \alpha(i, j, k) \cdot H(\splres_{\itZ, i, j, k}) + \sum_k \alpha(\+, \+, k) \cdot H(\splresavg_{\itZ, \+, \+, k})$\\

        \hline
        
    \end{tabular}
    }
    \caption{Table of notations with respect to a distribution $\alpha$ over $\{(i, j, k) \in \mathbb{Z}_{\ge 0}^3 \mid i + j + k = 2^{\lvl}\}$ and level-$\lvl$ complete split distributions $\splres_{\itX, i, j, k}, \splres_{\itY, i, j, k}, \splres_{\itZ, i, j, k}$.}
    \label{tab:notation}
\end{table}

In particular, our {\bf notations} follow the general rules below:

\begin{enumerate}
    \item Given a function $f(i,j,k)$, we may replace any of the input by the symbol $*$ or $\+$. If an input coordinate is a \say{$*$}, then it means the sum over $f$ evaluated at all the inputs that are $\ge 0$ in this input coordinate; if an input coordinate is a \say{$\+$}, then it means the sum over  $f$ evaluated at all the inputs that are $> 0$ in this input coordinate. For example, $f(*, j,k) = \sum_{i\ge 0} f(i,j,k)$ and $f(*,\+,k) = \sum_{i\ge 0, j>0} f(i,j,k)$.

    We will also use $\bar{f}$ together with the symbols $*$ or $\+$ to denote a weighted average. Typically, we use this notation on the complete split distributions $\beta$, and the average is weighted by $\alpha$. For instance, $\splresavg_{\itX, *, \+, k} = \frac{1}{\alpha(*, \+, k)} \sum_{i\ge 0, j>0} \alpha(i, j, k) \cdot \splres_{\itX, i, j, k}$. 
    
    \item Given a family of sets $S_{i,j,k}$, we may replace any of the subscripts by the symbol $*$ or $\+$. If any coordinate of the subscript is a ``$*$'', then it means the union over $S_{i,j,k}$ with subscript $\ge 0$ on this coordinate; if any coordinate of the subscript is a ``$\+$'', then it means the union over $S_{i,j,k}$ with subscript $> 0$ on this coordinate. For example, $S_{*,j,k} = \bigcup_{i\ge 0} S_{i,j,k}$ and $S_{*,\+,k}= \bigcup_{i\ge 0, j>0} S_{i,j,k}$.

\end{enumerate}

In the following proposition, for every $r\in [6]$, we have the distributions $\alpha^{(r)}$, $\splres_{\itX, i, j, k}^{(r)}$, $\splres_{\itY, i, j, k}^{(r)}$, $\splres_{\itZ, i, j, k}^{(r)}$ correspondingly and we use the superscript $(r)$ on variables obtained from $\alpha^{(r)}$, $\splres_{\itX, i, j, k}^{(r)}$, $\splres_{\itY, i, j, k}^{(r)}$, $\splres_{\itZ, i, j, k}^{(r)}$. 

\begin{prop}
\label{prop:global-stage-no-eps}
    $\bigbk{\CW_q^{\otimes 2^{\lvl-1}}}^{\otimes n}$ can be degenerated into a direct sum of $2^{(\sum_{r=1}^6 A_r E_r) n - o(n)}$ copies of a level-$\lvl$ interface tensor with parameter list 
    \[ \left\{\bk{ n \cdot A_r \cdot \alpha^{(r)}(i, j, k), i, j, k, \splres_{\itX, i, j, k}^{(r)}, \splres^{(r)}_{\itY, i, j, k}, \splres^{(r)}_{\itZ, i, j, k} }\right\}_{r \in [6], \, i + j + k = 2^{\lvl}} \]
    where
    \begin{itemize}
        \item $0 \le A_r \le 1$, $\sum_{r=1}^6 A_r = 1$;
        \item $\alpha^{(r)}$ for every $r \in [6]$ is a  distribution over $\{(i, j, k) \in \mathbb{Z}_{\ge 0}^3 \mid i + j + k = 2^{\lvl}\}$;
        \item For every $W \in \{X, Y, Z\}$, $\splres_{\itW, i, j, k}^{(r)}$ for $r \in [6]$, $i + j + k = 2^{\lvl}$ is a level-$\lvl$ complete split distribution;
        \item For each $r \in [6]$, define $\pi_r: \{X, Y, Z\} \to \{X, Y, Z\}$ as the $r$-th permutation in the lexicographic order. Then 
        \[
        E_r \defeq \min\left\{H\Bigbk{\alpha_{\pi_r(X)}^{(r)}} - P_\alpha^{(r)}, H\Bigbk{\splavg{\pi_r(Y)}^{(r)}} - \eta_{\pi_r(Y)}^{(r)},  H\Bigbk{\splavg{\pi_r(Z)}^{(r)}} - \lambda_{\pi_r(Z)}^{(r)}\right\}.
        \]
    \end{itemize}
\end{prop}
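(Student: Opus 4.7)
The plan is to apply the global stage procedure of Section~\ref{subsec:outline} independently to each of six ``regions'' of the input tensor, one region for each permutation of the three dimensions. Concretely, I would first use the Kronecker decomposition
\[
\bigbk{\CW_q^{\otimes 2^{\lvl-1}}}^{\otimes n} \;=\; \bigotimes_{r=1}^{6} \bigbk{\CW_q^{\otimes 2^{\lvl-1}}}^{\otimes A_r n}
\]
to reduce to analyzing a single region at a time. On the $r$-th region, I relabel the $X,Y,Z$ roles according to $\pi_r$ and then run the same pipeline using the parameters $\alpha^{(r)}$ and $\splres_{W,i,j,k}^{(r)}$. Since the regions share no variables, taking the Kronecker product of the six outputs produces the required interface tensor with parameter list indexed by $(r,i,j,k)$; the count of copies is the product, hence its exponent is $\sum_r A_r E_r$.

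Inside one region (fix $r$ and drop the superscript), the five-step procedure proceeds as in Section~\ref{subsec:outline}. Step~1 zeroes out any level-$\lvl$ block whose marginal frequency does not match $\alphx,\alphy,\alphz$, leaving roughly $2^{H(\alpha) A_r n}$ surviving level-$\lvl$ triples consistent with $\alpha$, up to the usual Alman--Williams penalty $P_\alpha$ arising from the fact that $\alpha$ is not uniquely determined by its marginals. Step~2 is the asymmetric hashing: using a Salem--Spencer set as in Theorem~\ref{thm:salemspencer} and a modulus chosen so that every level-$\lvl$ $\pi_r(X)$-block lies in a unique surviving triple, one obtains in expectation $2^{(H(\alpha_{\pi_r(X)}) - P_\alpha) A_r n - o(n)}$ triples. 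This explains the first term inside the $\min$ of $E_r$.

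Step~3 ($Y$-compatibility zero-outs, now carried out on the $\pi_r(Y)$-dimension) is where the new ideas are needed. Because each level-$\lvl$ $\pi_r(X)$-block sits in exactly one triple, for every $(i,j,k)$ we can identify the set of coordinates $t$ with $(I_t,J_t,K_t)=(i,j,k)$, and impose the split $\splres_{X,i,j,k}$ on each level-$1$ $\pi_r(X)$-block restricted to that coordinate set. Then define compatibility of a level-$1$ $\pi_r(Y)$-block with a containing triple exactly as in \cite{VXXZ24}, do the first compatibility zero-out to kill blocks compatible with no triple, the second to kill blocks compatible with more than one, and finally the usefulness zero-out to impose $\splres_{Y,i,j,k}$ on the unique surviving triple. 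A careful entropy computation, analogous to the one for $Z$ in \cite{duan2023,VXXZ24} but with the roles shifted, shows that the number of level-$1$ $\pi_r(Y)$-blocks retained per triple is $2^{(H(\splavg{\pi_r(Y)}) - \eta_{\pi_r(Y)}) A_r n - o(n)}$, giving the second term of $E_r$. Step~4 is essentially identical to the $Z$-compatibility analysis of \cite{VXXZ24}, now executed with the (weaker but sufficient) property that every level-$1$ $\pi_r(X)$- and $\pi_r(Y)$-block is already in a unique triple, giving the third term $H(\splavg{\pi_r(Z)}) - \lambda_{\pi_r(Z)}$. Step~5 invokes Theorem~\ref{thm:fix-holes} to convert the broken copies into unbroken interface tensors, as long as the hole fraction on each dimension is $o(1/N)$; this is arranged by choosing the accumulated error parameters $\eps_\lvl$ small enough and is standard.

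The main obstacle is Step~3: verifying that the analogue of the compatibility argument used in prior work for the $Z$-dimension still succeeds for the $Y$-dimension using only the $X$-uniqueness as scaffolding, with the correct entropy loss $\eta_{\pi_r(Y)}$. In particular one must check that the number of level-$\lvl$ triples that are compatible with a typical level-$1$ $\pi_r(Y)$-block concentrates appropriately, so that the second compatibility zero-out does not cost more than an $o(n)$ factor in the exponent. Once Step~3 is in place, Step~4 follows along the lines of \cite{VXXZ24}, since the asymmetry we have introduced leaves strictly more structure on the $\pi_r(X)$- and $\pi_r(Y)$-sides than the earlier analysis assumed. Combining all six regions via Kronecker product, taking the minimum over the three dimensions within each region and summing $A_r E_r$ over $r \in [6]$ yields the claimed exponent $\sum_{r=1}^{6} A_r E_r$, up to an additive $o(n)$.
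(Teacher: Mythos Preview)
Your high-level structure (split into six regions, run the five-step pipeline with the roles permuted, take the Kronecker product) matches the paper. But there is a genuine conceptual gap in how you account for the second and third terms of $E_r$.

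You write that ``the number of level-$1$ $\pi_r(Y)$-blocks retained per triple is $2^{(H(\splavg{\pi_r(Y)}) - \eta_{\pi_r(Y)}) A_r n - o(n)}$, giving the second term of $E_r$,'' and similarly for $Z$; and you say the hole fraction in Step~5 ``is arranged by choosing the accumulated error parameters $\eps_\lvl$ small enough.'' Neither of these is the actual mechanism. In the paper, \emph{all three} terms in the $\min$ defining $E_r$ come from a \emph{single} choice of hash modulus $M_0$ in Step~2, set as
\[
M_0 \;=\; \max\left\{\,8\cdot \frac{\numtriple}{\numxblock}, \;\; 80N\cdot\frac{\numalpha \cdot \pcompY}{\numyblock}, \;\; 80N\cdot\frac{\numalpha \cdot \pcompZ}{\numzblock}\,\right\},
\]
and the number of surviving copies is $\numalpha \cdot M_0^{-1-o(1)}$. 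The first constraint on $M_0$ is the one you identified (it makes each level-$\lvl$ $\pi_r(X)$-block lie in a unique triple). The second and third constraints are \emph{not} about how many blocks survive per triple: they are what forces the expected fraction of level-$1$ $\pi_r(Y)$-blocks (resp.\ $\pi_r(Z)$-blocks) that are compatible with \emph{more than one} triple --- i.e., the holes created in compatibility zero-out~II --- to be at most $1/(80N)$, so that Markov's inequality plus Theorem~\ref{thm:fix-holes} applies. The quantities $\eta_{\pi_r(Y)}$ and $\lambda_{\pi_r(Z)}$ enter precisely through the computation of $\pcompY$ and $\pcompZ$ (Claims~5.12--5.14 in the paper), not through a per-triple block count, and the $\eps_\lvl$ parameters play no role in bounding these holes.

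So the missing idea is: compute $\pcompY$ (the probability that a random triple consistent with $\alpha$ is compatible with a fixed typical level-$1$ $\pi_r(Y)$-block), show it equals $2^{(\eta_{\pi_r(Y)} - H(\splavg{\pi_r(Y)}) + H(\alpha_{\pi_r(Y)}))A_r n \pm o(n)}$, and then fold this into the modulus constraint so that $\numalpha/M_0$ becomes $2^{(H(\splavg{\pi_r(Y)}) - \eta_{\pi_r(Y)})A_r n - o(n)}$ when that constraint is tight. Without tying the $Y$- and $Z$-compatibility analyses back to the hash modulus, your argument does not explain why the $\min$ of three quantities governs the number of independent copies.
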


The following remark illustrates some simple relationships between the complete split distributions for the $X$, $Y$, $Z$-dimensions that we can assume without loss of generality, which will be useful in our algorithms.

\begin{remark}[\cite{VXXZ24}]
\label{rmk:assumptions_on_complete_split_dist}
Without loss of generality, we can assume that, for every $r, i, j, k$, and every $L \in \{0, 1, 2\}^{2^{\lvl-1}}$, 
\[
\splres^{(r)}_{\itX, i, 0, k}(L) = \splres^{(r)}_{\itZ, i, 0, k}(\vec{2}-L), \quad   
\splres^{(r)}_{\itZ, 0, j, k}(L) = \splres^{(r)}_{\itY, 0, j, k}(\vec{2}-L), \quad
\splres^{(r)}_{\itY, i, j, 0}(L) = \splres^{(r)}_{\itX, i, j, 0}(\vec{2}-L), 
\]
where $\vec{2}$ denotes the length-$(2^{\lvl-1})$ vector whose coordinates are all $2$, 
and 
\[
\splres^{(r)}_{\itX, i, j, k}(L) = 0 \text{ if } \sum_{t} L_t \ne i, \quad   
\splres^{(r)}_{\itY, i, j, k}(L) = 0 \text{ if } \sum_{t} L_t \ne j, \quad   
\splres^{(r)}_{\itZ, i, j, k}(L) = 0 \text{ if } \sum_{t} L_t \ne k,   
\]
because otherwise, the level-$\lvl$ interface tensor will be the zero tensor and the lemma will follow trivially. 
\end{remark}

The following is a corollary of \cref{prop:global-stage-no-eps}. We omit its proof as it is similar to the proof of \cite[Theorem 5.3]{VXXZ24}. 

\begin{theorem}
\label{thm:global-stage-with-eps}
For any $\eps > 0$, $2^{o(n)}$ independent copies of $(\CW_q^{\otimes 2^{\lvl-1}})^{\otimes n}$ can be degenerated into 
    $$2^{(\sum_{r=1}^6 E_r A_r - o_{1/\eps}(1)) n - o(n)}$$
    independent copies of a level-$\lvl$ $\eps$-interface tensor with parameter list 
    \[ \left\{ \bk{ n \cdot A_r \cdot \alpha^{(r)}(i, j, k), i, j, k, \splres_{\itX, i, j, k}^{(r)}, \splres^{(r)}_{\itY, i, j, k}, \splres^{(r)}_{\itZ, i, j, k} } \right\}_{r \in [6], i + j + k = 2^{\lvl}} \]
    where the constraints are the same as those in \cref{prop:global-stage-no-eps}.\footnote{$\oeps(1)$ denotes a function $f(\eps)$ where $f(\eps) \to 0$ as $\eps \to 0$. We also use $\oeps(n)$ to denote $\oeps(1) \cdot n$.}
\end{theorem}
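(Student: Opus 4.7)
My plan is to reduce \cref{thm:global-stage-with-eps} to \cref{prop:global-stage-no-eps} by a rational-approximation / continuity argument of the kind used in the corresponding step of \cite[Theorem~5.3]{VXXZ24}. The core observation is that \cref{prop:global-stage-no-eps} applies only when the parameters $A_r$, $\alpha^{(r)}$, and $\splres^{(r)}_{W,i,j,k}$ are rational enough that the constraint ``$\hat I$ is exactly consistent with $\splres^{(r)}_{W,i,j,k}$'' can be realized by some length-$(2^{\lvl-1} n)$ index sequence; the extra $\eps$ slack in \cref{thm:global-stage-with-eps} removes this rationality constraint at the cost of an $\oeps(1)$ entropy loss.

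First, given $\eps > 0$, I would pick a large integer $D = D(\eps)$ and round every real-valued parameter $A_r$, $\alpha^{(r)}$, and $\splres^{(r)}_{W,i,j,k}$ to a rational approximation $\tilde A_r$, $\tilde\alpha^{(r)}$, $\tilde\splres^{(r)}_{W,i,j,k}$ whose entries are multiples of $1/D$ and which differs from the original in $L_\infty$ by at most $\eps/C$, for a sufficiently large absolute constant $C$. By the continuity of Shannon entropy on the probability simplex, each of the quantities in \cref{tab:notation} (including $P_\alpha$, $H(\splavg{W})$, $\eta_{\pi_r(Y)}^{(r)}$, and $\lambda_{\pi_r(Z)}^{(r)}$, and hence each $E_r$) is perturbed by at most $\oeps(1)$; writing $\tilde E_r$ for the rounded analogue, this gives $\tilde E_r = E_r - \oeps(1)$. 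Set $n' = D \floor{n/D}$, so that $n - n' = O_\eps(1)$ and each multiplicity $n' \tilde A_r \tilde\alpha^{(r)}(i,j,k)$ lies in $\Z_{\ge 0}$.

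Next, I would restrict $(\CW_q^{\otimes 2^{\lvl-1}})^{\otimes n}$ to $(\CW_q^{\otimes 2^{\lvl-1}})^{\otimes n'}$ by zeroing out all but one monomial in the last $n - n'$ factors, and then invoke \cref{prop:global-stage-no-eps} with parameters $\tilde A_r, \tilde\alpha^{(r)}, \tilde\splres^{(r)}$. This yields $2^{(\sum_r \tilde A_r \tilde E_r) n' - o(n')} = 2^{(\sum_r A_r E_r - \oeps(1))n - o(n)}$ independent copies of an exact level-$\lvl$ interface tensor $\tilde\T^{*}$. It remains to verify that $\tilde\T^{*}$ is a zero-out of the desired level-$\lvl$ $\eps$-interface tensor $\T^{*}$ with the original parameters. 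Here I would use the triangle inequality
\[ \norm{\split(\hat I) - \splres^{(r)}_{W,i,j,k}}_\infty \le \norm{\split(\hat I) - \tilde\splres^{(r)}_{W,i,j,k}}_\infty + \norm{\tilde\splres^{(r)}_{W,i,j,k} - \splres^{(r)}_{W,i,j,k}}_\infty, \]
observing that every level-$1$ index sequence $\hat I$ that exactly matches $\tilde\splres^{(r)}_{W,i,j,k}$ (as required by $\tilde\T^{*}$) is $\eps$-approximately consistent with $\splres^{(r)}_{W,i,j,k}$ (as required by $\T^{*}$) provided $C \ge 2$. The remaining discrepancy between the multiplicities $n' \tilde A_r \tilde\alpha^{(r)}(i,j,k)$ and $n A_r \alpha^{(r)}(i,j,k)$ is handled by tensoring the input with $2^{o(n)}$ extra copies — exactly the slack built into the theorem statement — and zeroing out superfluous level-$1$ blocks to correct the multiplicities, which affects the exponent by at most $o(n)$.

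The main technical care lies in the embedding step: one must verify that the exact interface tensor $\tilde\T^{*}$ sits as a zero-out inside the $\eps$-interface tensor $\T^{*}$ simultaneously across all six regions $r$, all $(i,j,k)$ with $i+j+k = 2^\lvl$, and all three dimensions $W \in \{X, Y, Z\}$. Each individual check reduces to the triangle inequality on $\split$-distributions noted above, and the constant $C$ has to be chosen so that both the split-distribution closeness and the multiplicity closeness hold uniformly. The continuity of the entropy-based quantities in \cref{tab:notation} with respect to $L_\infty$ perturbations is completely standard; once the embedding is established, the announced bound $2^{(\sum_r A_r E_r - \oeps(1))n - o(n)}$ follows immediately.
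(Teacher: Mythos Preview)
Your proposal is correct and is precisely the argument the paper has in mind: the paper omits the proof entirely, stating only that it ``is similar to the proof of \cite[Theorem~5.3]{VXXZ24}'', which is the rational-approximation / entropy-continuity reduction you describe.

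One small imprecision is worth flagging. The multiplicity discrepancy $|n'\tilde A_r\tilde\alpha^{(r)}(i,j,k) - nA_r\alpha^{(r)}(i,j,k)|$ is of order $\Theta(\eps n)$ for fixed $\eps$, not $o(n)$, so the phrase ``tensoring the input with $2^{o(n)}$ extra copies'' cannot repair it (those are direct-sum copies, not tensor-product factors, and in any case $2^{o(n)}$ of them do not supply $\Theta(\eps n)$ additional Kronecker factors). The standard fix is either to round each $A_r\alpha^{(r)}(i,j,k)$ \emph{downward}, so every rounded multiplicity is at most the target, and then pad the missing $O(\eps n)$ Kronecker factors with a single fixed level-$1$ triple---this perturbs each $\split(\hat I)$ by only $O(\eps)$ and is absorbed into the $\eps$-slack of the interface tensor---or simply to observe that the downstream use of the theorem (Algorithm~\ref{alg:framework}) reads the multiplicities only through continuous entropy expressions and then sends $\eps\to 0$, so an $O(\eps n)$ shift is already subsumed by the $\oeps(1)$ term. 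Either route closes the gap; your overall argument is sound.
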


In the remainder of this section, we prove \cref{prop:global-stage-no-eps}.

\subsection{Dividing into Regions}

As in previous works \cite{duan2023, VXXZ24}, we partition the $n$-th tensor power of $\CW_q^{\otimes 2^{\lvl-1}}$ into several ``regions''. For each different region, we use a different permutation for the roles of the $X$, $Y$, $Z$-dimensions. In \cite{duan2023, VXXZ24}, they partition $(\CW_q^{\otimes 2^{\lvl-1}})^{\otimes n}$ into three regions due to the fact that two out of the three dimensions are treated symmetrically in their algorithms. Since our method is more asymmetric, we need to partition $(\CW_q^{\otimes 2^{\lvl-1}})^{\otimes n}$ into six regions instead. More specifically, we consider 
\[\bigbk{\CW_q^{\otimes 2^{\lvl-1}}}^{\otimes n} \equiv \bigotimes_{r=1}^6 \bigbk{\CW_q^{\otimes 2^{\lvl-1}}}^{\otimes A_r\cdot n}\]
for $A_1, \ldots, A_6 \ge 0$ and $A_1 + \cdots + A_6 = 1$, and we denote the $r$-th region by $\T^{(r)}$. 

In the following, we will focus on the analysis for $\T^{(1)}$. The analysis for other regions is similar, but we permute the roles of the $X$, $Y$, $Z$-dimensions in different regions. For simplicity, we will drop the superscript $(1)$. 

\subsection{More Asymmetric Hashing}

This hashing step is standard in all previous literature on applications of the laser method, and we only deviate from previous works at the end of this step. Nevertheless, we repeat the description of the hashing procedure for completeness. 

Let $\alpha$ be a distribution on $\{(i, j, k) \in \mathbb{Z}_{\ge 0}^3\mid i + j + k = 2^{\lvl}\}$, and let $\alphx, \alphy, \alphz$ be the marginal distributions of $\alpha$ on the three dimensions respectively. 

First, in $\T$, we zero out the level-$\ell$ $X$-blocks that are not consistent with $\alphx$, the level-$\ell$ $Y$-blocks that are not consistent with $\alphy$, and the level-$\ell$ $Z$-block that are not consistent with $\alphz$. Let $\numxblock, \numyblock, \numzblock$ be the number of remaining $X$, $Y$, $Z$-blocks respectively, $\numtriple$ be the number of remaining block triples, and $\numalpha$ be the number of remaining block triples consistent with $\alpha$. These values can be approximated in the standard way: 

\begin{claim}
\EquationOnSameLine{\numxblock = 2^{H(\alphx) \cdot A_1 n \pm o(n)}, \quad \numyblock = 2^{H(\alphy) \cdot A_1 n \pm o(n)}, \quad \numzblock = 2^{H(\alphz) \cdot A_1 n \pm o(n)}, \phantom{\textbf{Claim X.X.}}}
\[
  \numalpha = 2^{H(\alpha) \cdot A_1 n \pm o(n)}, \quad \numtriple = 2^{(H(\alpha) + P_\alpha) \cdot A_1 n \pm o(n)}. 
\]
\end{claim}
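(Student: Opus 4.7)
The plan is to recognize this as a direct multinomial counting exercise, for which the key tool is the approximation lemma stated earlier in the preliminaries: $\binom{m}{p_1 m, \ldots, p_s m} = 2^{m H(p) \pm o(m)}$. Writing $m \defeq A_1 n$ for brevity, each of the five quantities will be identified as a multinomial coefficient (or a polynomial-size sum of such), and then the approximation immediately gives the claimed exponent. I would treat the five bounds in order, since the last one requires a small extra observation.

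First I would handle $\numxblock$, $\numyblock$, $\numzblock$. A level-$\lvl$ $X$-block of $\T$ is indexed by a sequence $I \in \{0,1,\dots,2^\lvl\}^m$, and ``consistent with $\alphx$'' means $|\{t : I_t = i\}| = \alphx(i) \cdot m$ for every $i$ (rounding issues contribute only $o(n)$ to the exponent and I would just note this). Hence the number of remaining $X$-blocks is exactly the multinomial coefficient $\binom{m}{\{\alphx(i)\cdot m\}_i}$, which by the lemma equals $2^{H(\alphx)\cdot A_1 n \pm o(n)}$. The same argument, with $Y$ and $Z$ in place of $X$, gives the bounds on $\numyblock$ and $\numzblock$.

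Next I would handle $\numalpha$. A remaining block triple $(X_I, Y_J, Z_K)$ is consistent with $\alpha$ iff the empirical joint distribution of the sequence $((I_t, J_t, K_t))_{t \in [m]}$ equals $\alpha$ — in particular it is automatically supported on $\{(i,j,k): i+j+k = 2^\lvl\}$ and has marginals $\alphx, \alphy, \alphz$. The number of such sequences, and hence of such triples, is the multinomial coefficient $\binom{m}{\{\alpha(i,j,k)\cdot m\}_{i,j,k}}$, which the lemma evaluates as $2^{H(\alpha)\cdot A_1 n \pm o(n)}$.

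Finally I would handle $\numtriple$. Any remaining triple is a nonzero subtensor with all three blocks consistent with the marginals, so the empirical joint distribution $\alpha'$ of $((I_t,J_t,K_t))_t$ lies in the set $D$ of distributions supported on $\{i+j+k=2^\lvl\}$ with marginals $\alphx,\alphy,\alphz$. Conversely every such $\alpha'$ (on the discrete grid $\tfrac{1}{m}\Z$) arises from some triples, so
\[
\numtriple = \sum_{\alpha' \in D} \binom{m}{\{\alpha'(i,j,k)\cdot m\}} = \sum_{\alpha' \in D} 2^{H(\alpha')\cdot A_1 n \pm o(n)}.
\]
Since $D$ restricted to the grid has only $\poly(m)$ many elements (a constant number of coordinates, each a rational with denominator $m$), the sum is dominated by its largest term, giving $\numtriple = 2^{\max_{\alpha' \in D} H(\alpha') \cdot A_1 n \pm o(n)}$, which by the definition $P_\alpha = \max_{\alpha' \in D} H(\alpha') - H(\alpha)$ equals $2^{(H(\alpha)+P_\alpha)\cdot A_1 n \pm o(n)}$. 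There is no real obstacle here; the only thing that requires any care is noting that the polynomial number of joint distributions in $D$ contributes only a $\log \poly(n) = o(n)$ slack in the exponent, so the maximum-entropy $\alpha'$ dictates the asymptotics.
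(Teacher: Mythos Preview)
Your proposal is correct and is precisely the standard multinomial-counting argument the paper has in mind; in fact the paper does not spell out a proof at all, merely prefacing the claim with ``These values can be approximated in the standard way.'' Your treatment of $\numtriple$ --- summing over the $\poly(n)$ many joint types $\alpha'\in D$ and observing that the maximum-entropy type dominates, then invoking the definition of $P_\alpha$ --- is exactly the intended reasoning.
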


Let
\begin{equation}
\label{eq:global:initial-M0-bound}
    M_0 \ge 8 \cdot \frac{\numtriple}{\numyblock}
\end{equation}
be an integer yet to be fixed, and let $M \in [M_0, 2M_0]$ be a prime number (its existence is guaranteed by Bertrand's postulate). For random $b_0, \{w_t\}_{t=0}^n \in \Z_M$, we define three hash functions $\hashx, \hashy, \hashz : \{0, \ldots, 2^\lvl\}^n \rightarrow \Z_M$ as:
\begin{align*}
    \hashx(I) &= b_0 + \left(\sum_{t=1}^n w_t \cdot I_t \right) \bmod M,\\
    \hashy(J) &= b_0 + \left(w_0 + \sum_{t=1}^n w_t \cdot J_t \right) \bmod M,\\
    \hashz(K) &= b_0 + \frac{1}{2}\left(w_0+\sum_{t=1}^n w_t \cdot (2^\lvl - K_t) \right) \bmod M.
\end{align*}
Let $B$ be the Salem-Spencer subset of $\Z_M$ of size $M^{1-o(1)}$ from \cref{thm:salemspencer} that contains no three-term arithmetic progressions. Then we zero out all the level-$\ell$ $X$-blocks $X_I$ where $\hashx(I) \not \in B$, all the level-$\ell$ $Y$-blocks $Y_I$ where $\hashy(J) \not \in B$, and all the level-$\ell$ $Z$-blocks $X_K$ where $\hashz(K) \not \in B$. 

Note that by definition of the hash functions, we have  $\hashx(I) + \hashy(J) = 2 \hashz(K)$ for any block triple $X_I Y_J Z_K$ (as $\hashx(I), \hashy(J), \hashz(K) \in Z_M$, this equation holds modulo $M$ if we view them as integers in $\Z$). Therefore, since we only keep blocks whose hash values belong to the Salem-Spencer set $B$, all remaining block triples $X_I Y_J Z_K$ must have $\hashx(I) = \hashy(J) = \hashz(K) = b$ for some $b \in B$. Now for every $b \in B$, if there are two block triples $X_I Y_J Z_K$ and $X_I Y_{J'} Z_{K'}$ that are both hashed to $b$ and share the same $X$-block $X_I$, we zero out $X_I$. 

We call the tensor after this zeroing out $\THash$ and note that every level-$\ell$ $X$-block is in a unique level-$\ell$ block triple in $\THash$. We highlight that this is the start of the main deviation from previous works. In previous works \cite{duan2023, VXXZ24}, after this step all level-$\ell$ $X$-blocks and $Y$-blocks are in unique block triples; in earlier works \cite{laser,virgi12,stothers,LeGall32power,AlmanW21}, all level-$\ell$ blocks are in unique block triples.

We summarize the properties of this hashing procedure and $\THash$ in the following lemma.

\begin{lemma}[Properties of more asymmetric hashing]\label{lem:more-asym-hash}

The above described procedure and its output $\THash$ satisfy the following:

\begin{enumerate}
    \item \textup{(Implicit in \cite{cw90}, see also \cite{duan2023})} For any level-$\lvl$ block triple $X_I Y_J Z_K\in \T$ and every bucket $b\in \Z_M$, we have 
    \[\Pr\Bk{\tall \hashx(I) = \hashy(J) = \hashz(K) = b} = \frac{1}{M^2}.\]
    Furthermore, for any $b \in \Z_M$ we have that any two different block triples $X_I Y_J Z_K,  X_I Y_{J'} Z_{K'} \in \T$ that share the same $X$-block satisfy
    \[\Pr\Bk{\tall \hashx(I) = \hashy(J') = \hashz(K') = b \;\middle\vert\; \hashx(I) = \hashy(J) = \hashz(K) = b} = \frac{1}{M}.\]
    The same holds for different blocks that share the same $Y$-block or $Z$-block.

    \item
    \label{item:lem:more-asym-hash:item2}
    \textup{(Similar to {\cite[Claim 5.6]{VXXZ24}})} For every $b \in B$ and every level-$\lvl$ block triple $X_I Y_J Z_K \in \T$ consistent with $\alpha$, we have
    \[\Pr\Bk{X_IY_JZ_K\in \THash \;\middle\vert\; \hashx(I) = \hashy(J) = \hashz(K) = b}\ge \frac{3}{4}.\]

    \item \textup{({\cite[Claim 5.7]{VXXZ24}})} \[\E[\textup{number of level-$\lvl$ triples in $\THash$}] \ge \numalpha \cdot M_0^{-1-o(1)}.\]
\end{enumerate}
    
\end{lemma}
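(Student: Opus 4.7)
The three statements separate cleanly: statement 1 is a direct calculation with the hash functions, statement 2 is a union bound over potential collisions, and statement 3 is an expectation computation combining the first two. I would handle them in this order.

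For statement 1, I would begin by observing the identity $h_X(I) + h_Y(J) \equiv 2 h_Z(K) \pmod{M}$ that holds for every block triple, coming from $I_t + J_t + K_t = 2^\lvl$ and the fact that $M$ is an odd prime (so $1/2$ is well-defined in $\Z_M$). Consequently, conditioning on $h_X(I) = h_Y(J) = b$ forces $h_Z(K) = b$ automatically, so $\Pr[h_X(I) = h_Y(J) = h_Z(K) = b] = \Pr[h_X(I) = b] \cdot \Pr[h_Y(J) = b \mid h_X(I) = b]$. The first factor is $1/M$ because $b_0$ is uniform on $\Z_M$ and enters $h_X(I)$ additively; the second is $1/M$ because under the first conditioning the event becomes a single nontrivial linear constraint on the uniformly random $w_0$, which is independent of $b_0$ and the $w_t$'s. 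For the second part of statement 1, the additional event $h_Y(J') = b$ given $h_X(I) = h_Y(J) = b$ translates to $\sum_{t=1}^n w_t (J'_t - J_t) \equiv 0 \pmod{M}$, which is nontrivial since $J' \ne J$ and involves only $w_1, \dots, w_n$, while the conditioning constrains only $b_0$ and $w_0$; hence its probability is $1/M$. The cases of a shared $Y$- or $Z$-block are symmetric.

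For statement 2, under the conditioning $h_X(I) = h_Y(J) = h_Z(K) = b \in B$, the triple $X_IY_JZ_K$ survives in $\THash$ if and only if $X_I$ is not zeroed out for a collision, i.e., no other triple $X_I Y_{J'} Z_{K'} \in \T$ satisfies $h_Y(J') = b$ (the corresponding $h_Z(K') = b$ follows from the sum identity). By a union bound using statement 1, the probability of some such collision is at most the number of other triples in $\T$ containing $X_I$, divided by $M$. A standard coordinate-permutation symmetry argument shows that for every $I$ consistent with $\alphx$, this count equals $\numtriple / \numxblock$ up to $2^{o(n)}$ factors, and the assumption on $M_0$ then makes the collision probability at most $1/4$. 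Statement 3 follows by linearity of expectation over the $\numalpha$ triples consistent with $\alpha$: each one hashes into $B$ in all three coordinates with probability $|B|/M^2 = M^{-1-o(1)}$ by statement 1 and the Salem--Spencer bound on $|B|$, and then survives with probability at least $3/4$ by statement 2; since $M \le 2M_0$, summing yields $\E[|\THash|] \ge \numalpha \cdot M_0^{-1-o(1)}$.

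The main obstacle is the counting step inside the union bound for statement 2: I need to show that the number of triples in $\T$ completing a given $X_I$ (with $I$ consistent with $\alphx$) to a valid block triple is essentially $\numtriple / \numxblock$. This follows because the count depends only on the ``type'' of $I$, so by permuting coordinates it is the same for every $I$ consistent with $\alphx$, and dividing the total $\numtriple$ by the number $\numxblock$ of such $I$'s yields the claimed value up to $2^{o(n)}$ slack that is absorbed into the lower-order terms of the bound on $M_0$. Modulo this combinatorial bookkeeping, the proof is a routine adaptation of the asymmetric-hashing template from \cite{duan2023, VXXZ24}, except that we no longer zero out $Y$- or $Z$-blocks on collision, which simplifies the second half of the argument compared to prior work.
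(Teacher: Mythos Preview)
Your proposal is correct and follows the standard argument that the paper defers to \cite{cw90,duan2023,VXXZ24} rather than reproving; the paper gives no proof of its own for this lemma, so there is nothing to compare against beyond the cited template, which you have reproduced faithfully. One minor sharpening: the count of triples in $\T$ containing a fixed $X_I$ with $I$ consistent with $\alphx$ is \emph{exactly} $\numtriple/\numxblock$ by the coordinate-permutation symmetry you invoke (no $2^{o(n)}$ slack is needed), and note that the bound $M_0 \ge 8\,\numtriple/\numyblock$ displayed just before the lemma appears to be a typo for $\numxblock$, consistent with the final choice of $M_0$ in \cref{eq:hashmodulus} and with the constituent-stage version---your argument correctly uses $\numxblock$.
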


\subsection{\texorpdfstring{\boldmath$Y$}{Y}-Compatibility Zero-Out}
\label{subsec:global:Y-comp-zero-out}
Let 
\[ S^{(I, J, K)}_{i, j, k} \defeq \{t \in [n] \mid I_t = i, J_t = j, K_t = k\}\]
and 
\[S^{(J)}_{*, j, *} \defeq \{t \in [n] \mid J_t = j\}, \quad  S^{(K)}_{*, *, k} \defeq \{t \in [n] \mid K_t = k\}.\]
We will drop the superscripts if the context is clear.\footnote{If we follow our general rules for notations described at the beginning of \cref{sec:global} strictly, $S_{*,j,*}^{(J)}$ and $S_{*,*,k}^{(K)}$ would be denoted as $S_{*,j,*}^{(I, J, K)}$ and $S_{*,*,k}^{(I, J, K)}$, but notice that the extra superscripts can be dropped as they do not affect the values of $S_{*,j,*}^{(I, J, K)}$ and $S_{*,*,k}^{(I, J, K)}$.
}

Recall that in $\THash$, every level-$\ell$ $X$-block is in a unique block triple, so given a level-$\ell$ block $X_I$, we can uniquely determine the level-$\ell$ block triple $X_I Y_J Z_K$ containing it. For any $i, j, k$, and any level-1 block $X_{\hat I} \in X_I$, if $\split(\hat I, S_{i, j, k}) \ne \splres_{\itX, i, j, k}$, we zero out $X_{\hat I}$. 

The overall goal of this step is to zero out some level-1 $Y$-blocks, so that each remaining level-1 $Y$-block belongs to a unique level-$\ell$ block triple as well. 

\subsubsection{\texorpdfstring{\boldmath$Y$}{Y}-Compatibility Zero-Out I}

Given any level-1 $Y$-block $Y_{\hat J}$, if there exists a $j$ such that $\split(\hat J, S_{*, j, *}) \ne \splresavg_{\itY, *,j,*}$, then we zero out $Y_{\hat J}$. Eventually, the goal is to obtain independent copies of the interface tensor with complete split distributions $\left\{\splres_{\itX, i, j, k}, \splres_{\itY, i, j, k}, \splres_{\itZ, i, j, k}\right\}_{i+j+k = 2^\lvl}$. For any $Y_{\hat J} \in Y_J$, if $Y_{\hat J}$ belongs to such an interface tensor, then we must have $\split(\hat J, S_{*, j, *}) = \splresavg_{\itY, *,j,*}$ for every $j$. This is because if there exist $X_{\hat I} \in X_I$ and $ Z_{\hat K} \in Z_K$ such that $X_{\hat I} Y_{\hat J} Z_{\hat K} \in X_I Y_J Z_K$, where $X_{\hat I} Y_{\hat J} Z_{\hat K}$ is consistent with the complete split distributions and $X_I Y_J Z_K$ is consistent with $\alpha$, then one can verify that $\split(\hat J, S_{*, j, *}) = \splresavg_{\itY, *,j,*}$ for every $j$ regardless of what $X_{\hat I}$ and $Z_{\hat K}$ are. So it does not hurt to zero out these $Y_{\hat J}$ blocks. We call the tensor obtained after this zero-out $\TYComp$. 

Next, we define the notion of compatibility for level-1 $Y$-blocks, which is similar to the definition of compatibility in \cite{VXXZ24}.

\begin{definition}[$Y$-Compatibility]
  \label{def:global:Y-compatibility}
Given a level-$\ell$ triple $X_I Y_J Z_K$ and a level-$1$ $Y$-block $Y_{\hat J}\in Y_J$, we say that $Y_{\hat J}$ is compatible with $X_I Y_J Z_K$ if the following hold:
  \begin{enumerate}
  \item 
    \label{item:global:Y-compatibility1} For every $(i, j, k) \in \mathbb{Z}_{\ge 0}^3$ with $ i + j + k = 2^{\lvl}$ and $k = 0$, $\split(\hat J, S_{i,j,k}) = \splres_{\itY,i,j,k}$.
  \item
    \label{item:global:Y-compatibility2}
    For every index $j \in \{0, 1, \ldots, 2^\lvl\}$,  $\split(\hat J, S_{*,j,*}) = \splresavg_{\itY, *,j,*}$.
  \end{enumerate}
\end{definition}

\begin{claim}
  \label{cl:global:Y-compatible}
  In $\TYComp$, for every level-1 block triple $X_{\hat I}Y_{\hat J}Z_{\hat K}$ and the level-$\lvl$ block triple $X_I Y_J Z_K$ that contains it, $Y_{\hat J}$ is compatible with $X_I Y_J Z_K$.
\end{claim}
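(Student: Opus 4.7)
The plan is to verify the two requirements of $Y$-compatibility (\cref{def:global:Y-compatibility}) in turn, starting with the easy one. Condition \ref{item:global:Y-compatibility2} is automatic from construction: the step ``$Y$-Compatibility Zero-Out I'' explicitly eliminates every level-$1$ block $Y_{\hat J}$ for which there exists some $j$ with $\split(\hat J, S_{*,j,*}) \ne \splresavg_{Y, *, j, *}$, so any surviving level-$1$ $Y$-block in $\TYComp$ automatically satisfies $\split(\hat J, S_{*,j,*}) = \splresavg_{Y, *, j, *}$ for all $j$.

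The substantive part is condition \ref{item:global:Y-compatibility1}, which only concerns positions $t \in S_{i,j,0}$, i.e., positions for which $K_t = 0$. The key structural observation is that the level-$\lvl$ block $Z^{(\lvl)}_0$ collapses to a single level-$1$ index sequence, namely the all-zero length-$2^{\lvl-1}$ sequence (because the $\hat k$-entries are nonnegative integers summing to $0$). Therefore, for every $t \in S_{i,j,0}$ the length-$2^{\lvl-1}$ chunk $(\hat K_{(t-1)2^{\lvl-1}+p})_{p=1}^{2^{\lvl-1}}$ is forced to be $\vec 0$. Since $\CW_q$ has $\hat i + \hat j + \hat k = 2$ for every nonzero monomial, the corresponding chunk of $\hat J$ is then uniquely determined by the corresponding chunk of $\hat I$ via coordinate-wise complementation $\hat J_\bullet = 2 - \hat I_\bullet$. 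Consequently, for every $L \in \{0, 1, 2\}^{2^{\lvl-1}}$,
\[
\split(\hat J, S_{i,j,0})(L) \;=\; \split(\hat I, S_{i,j,0})(\vec 2 - L).
\]

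Now invoke the preparatory $X$-zero-out performed at the beginning of \cref{subsec:global:Y-comp-zero-out}: any level-$1$ block $X_{\hat I} \in X_I$ that fails $\split(\hat I, S_{i',j',k'}) = \splres_{X,i',j',k'}$ for some $(i',j',k')$ is zeroed out. Since the level-$1$ triple $X_{\hat I} Y_{\hat J} Z_{\hat K}$ survived to $\TYComp$, its $X_{\hat I}$ survived, and in particular $\split(\hat I, S_{i,j,0}) = \splres_{X, i, j, 0}$. Combining with the complementation identity above and the relation $\splres_{X,i,j,0}(\vec 2 - L) = \splres_{Y, i, j, 0}(L)$ from \cref{rmk:assumptions_on_complete_split_dist} (which is where the $k = 0$ assumption is essential), we obtain $\split(\hat J, S_{i,j,0}) = \splres_{Y, i, j, 0}$, establishing condition \ref{item:global:Y-compatibility1}.

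The main (minor) obstacle is bookkeeping: one must be careful to record which zero-outs have already been applied by the time we reach $\TYComp$, especially that the $X$-side purification $\split(\hat I, S_{i,j,k}) = \splres_{X,i,j,k}$ is indeed in force. Once that is confirmed, the argument is almost purely combinatorial, hinging on the collapse of $Z^{(\lvl)}_0$ to a single sequence and the consequent forcing of $\hat J$ from $\hat I$ on positions where $K_t = 0$.
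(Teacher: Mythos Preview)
Your proof is correct and follows essentially the same approach as the paper: both dispatch condition~\ref{item:global:Y-compatibility2} immediately from the zero-out step, and for condition~\ref{item:global:Y-compatibility1} both use that $K_t=0$ forces the $\hat K$-chunk to be $\vec 0$, hence $\hat J=\vec 2-\hat I$ on those positions, then combine the prior $X$-side purification $\split(\hat I,S_{i,j,0})=\splres_{X,i,j,0}$ with the relation $\splres_{X,i,j,0}(\vec 2-L)=\splres_{Y,i,j,0}(L)$ from \cref{rmk:assumptions_on_complete_split_dist}.
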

\begin{proof}
    It is easy to see that \cref{item:global:Y-compatibility2} is satisfied by our procedure since we will zero out all the level-1 $Y_{\hat{J}}$ block if there exists $j\in \{0, \ldots, 2^\ell\}$ such that $\split(\hat J, S_{*, j, *}) \ne \splresavg_{\itY, *,j,*}$. So it suffices to show \cref{item:global:Y-compatibility1}.

    Note that all the remaining level-1 $X$-blocks $X_{\hat{I}}$ satisfy $\split(\hat{I}, S_{i,j,k}) = \splres_{\itX,i,j,k}$ for all $(i,j,k)\in \Z_{\ge 0}^3$ with $i+j+k = 2^\ell$ due to the first zero-out in \cref{subsec:global:Y-comp-zero-out} on level-$1$ $X$-blocks. Now consider $(i,j,k)\in \Z_{\ge 0}^3$ with $k = 0$ and $i+j+k = 2^\ell$. In a remaining level-$1$ triple $X_{\hat I}Y_{\hat J}Z_{\hat K}$, we must have $\split(\hat{I}, S_{i,j,k}) = \splres_{\itX, i,j,k}$. Then since $k = 0$, we have $K_t = 0$ for all $t\in S_{i,j,k}$, i.e., $(\hat{K}_{(t-1)\cdot 2^{\ell-1}+1},\hat{K}_{(t-1)\cdot 2^{\ell-1}+2}, \dots, \hat{K}_{t\cdot 2^{\ell-1}}) = \vec{0}$.
    Since we have for each $\hat{t} \in \{(t \! - \! 1) \cdot 2^{\ell-1} \! + \! 1, \, \dots \, , \, t \cdot 2^{\ell-1}\}$, $\hat{I}_{\hat t} + \hat{J}_{\hat t} + \hat{K}_{\hat t} = 2$, we have for all $t\in S_{i,j,k}$,
    \[\lpr{\hat{J}_{(t-1)\cdot 2^{\ell-1}+1},\hat{J}_{(t-1)\cdot 2^{\ell-1}+2}, \dots, \hat{J}_{t\cdot 2^{\ell-1}}} = \vec{2} - \lpr{\hat{I}_{(t-1)\cdot 2^{\ell-1}+1},\hat{I}_{(t-1)\cdot 2^{\ell-1}+2}, \dots, \hat{I}_{t\cdot 2^{\ell-1}}}.\]
    So for every length-$2^{\ell-1}$ tuple $L\in \{0,1,2\}^{2^{\ell-1}}$, we must have 
    \[\split(\hat{J},S_{i,j,k})(L) = \split(\hat{I}, S_{i,j,k})(\vec{2}-L) = \splres_{\itX,i,j,k}(\vec{2} - L).\]
    By \cref{rmk:assumptions_on_complete_split_dist}, we have that $\split(\hat{J},S_{i,j,k})(L) = \splres_{\itX,i,j,k}(\vec{2} - L) = \splres_{\itY,i,j,k}(L)$ as desired.
\end{proof}

\subsubsection{\texorpdfstring{\boldmath$Y$}{Y}-Compatibility Zero-Out II: Unique Triple}

After the previous step, it is guaranteed that every remaining level-1 $Y$-block $Y_{\hat J}$ is compatible with all the level-$\lvl$ block triples containing it by \cref{cl:global:Y-compatible}. In order to ensure that each remaining $Y_{\hat J}$ belongs to a unique triple after further zero-outs, it suffices to guarantee that $Y_{\hat J}$ belongs to a unique triple that it is compatible with. Therefore, given any $Y_{\hat J}$ that is compatible with more than one triple, we zero it out. 

After this step, each remaining $Y_{\hat J}$ belongs to a unique triple, and additionally $Y_{\hat J}$ is compatible with this triple. 

\subsubsection{\texorpdfstring{\boldmath$Y$}{Y}-Usefulness Zero-Out}

At this point, each remaining level-1 block $Y_{\hat J}$ belongs to a unique triple $X_I Y_J Z_K$. So now we can check whether $\split(\hat J, S_{i, j, k}) = \splres_{\itY, i, j, k}$ for all $i, j, k$, and zero out $Y_{\hat J}$ if there exists some $i,j,k$ such that the equality does not hold. 

For convenience, we define the following notion of usefulness. 

\begin{definition}[$Y$-Usefulness]
    Given a level-$\lvl$ triple $X_I Y_J Z_K$ and a level-1 block $Y_{\hat J}\in Y_J$, we say $Y_{\hat J}$ is \emph{useful} for $X_I Y_J Z_K$ if $\split(\hat J, S_{i, j, k}) = \splres_{\itY, i, j, k}$ for every $i, j, k$. 
\end{definition}

Using the above definition, this step is equivalent to zeroing out every level-1 block $Y_{\hat J}$ that is not useful for the unique triple containing it. We call the remaining tensor after this step $\TYUseful$. 

\subsection{\texorpdfstring{\boldmath$Z$}{Z}-Compatibility Zero-Out}

After $Y$-compatibility zero-outs, every level-$1$ $Y$-block $Y_{\hat{J}}\in Y_J$ must be in a unique level-$\lvl$ triple $X_IY_JZ_K$ and the same holds for every level-$1$ $X$-block $X_{\hat I}$ as well. The goal of this step is to zero out some level-$1$ $Z$-blocks $Z_{\hat{K}}$ so that every level-$1$ $Z$-block is also in a unique level-$\ell$ triple. We note that this step is similar to the compatibility zero-out and usefulness zero-out steps in~\cite{VXXZ24} and \cref{def:global:Z-compatibility,def:global:Z-useful} are identical to the definition of compatibility and usefulness respectively in \cite{VXXZ24}.

\subsubsection{\texorpdfstring{\boldmath$Z$}{Z}-Compatibility Zero-Out I}

Given any level-1 block $Z_{\hat K} \in Z_K$, we zero out $Z_{\hat K}$ if there exists a $k$ such that $\split(\hat K, S_{*,*,k}) \ne \splresavg_{\itZ, *,*,k}$. We call the obtained tensor $\TZComp$.

Next, we are ready to define compatibility for level-$1$ $Z_{\hat K}$ blocks.

\begin{definition}[$Z$-Compatibility]
  \label{def:global:Z-compatibility}
  Given a level-$\ell$ triple $X_I Y_J Z_K$ and a level-$1$ $Z$-block $Z_{\hat K}\in Z_K$, we say that $Z_{\hat K}$ is \emph{compatible} with $X_I Y_J Z_K$ if the following hold:
  \begin{enumerate}
  \item 
    \label{item:global:Z-compatibility1} For every $(i, j, k) \in \mathbb{Z}_{\ge 0}^3$ with $i + j + k = 2^{\lvl}$, $i = 0$ or $j = 0$, there is $\split(\hat K, S_{i,j,k}) = \splres_{\itZ,i,j,k}$.
  \item
    \label{item:global:Z-compatibility2}
    For every index $k \in \{0, 1, \ldots, 2^\lvl\}$, $\split(\hat K, S_{*,*,k}) = \splresavg_{\itZ, *,*,k}$.
  \end{enumerate}
\end{definition}

\begin{claim}[{\cite[Claim 5.9]{VXXZ24}}]
\label{cl:global:Z-compatible}
  In $\TZComp$, for every level-1 block triple $X_{\hat I}Y_{\hat J}Z_{\hat K}$ and the level-$\lvl$ block triple $X_I Y_J Z_K$ that contains it, $Z_{\hat K}$ is compatible with $X_I Y_J Z_K$.
\end{claim}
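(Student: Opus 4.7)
The proof should mirror, almost line for line, the proof of \cref{cl:global:Y-compatible}, with the extra input that by the time we reach the $Z$-compatibility stage we have \emph{two} constrained coordinates ($\hat I$ and $\hat J$) rather than just one. My plan is to verify the two clauses of \cref{def:global:Z-compatibility} separately.

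For clause~\ref{item:global:Z-compatibility2}, the equality $\split(\hat K, S_{*,*,k}) = \splresavg_{Z,*,*,k}$ for every $k$ is enforced directly by the $Z$-Compatibility Zero-Out I step: any $Z_{\hat K}$ violating it for some $k$ has been zeroed out, so this clause holds for every surviving level-$1$ $Z$-block in $\TZComp$. This is the easy half and requires no structural argument.

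For clause~\ref{item:global:Z-compatibility1}, fix $(i,j,k)$ with $i+j+k=2^{\lvl}$ and $i=0$ or $j=0$. Consider a level-$1$ triple $X_{\hat I}Y_{\hat J}Z_{\hat K}$ inside some level-$\lvl$ triple $X_IY_JZ_K$. The key observations to assemble are: (i)~after the initial $X$-zero-out at the start of \cref{subsec:global:Y-comp-zero-out}, every surviving $\hat I$ satisfies $\split(\hat I, S_{i',j',k'}) = \splres_{X,i',j',k'}$ for all $i',j',k'$; (ii)~after $Y$-Usefulness Zero-Out, every surviving $\hat J$ satisfies $\split(\hat J, S_{i',j',k'}) = \splres_{Y,i',j',k'}$ for all $i',j',k'$; and (iii)~for each $\hat t$, $\hat I_{\hat t}+\hat J_{\hat t}+\hat K_{\hat t}=2$. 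Now if $i=0$, then $I_t=0$ for every $t\in S_{0,j,k}$, so the corresponding length-$2^{\lvl-1}$ chunk of $\hat I$ is $\vec 0$, forcing
\[
\bigl(\hat K_{(t-1)2^{\lvl-1}+1},\ldots,\hat K_{t\cdot 2^{\lvl-1}}\bigr) = \vec 2 - \bigl(\hat J_{(t-1)2^{\lvl-1}+1},\ldots,\hat J_{t\cdot 2^{\lvl-1}}\bigr).
\]
Hence for every $L\in\{0,1,2\}^{2^{\lvl-1}}$,
\[
\split(\hat K, S_{0,j,k})(L) \;=\; \split(\hat J, S_{0,j,k})(\vec 2 - L) \;=\; \splres_{Y,0,j,k}(\vec 2 - L) \;=\; \splres_{Z,0,j,k}(L),
\]
where the last equality invokes the symmetry relation $\splres_{Z,0,j,k}(L)=\splres_{Y,0,j,k}(\vec 2 - L)$ from \cref{rmk:assumptions_on_complete_split_dist}. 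The case $j=0$ is symmetric: the chunk of $\hat J$ restricted to $S_{i,0,k}$ is $\vec 0$, so $\hat K$ is determined by $\hat I$ on those indices, and combining $\split(\hat I,S_{i,0,k})=\splres_{X,i,0,k}$ with the relation $\splres_{X,i,0,k}(L)=\splres_{Z,i,0,k}(\vec 2 - L)$ from \cref{rmk:assumptions_on_complete_split_dist} gives $\split(\hat K, S_{i,0,k})=\splres_{Z,i,0,k}$.

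I do not expect any real obstacle here: the argument is essentially the dual of the $Y$-compatibility proof, and the only substantive content is bookkeeping which constraints on $\hat I$ and $\hat J$ are already in force when we enter the $Z$-stage. The one thing to be careful about is that clause~\ref{item:global:Z-compatibility1} invokes usefulness of $\hat J$ (not merely compatibility) in the $j=0$ branch being handled via $\hat I$, and in the $i=0$ branch we genuinely need the $Y$-usefulness guarantee $\split(\hat J, S_{0,j,k})=\splres_{Y,0,j,k}$; this is why the $Z$-compatibility zero-out is placed \emph{after} $Y$-Usefulness Zero-Out in the pipeline rather than before.
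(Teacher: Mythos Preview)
Your proof is correct and follows essentially the same approach as the argument from \cite{VXXZ24} that the paper cites without reproducing: verify clause~\ref{item:global:Z-compatibility2} directly from the $Z$-Compatibility Zero-Out~I step, and for clause~\ref{item:global:Z-compatibility1} exploit that on $S_{0,j,k}$ (resp.\ $S_{i,0,k}$) the $\hat I$-chunk (resp.\ $\hat J$-chunk) is $\vec 0$, so $\hat K$ is determined by $\hat J$ (resp.\ $\hat I$), and then invoke the already-enforced split constraints on $\hat J$ and $\hat I$ together with \cref{rmk:assumptions_on_complete_split_dist}. Your observation that the $i=0$ case genuinely requires $Y$-usefulness (not merely $Y$-compatibility), and hence that the pipeline ordering matters, is exactly right; note that in your closing paragraph the phrase ``in the $j=0$ branch being handled via $\hat I$'' has the roles swapped---it is the $i=0$ branch where $Y$-usefulness is needed.
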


\subsubsection{\texorpdfstring{\boldmath$Z$}{Z}-Compatibility Zero-Out II: Unique Triple}

Now in $\TZComp$, every level-$1$ $Z$-block $Z_{\hat K}$ is compatible with the level-$\lvl$ triple $X_I Y_J Z_K$ containing it, but there can be multiple level-$\lvl$ triples containing the level-$1$ block $Z_{\hat K}$. So in this step, we zero out $Z_{\hat K}$ if it is compatible with more than one level-$\lvl$ triple. Then the remaining tensor satisfies the property that every level-$1$ $Z_{\hat K}$ is contained in a unique level-$\lvl$ triple, and $Z_{\hat K}$ is compatible with that triple.

\subsubsection{\texorpdfstring{\boldmath$Z$}{Z}-Usefulness Zero-Out}

For convenience, we also define the notion of usefulness for $Z$-blocks. 

\begin{definition}[$Z$-Usefulness]\label{def:global:Z-useful}
Given a level-$\lvl$ triple $X_I Y_J Z_K$ and a level-1 block $Z_{\hat K} \in Z_K$, we say that $Z_{\hat K}$ is \emph{useful} for $X_I Y_J Z_K$ if $\split(\hat K, S_{i,j,k}) = \splres_{\itZ,i,j,k}$ for every $i, j, k$. 
\end{definition}

We zero out every level-1 block $Z_{\hat K}$ that is not useful for the unique triple containing it and call the remaining tensor $\TZUseful$.

\subsection{Fixing Holes}
\label{sec:global:fix-holes}

At this stage, the subtensor of the remaining tensor $\TZUseful$ over $X_I Y_J Z_K$ is a subtensor of the following tensor 
\[ \mathcal{T}^* = \bigotimes_{i+j+k = 2^\lvl} T_{i,j,k}^{\otimes A_1 \cdot \alpha(i, j, k) \cdot n} [\splres_{\itX, i, j, k}, \splres_{\itY, i, j, k}, \splres_{\itZ, i, j, k}], \]
i.e., it is the level-$\lvl$ interface tensor with parameter list 
\[ \left\{ \bk{ A_1 \cdot \alpha(i, j, k) \cdot n, i, j, k, \splres_{\itX, i, j, k}, \splres_{\itY, i, j, k}, \splres_{\itZ, i, j, k} } \right\}_{i + j + k = 2^{\lvl}}. \]
More precisely:

\begin{claim}
  \label{cl:global:after-unique-triple-zero-out}
  For any level-$\lvl$ block triple $X_I Y_J Z_K$ contained in $\THash$, the subtensor of $\TZUseful$ restricted to the level-$\ell$ blocks $X_I, Y_J, Z_K$ is a subtensor of $\T^*$, where the missing variables in this subtensor are exactly those level-1 blocks $Y_{\hat J}$ that are compatible with multiple level-$\lvl$ triples in $\TYComp$ and level-1 blocks $Z_{\hat K}$ that are compatible with multiple level-$\lvl$ triples in $\TZComp$.
\end{claim}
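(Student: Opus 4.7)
The approach is to track, dimension by dimension, which level-$1$ blocks of $X_I$, $Y_J$, $Z_K$ survive each global-stage zero-out, and to compare the surviving set with the set of level-$1$ blocks appearing in $\T^*$ restricted to $X_I Y_J Z_K$ (assuming $X_IY_JZ_K$ is consistent with $\alpha$, the only interesting case). By definition of the interface tensor, the latter consists of triples $X_{\hat I}Y_{\hat J}Z_{\hat K}$ with $\hat I \in I$, $\hat J \in J$, $\hat K \in K$ and $\split(\hat W, S_{i,j,k}) = \splres_{W,i,j,k}$ for $W \in \{X, Y, Z\}$ and all $i, j, k$.

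For $X$, the only zero-out that touches level-$1$ $X$-blocks is the initial step of \cref{subsec:global:Y-comp-zero-out}, which removes $X_{\hat I} \in X_I$ exactly when its split distribution deviates from $\{\splres_{X,i,j,k}\}$. Thus no $X$-holes arise, and the $X$-blocks of $\TZUseful$ in $X_I$ coincide with those of $\T^*$. For $Y$, I plan to show that a useful $Y_{\hat J} \in Y_J$ (equivalently, a $Y$-block of $\T^*$ in $Y_J$) survives $\TYUseful$ iff it is compatible with the unique triple $X_IY_JZ_K$ in $\TYComp$. The key sub-claim is usefulness-implies-compatibility: given $X_IY_JZ_K$ is consistent with $\alpha$, usefulness yields $\split(\hat J, S_{*,j,*}) = \sum_{i,k} \tfrac{\alpha(i,j,k)}{\alpha(*,j,*)} \splres_{Y,i,j,k} = \splresavg_{Y,*,j,*}$ for every $j$, which both lets $Y_{\hat J}$ pass Y-compat zero-out I and verifies clause~2 of \cref{def:global:Y-compatibility}; clause~1 is the usefulness condition specialized to $k=0$. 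Hence any useful $Y_{\hat J}$ is compatible with $X_IY_JZ_K$, and is removed by the $Y$-zero-outs iff it was compatible with at least one other triple in $\TYComp$ (Y-compat zero-out II). A non-useful $Y_{\hat J} \in Y_J$ is removed by Y-usefulness zero-out at the latest, so contributes to neither side. The $Z$-direction is entirely analogous, using \cref{def:global:Z-compatibility}, \cref{def:global:Z-useful}, and the symmetric averaging identity for $\splresavg_{Z,*,*,k}$.

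The main obstacle is establishing usefulness-implies-compatibility in both the $Y$ and $Z$ directions: a short but notation-heavy averaging calculation relying on $|S_{i,j,k}| = A_1 \alpha(i,j,k) n$, plus \cref{rmk:assumptions_on_complete_split_dist} to handle the vanishing of $\splres_{W,i,j,k}$ on off-sum sequences. Everything else is bookkeeping of which step zeros out which variable; assembling the three dimensional analyses yields both that $\TZUseful\vert_{X_I,Y_J,Z_K}$ is a subtensor of $\T^*$ and that its missing variables are exactly the $Y$- and $Z$-blocks zeroed out in the two ``unique triple'' steps, as claimed.
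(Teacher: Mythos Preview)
Your proposal is correct and follows essentially the same approach as the paper: both arguments verify the $X$-, $Y$-, $Z$-split conditions to show the restriction is a subtensor of $\T^*$, and both identify the missing variables via the key observation that usefulness implies compatibility (the averaging identity $\split(\hat J, S_{*,j,*}) = \splresavg_{Y,*,j,*}$ you spell out), so that among useful blocks only those killed in the ``unique triple'' step are lost. The paper phrases the $Y$-analysis as ``condition~(3) is strictly stronger than condition~(1)'' rather than ``usefulness implies compatibility,'' but the content is identical.
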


\begin{proof}

We first show that for any level-$\ell$ block triple $X_IY_JZ_K$ in $\THash$, $\TZUseful|_{X_IY_JZ_K}$ is a subtensor of $\T^*$. Note that we already have 
\[\THash \big|_{X_IY_JZ_K} \equiv \bigotimes_{i+j+k = 2^\ell} T_{i,j,k}^{\otimes A_1\cdot \alpha(i,j,k)\cdot n},\]
so it suffices to show that all the level-1 triples $X_{\hat I} Y_{\hat J} Z_{\hat K}$ remaining in $\TZUseful$ satisfy the following for all $i,j,k$:
\begin{enumerate}
    \item $\split(\hat{I}, S_{i,j,k}) = \splres_{\itX, i,j,k}$, \label{item:global:after-unique:X}
    \item $\split(\hat{J}, S_{i,j,k}) = \splres_{\itY, i,j,k}$, \label{item:global:after-unique:Y}
    \item $\split(\hat{K}, S_{i,j,k}) = \splres_{\itZ, i,j,k}$. \label{item:global:after-unique:Z}
\end{enumerate}
Note that \cref{item:global:after-unique:X} is automatically satisfied due to the step in $Y$-compatibility zero-out, since we have zeroed out any level-1 $X_{\hat I}\in X_{I}$ with $i,j,k$ such that $\split(\hat{I}, S_{i,j,k}) \ne \splres_{\itX,i,j,k}$. \cref{item:global:after-unique:Y} is satisfied due to $Y$-usefulness zero-out and \cref{item:global:after-unique:Z} is satisfied due to $Z$-usefulness zero-out. Thus, $\TZUseful|_{X_IY_JZ_K}$ is a subtensor of $T^*$.

Now we analyze what the missing $Y$-variables are. Note that for level-1 $Y_{\hat J}$ blocks, we have enforced the following conditions:
\begin{enumerate}[label=(\arabic*)]
\item  In $Y$-compatibility zero-out I, we enforce that $\split(\hat{J}, S_{*,j,*}) = \splresavg_{\itY,*,j,*}$.
\item  In $Y$-compatibility zero-out II, we enforce that each $\hat{J}$ is compatible with a unique level-$\ell$ triple.
\item In $Y$-usefulness zero-out, we enforce that $\split(\hat{J}, S_{i,j,k}) = \splres_{i,j,k}$ for every $i,j,k$.
\end{enumerate}
We claim that condition (3) is strictly stronger than condition (1), because by definition, (3) implies
\begin{align*}
    \Split(\hat{J}, S_{*,j,*}) 
    &= \frac{1}{\sum_{i,k}\alpha(i,j,k)}\sum_{i+k = 2^\ell - j} \alpha(i,j,k)\cdot \Split(\hat{J}, S_{i,j,k})\\
    &= \frac{1}{\sum_{i,k}\alpha(i,j,k)}\sum_{i+k = 2^\ell - j} \alpha(i,j,k)\cdot \splres_{\itY, i,j,k}\\
    &= \splresavg_{\itY,*,j,*}.
\end{align*}
Also, enforcing condition (3) does not create missing $Y$-variables, as it is enforcing the necessary complete split distribution condition in the definition of $\T^*$. Thus, the missing variables are exactly due to enforcing condition (2), i.e., the level-1 $Y_{\hat J}$ blocks that are compatible with more than one level-$\ell$ triple. Similarly, the missing $Z$ variables are exactly those level-1 $Z_{\hat K}$ blocks that are compatible with more than one level-$\ell$ triple.
\end{proof}

Furthermore, observe that $\TZUseful\vert_{X_I Y_J Z_K}$ is level-1-independent for different block triples $X_I Y_J Z_K$: The $X$-blocks are level-$\lvl$-independent after the zero-out in the asymmetric hashing step; for $Y$ and $Z$-blocks, by \cref{cl:global:Y-compatible,cl:global:Z-compatible}, each level-1 block is compatible with all the level-$\lvl$ triples containing it, and we zeroed out all level-1 $Y$ and $Z$-blocks that are compatible with multiple level-$\lvl$ triples, so the remaining level-1 $Y$ and $Z$-blocks belong to unique level-$\lvl$ triples. As a result, 
\[
  \TZUseful \, = \bigoplus_{X_I Y_J Z_K \; \textup{remaining}} \TZUseful \big\vert_{X_I Y_J Z_K}
  \numberthis \label{eq:T'''_as_direct_sum}
\]
is a direct sum of broken copies of $\T^*$.

To fix the holes, we need to first bound the fraction of holes in the broken copies of $\T^*$ contained in $\TZUseful$. Similar to the analysis in \cite{VXXZ24}, we introduce the notion of \emph{typicalness} for level-$1$ $Y$-blocks and $Z$-blocks and define the values $\pcompY$ and $\pcompZ$ respectively. Previously, the notion of typicalness and $\pcomp$ were only defined with respect to level-$1$ $Z_{\hat K}$ blocks. This is because previously only level-1 $Z_{\hat K}$ blocks could become holes in the remaining tensor. However, in our case both level-$1$ $Y_{\hat{J}}$ blocks and $Z_{\hat K}$ blocks can become holes, so we need to define similar notions for $Y_{\hat J}$ blocks accordingly. 

\begin{definition}[$Y$-Typicalness]
  A level-1 $Y$-block $Y_{\hat{J}}$ in a level-$\lvl$ $Y$-block $Y_J$ is \emph{typical} if $\split(\hat J, S_{*,j,*}) = \splresavg_{\itY,*,j,*}$ for every $j$. When $Y_J$ is consistent with $\alphy$, this condition is equivalent to $\split(\hat J, [A_1 n]) = \splavg{\itY}$.
\end{definition}

\begin{definition}[$Z$-Typicalness]
  A level-1 $Z$-block $Z_{\hat{K}}$ in a level-$\lvl$ $Z$-block $Z_K$ is \emph{typical} if $\split(\hat K, S_{*,*,k}) = \splresavg_{\itZ,*,*,k}$ for every $k$. When $Z_K$ is consistent with $\alphz$, this condition is equivalent to $\split(\hat K, [A_1 n]) = \splavg{\itZ}$.
\end{definition}

Then we can define the values $\pcompY$ and $\pcompZ$ as the probability of a typical level-$1$ $Y_{\hat J}$-block (resp.\ $Z_{\hat K}$-block) being compatible with a random level-$\ell$ triple $X_IY_JZ_K$ where $\hat J\in J$ (resp.\ $\hat K\in K$).

\begin{definition}[$\pcompY$]
For a fixed $Y_J$ and a fixed typical $Y_{\hat{J}} \in Y_J$, $\pcompY$ is the probability of $Y_{\hat J}$ being compatible with a uniformly random block triple $X_I Y_J Z_K$ consistent with $\alpha$.
\end{definition}

\begin{definition}[$\pcompZ$]
For a fixed $Z_K$ and a fixed typical $Z_{\hat{K}} \in Z_K$, $\pcompZ$ is the probability of $Z_{\hat K}$ being compatible with a uniformly random block triple $X_I Y_J Z_K$ consistent with $\alpha$.
\end{definition}

For any two different typical level-$1$ blocks $Y_{\hat J}\in Y_J$ and $Y_{\hat J '}\in Y_{J'}$, their level-$1$ complete split distributions are the same. It is not difficult to see that $\pcompY$ defined using $Y_{\hat J}$ is the same as $\pcompY$ defined using $Y_{\hat J'}$. This implies that the definition of $\pcompY$ is not dependent on the choice of the level-$1$ block $Y_{\hat J}$, so it is well-defined. The same holds for $\pcompZ$ due to the same reasoning.

\begin{claim}
The value of $\pcompY$ is 
\[
2^{(\eta_\itY - H(\splresavg_{\itY, *, *, *}) + H(\alpha_\itY)) \cdot A_1 n \pm o(n) },
\]
where we recall 
\[\eta_\itY = \sum_{i,j,k \,:\, k = 0} \alpha(i,j,k) \cdot H(\splres_{\itY, i,j,k}) + \sum_{j}\alpha(*,j,\+)\cdot H(\splresavg_{\itY,*,j,\+}).\]
\end{claim}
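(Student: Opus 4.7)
The plan is to express $\pcompY$ as a ratio $N_{\text{valid}}/N_{\text{total}}$, where the denominator counts level-$\lvl$ triples $X_I Y_J Z_K$ consistent with $\alpha$ (with $Y_J$ fixed) and the numerator counts the subset for which $Y_{\hat J}$ is $Y$-compatible. Once $J$ is fixed, a triple is an assignment $t \mapsto (I_t, K_t)$ with $I_t + K_t = 2^\lvl - J_t$ whose empirical distribution on $(I_t, J_t, K_t)$ equals $\alpha$. Let $L_t := (\hat J_{(t-1)2^{\lvl-1}+1}, \ldots, \hat J_{t\cdot 2^{\lvl-1}})$, $T_j := \{t : J_t = j\}$, and $T_j^L := \{t \in T_j : L_t = L\}$; typicalness of $\hat J$ gives $|T_j^L| = \splresavg_{Y,*,j,*}(L) \cdot \alphy(j) \cdot A_1 n$. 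Standard multinomial counting yields
\[
N_{\text{total}} = \prod_j \binom{\alphy(j) A_1 n}{\{\alpha(i,j,k) A_1 n\}_{i+k = 2^\lvl - j}} = 2^{(H(\alpha) - H(\alphy)) A_1 n \pm o(n)}.
\]

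The key observation for the numerator is that $Y$-compatibility (\cref{def:global:Y-compatibility}) restricts the conditional split distribution only on triples with $k = 0$: for each $j$, among the $\alpha(2^\lvl - j, j, 0) A_1 n$ indices $t$ assigned to $(I_t, K_t) = (2^\lvl - j, 0)$, the $L_t$-distribution must equal $\splres_{Y, 2^\lvl-j, j, 0}$, whereas for $k > 0$ no constraint on $L_t$ is imposed. I would handle this cleanly via the method of types: each assignment induces a joint type $P$ on $(J, L, I, K)$, and once the marginal $P(J, L) = \alphy(j)\splresavg_{Y,*,j,*}(L)$ is fixed (which it is by typicalness and $\alphy$-consistency of $J$), the number of assignments of type $P$ is $2^{A_1 n \cdot H_P(I, K \mid J, L) \pm o(n)}$. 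Hence $N_{\text{valid}}$ equals $2^{A_1 n \cdot \max_P H_P(I, K \mid J, L) \pm o(n)}$, where the maximum ranges over joints $P$ satisfying $P(I, J, K) = \alpha$ together with the forced $P(L \mid i, j, 0) = \splres_{Y, i, j, 0}$ for every $i + j = 2^\lvl$.

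To carry out the maximization, I decompose $H(I, K \mid J, L) = H(I, J, K, L) - H(J, L)$. Since $L$ determines $J$ (via $\sum_p L_p = J_t$), $H(J, L) = H(L) = H(\splresavg_{Y,*,*,*})$ is a fixed constant. Next, $H(I, J, K, L) = H(\alpha) + H(L \mid I, J, K)$, and $H(L \mid I, J, K)$ splits into the frozen $k = 0$ contribution $\sum_{k=0} \alpha(i, j, 0) H(\splres_{Y, i, j, 0})$ plus the free $k > 0$ contribution $\sum_{k>0} \alpha(i, j, k) H(P(L \mid i, j, k))$. For the latter, the only remaining constraint (obtained by subtracting the $k = 0$ part from the fixed $P(J, L)$-marginal) is
\[
\sum_{k > 0,\, i} \alpha(i, j, k)\, P(L \mid i, j, k) = \alpha(*, j, \+)\, \splresavg_{Y,*,j,\+}(L)
\]
for every $j$ and every $L$. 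Concavity of entropy (Jensen) yields the tight upper bound $\alpha(*, j, \+)\, H(\splresavg_{Y,*,j,\+})$ per $j$, attained when $P(L \mid i, j, k) = \splresavg_{Y,*,j,\+}$ for every $k > 0$. Summing, $\max H(L \mid I, J, K) = \eta_Y$, so $N_{\text{valid}} = 2^{(H(\alpha) + \eta_Y - H(\splresavg_{Y,*,*,*})) A_1 n \pm o(n)}$; dividing by $N_{\text{total}}$ yields the claimed $2^{(\eta_Y - H(\splresavg_{Y,*,*,*}) + H(\alphy)) A_1 n \pm o(n)}$. The main obstacle I anticipate is this entropy maximization step: correctly isolating the residual marginal constraint on the $k > 0$ conditionals once the $k = 0$ contribution is frozen, and applying concavity to produce precisely the averaged split distribution $\splresavg_{Y,*,j,\+}$ that matches the second half of the definition of $\eta_Y$ in \cref{tab:notation}.
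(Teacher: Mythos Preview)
Your proof is correct and lands on the same formula, but the route is genuinely different from the paper's. The paper writes $\pcompY = Q/P$ where both $P$ and $Q$ count tuples $(I,J,K,\hat J)$ with $\hat J$ varying; to compute $Q$ it \emph{fixes the triple} $X_IY_JZ_K$, invokes an alternative characterization of compatibility (replacing condition~2 by $\split(\hat J,S_{*,j,\+})=\splresavg_{Y,*,j,\+}$), and then counts compatible $\hat J$'s by a direct multinomial product over the partition $\{S_{i,j,0}\}_{i,j}\cup\{S_{*,j,\+}\}_j$ of $[A_1 n]$. No optimization step is required, because once $(I,J,K)$ is fixed the parts of the partition are fully determined and the count on each part is a single multinomial coefficient. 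You instead \emph{fix the typical $\hat J$} and count compatible $(I,K)$'s, which forces you into a method-of-types maximization of $H(I,K\mid J,L)$; your Jensen step is precisely what manufactures the averaged split $\splresavg_{Y,*,j,\+}$ and hence the second summand of $\eta_Y$. What your approach buys is an explanation of the formula: $\eta_Y$ emerges as the value of an entropy maximization whose optimizer assigns the common conditional $\splresavg_{Y,*,j,\+}$ to every $(i,j,k)$ with $k>0$. What the paper's approach buys is brevity: by counting in the dual direction it sidesteps the optimization entirely.
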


\begin{proof}
We define the following two quantities:
\begin{enumerate}

    \item[($P$)] The number of tuples $(I, J, K, \hat{J})$ where $X_I Y_J Z_K$ is consistent with $\alpha$, $Y_{\hat J} \in Y_J$, and $Y_{\hat J}$ is typical.

    \item[($Q$)] The number of tuples $(I, J, K, \hat{J})$ where $X_I Y_J Z_K$ is consistent with $\alpha$, $Y_{\hat J} \in Y_J$, and $Y_{\hat J}$ is typical, and additionally $Y_{\hat J}$ is compatible with the triple $X_I Y_J Z_K$.

\end{enumerate}
Notice that by definition and by symmetry of different choices of $Y_{\hat J}$ and $Y_J$, $\pcompY = Q/P$.

The quantity $P$ is simple to calculate, as it is the number of block triples $X_I Y_J Z_K$ consistent with $\alpha$ (this quantity is $2^{H(\alpha) \cdot A_1 n \pm o(n)}$) times the number of typical $Y_{\hat J}$ contained in every $Y_J$. The total number of typical $Y_{\hat J}$ that is contained in some $Y_J$ consistent with $\alpha_\itY$ is $2^{H(\splresavg_{\itY, *,*,*}) \cdot A_1 n \pm o(n)}$. These $Y_{\hat J}$ are evenly distributed among all $Y_J$ consistent with $\alpha_\itY$, so each $Y_J$ contains $2^{(H(\splresavg_{\itY, *,*,*}) - H(\alpha_\itY)) \cdot A_1 n \pm o(n)}$ typical $Y_{\hat{J}}$. Overall, 
$$P = 2^{(H(\alpha) + H(\splresavg_{\itY, *,*,*}) - H(\alpha_\itY)) \cdot A_1 n \pm o(n)}. $$

Next, we consider how to compute $Q$. Similar to the alternative definition of compatibility in the proof of Claim 5.14 in \cite{VXXZ24}, we have the following claim:
\begin{claim}
    Given a level-$\ell$ triple $X_IY_JZ_K$ that is consistent with $\alpha$, a level-$1$ block $Y_{\hat J}$ with $\hat J\in J$ is compatible with $X_IY_JZ_K$ if and only if the following holds:

    \begin{enumerate}
        \item For every $(i,j,k)\in \Z_{\ge 0}^3$ satisfying $i+j+k = 2^\ell$ and $k = 0$, $\split(\hat{J}, S_{i,j,k}) = \splres_{\itY,i,j,k}$.

        \item For $j\in \{0,\dots, 2^\ell\}$, $\split(\hat J, S_{*,j,\+}) = \splresavg_{\itY,*,j,\+}$ where $S_{*,j,\+} = \bigcup_{i \ge 0, k>0} S_{i,j,k}$.
    \end{enumerate}
\end{claim}
We omit the proof, since it is similar to the proof of the equivalence between two definitions of compatibility in the proof of Claim 5.14 in \cite{VXXZ24}.

Fix some $X_I Y_J Z_K$ that is consistent with $\alpha$, $\{S_{i, j, 0}\}_{i, j} \cup \{S_{*, j, \+}\}_{j}$ is a partition of $[A_1 n]$, and we can consider the possibilities of $\hat{J}$ on each of these parts. For $S_{i, j, 0}$, the number of possibilities of $\hat{J}$ is $2^{H(\splres_{\itY, i, j, 0}) \cdot \alpha(i, j, 0) \cdot A_1 n \pm o(n)}$, as we must have $\split(\hat{J}, S_{i,j,0}) = \splres_{\itY,i,j,0}$. For $S_{*, j, \+}$, the number of possibilities of $\hat{J}$ is $2^{H(\splresavg_{\itY, *, j, \+}) \cdot \alpha(*, j, \+) \cdot A_1 n \pm o(n)}$, as we must have $\split(\hat J, S_{*,j,\+}) = \splresavg_{\itY,*,j,\+}$. Clearly, by combining each set of possibilities from each part, the resulting $\hat{J}$ is typical. The total number of possible $\hat{J}$ for a fixed triple $X_I Y_J Z_K$ is thus
\[
2^{(\sum_{i + j = 2^\l} H(\splres_{\itY, i, j, 0}) \cdot \alpha(i, j, 0) + \sum_j H(\splresavg_{\itY, *, j, \+}) \cdot \alpha(*, j, \+) ) A_1 n \pm o(n)} = 2^{\eta_\itY A_1 n \pm o(n)}. 
\]
Therefore, 
\[
Q = 2^{\left(H(\alpha) + \eta_\itY\right) A_1 n\pm o(n)}.
\]
Finally, 
\[
\pcompY = Q / P = 2^{(\eta_\itY - H(\splresavg_{\itY, *, *, *}) + H(\alpha_\itY)) \cdot A_1 n \pm o(n) }.
\qedhere
\]
\end{proof}

\begin{claim}[{\cite[Claim 5.14]{VXXZ24}}]
  The value of $\pcompZ$ is
  $$2^{\left(\lambda_\itZ - H(\splavg{\itZ}) + H(\alphz)\right) A_1 \cdot n \pm o(n)}, $$
  where we recall that 
  \[
    \lambda_\itZ = \sum_{i, j, k \,:\, i = 0 \textup{ or } j = 0} \alpha(i, j, k) \cdot H(\splres_{\itZ, i, j, k}) + \sum_k \alpha(\+, \+, k) \cdot H(\splresavg_{\itZ, \+, \+, k}).
  \]
\end{claim}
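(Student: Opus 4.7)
The plan is to mirror exactly the two-quantity counting argument used just above for $\pcompY$, swapping the roles of $Y$ and $Z$ and using the $Z$-side compatibility definition (\cref{def:global:Z-compatibility}) in place of the $Y$-side one. Concretely, I would define
\begin{itemize}
    \item[($P$)] the number of tuples $(I,J,K,\hat K)$ with $X_IY_JZ_K$ consistent with $\alpha$, $Z_{\hat K}\in Z_K$, and $Z_{\hat K}$ typical;
    \item[($Q$)] the number of such tuples that additionally have $Z_{\hat K}$ compatible with $X_IY_JZ_K$.
\end{itemize}
By the symmetry among choices of $Z_J$ and typical $Z_{\hat K}\in Z_K$, we get $\pcompZ = Q/P$, and the final estimate then follows by dividing the two counts.

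The count of $P$ is the standard one: the number of triples $X_IY_JZ_K$ consistent with $\alpha$ is $2^{H(\alpha)A_1 n\pm o(n)}$, and the typical level-$1$ $Z$-blocks sitting inside $Z_K$ (consistent with $\alphz$) are evenly distributed across the $2^{H(\alphz)A_1 n\pm o(n)}$ choices of $Z_K$. Since the total number of typical $Z_{\hat K}$'s (consistent with $\alphz$) is $2^{H(\splavg{Z})A_1 n\pm o(n)}$, each $Z_K$ contains $2^{(H(\splavg{Z})-H(\alphz))A_1 n\pm o(n)}$ typical $Z_{\hat K}$'s, yielding
\[
P = 2^{(H(\alpha)+H(\splavg{Z})-H(\alphz))A_1 n\pm o(n)}.
\]

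The main step—and the one requiring the most care—is the computation of $Q$. I would first establish an alternative characterization of $Z$-compatibility analogous to the one used for $Y$: for a triple $X_IY_JZ_K$ consistent with $\alpha$ and $\hat K\in K$, $Z_{\hat K}$ is compatible with $X_IY_JZ_K$ if and only if (i) $\split(\hat K, S_{i,j,k}) = \splres_{Z,i,j,k}$ for all $(i,j,k)$ with $i+j+k=2^\lvl$ and $i=0$ or $j=0$, and (ii) $\split(\hat K, S_{\+,\+,k}) = \splresavg_{Z,\+,\+,k}$ for all $k\in\{0,\dots,2^\lvl\}$, where $S_{\+,\+,k} = \bigcup_{i,j>0}S_{i,j,k}$. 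This is the direct $Z$-analog of the claim used in the $\pcompY$ computation, and its proof goes through verbatim by averaging. Granting this characterization, $\{S_{i,j,k}: i=0\text{ or }j=0\}\cup\{S_{\+,\+,k}\}_k$ forms a partition of $[A_1 n]$, so for a fixed triple the number of admissible $\hat K$'s factorizes across the parts as
\[
\prod_{\substack{i+j+k=2^\lvl\\ i=0\text{ or }j=0}}2^{H(\splres_{Z,i,j,k})\alpha(i,j,k)A_1 n\pm o(n)}\cdot\prod_k 2^{H(\splresavg_{Z,\+,\+,k})\alpha(\+,\+,k)A_1 n\pm o(n)} = 2^{\lambda_Z A_1 n\pm o(n)},
\]
and each such $\hat K$ is automatically typical since typicalness is a coarsening of condition (ii). Therefore $Q = 2^{(H(\alpha)+\lambda_Z)A_1 n\pm o(n)}$.

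Dividing,
\[
\pcompZ \;=\; Q/P \;=\; 2^{(\lambda_Z - H(\splavg{Z}) + H(\alphz))A_1 n\pm o(n)},
\]
as claimed. I anticipate the only delicate point to be verifying the alternative characterization of $Z$-compatibility—specifically, checking that condition~(i) together with the aggregated condition~(ii) on the ``$\+,\+$'' slices implies the full \cref{item:global:Z-compatibility2} of \cref{def:global:Z-compatibility} for all $k$; this follows by writing $\splresavg_{Z,*,*,k}$ as the $\alpha$-weighted average of $\splres_{Z,i,j,k}$'s and splitting the sum into terms with $i=0$ or $j=0$ (where the split distribution is pinned by~(i)) and terms with $i,j>0$ (which are governed by~(ii) after another averaging step).
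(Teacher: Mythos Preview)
Your proposal is correct and follows essentially the same approach as the paper. The paper itself does not reproduce a proof of this claim but cites it directly from \cite{VXXZ24}; your argument is the exact $Z$-analog of the $\pcompY$ computation the paper does present (and which the paper explicitly notes is modeled on the proof of \cite[Claim 5.14]{VXXZ24}), so you have recovered the intended proof. One minor quibble: when you say ``each such $\hat K$ is automatically typical since typicalness is a coarsening of condition~(ii),'' typicalness actually follows from conditions~(i) and~(ii) together rather than from~(ii) alone, via the same averaging you spell out at the end.
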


\begin{claim}[Essentially {\cite[Claim 5.16]{VXXZ24}}]
  \label{cl:global:prob-of-holes}
  For every $b \in B$, every level-$\lvl$ block triple $X_I Y_J Z_K$ consistent with $\alpha$, and every typical $Z_{\hat{K}} \in Z_K$, the probability that $Z_{\hat{K}}$ is compatible with multiple triples in $\TZComp$ is at most
  \[ \frac{\numalpha \cdot \pcompZ}{\numzblock \cdot M_0}, \]
  conditioned on $\hashx(I) = \hashy(J) = \hashz(K) = b$.

  Similarly,  for every $b \in B$, every level-$\lvl$ block triple $X_I Y_J Z_K$ consistent with $\alpha$, and every typical $Y_{\hat{J}} \in Y_J$, the probability that $Y_{\hat{J}}$ is compatible with multiple triples in $\TYComp$ is at most
  \[ \frac{\numalpha \cdot \pcompY}{\numyblock \cdot M_0}, \]
  conditioned on $\hashx(I) = \hashy(J) = \hashz(K) = b$.
\end{claim}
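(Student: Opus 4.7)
The plan is to prove both bounds via a union bound over \emph{rival} triples, combined with the pairwise hash-collision property of Lemma~\ref{lem:more-asym-hash}(1). I will describe the $Z$-case; the $Y$-case is entirely analogous, obtained by swapping the roles of the $Y$- and $Z$-dimensions.

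Fix $b \in B$, an $\alpha$-consistent triple $X_I Y_J Z_K$, and a typical $Z_{\hat K} \in Z_K$. The event that $Z_{\hat K}$ is compatible with multiple triples in $\TZComp$ is contained in the event that there exists at least one rival triple $X_{I'} Y_{J'} Z_K \ne X_I Y_J Z_K$ in $\TZComp$ with which $Z_{\hat K}$ is compatible. Since $Z_{\hat K} \in Z_K$ is fixed, any rival triple must have $Z$-block exactly $Z_K$. My first step is to count the number of such rivals: by symmetry of $\alpha$-consistent triples across the level-$\lvl$ $Z$-blocks, a single $\alpha$-consistent $Z_K$ contains $\numalpha/\numzblock$ such triples (up to a $2^{\pm o(n)}$ factor), and by the definition of $\pcompZ$ together with the typicalness of $Z_{\hat K}$, a $\pcompZ$-fraction of them are compatible with $Z_{\hat K}$. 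This yields approximately $\pcompZ \cdot \numalpha/\numzblock$ rival-compatible triples. The second step is to bound the survival probability of one rival. For $X_{I'} Y_{J'} Z_K$ to lie in $\THash$ (and hence in $\TZComp$), its three hash values must all land in the Salem--Spencer set $B$; but because any valid triple automatically satisfies $h_X(I') + h_Y(J') \equiv 2 h_Z(K) \pmod{M}$ and $B$ is $3$-AP-free, this forces $h_X(I') = h_Y(J') = h_Z(K) = b$. By Lemma~\ref{lem:more-asym-hash}(1) applied to two distinct triples sharing the $Z$-block $Z_K$, the conditional probability of this event is exactly $1/M$. Combining the two counts via the union bound and using $M \ge M_0$ gives
\[
\Pr[\,\text{multiple compatible}\,] \;\le\; \frac{\numalpha \cdot \pcompZ}{\numzblock} \cdot \frac{1}{M} \;\le\; \frac{\numalpha \cdot \pcompZ}{\numzblock \cdot M_0},
\]
as desired. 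The $Y$-case follows by the same argument: rival triples have the form $X_{I'} Y_J Z_{K'}$, the number of compatible rivals is $\pcompY \cdot \numalpha/\numyblock$, and each survives hashing with conditional probability $1/M$ by the same lemma applied to two triples sharing the $Y$-block $Y_J$.

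The main obstacle I expect is bookkeeping rather than any deep step. In particular, the union bound as written ranges only over $\alpha$-consistent rivals, whereas a priori a rival $X_{I'} Y_{J'} Z_K$ might only be consistent with the marginals $(\alphx, \alphy, \alphz)$ but not with the joint $\alpha$. Following the argument in \cite{VXXZ24}, such non-$\alpha$-consistent rivals contribute a lower-order correction because, once the compatibility constraint from the typicalness of $Z_{\hat K}$ is imposed, their effective count is dominated by the $\alpha$-consistent count up to the $2^{\pm o(n)}$ factor already present. The $Y$-case statement is genuinely new relative to \cite{VXXZ24} — it has no counterpart in their work because they enforced uniqueness of $Y$-blocks already at the hashing stage — but conceptually it is proved by the very same union bound, relying only on the sharing-a-$Y$-block instance of Lemma~\ref{lem:more-asym-hash}(1).
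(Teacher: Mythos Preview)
Your argument is correct and follows exactly the approach of the cited \cite[Claim~5.16]{VXXZ24} (which the paper invokes without reproducing a proof): a union bound over rival triples sharing the relevant block, each surviving the hash with conditional probability $1/M$ by \cref{lem:more-asym-hash}(\ref{item:lem:more-asym-hash:constituent:item1}), and the count of $\alpha$-consistent compatible rivals being exactly $\pcompZ \cdot \numalpha/\numzblock$ (resp.\ $\pcompY \cdot \numalpha/\numyblock$) by symmetry and the definition of $\pcompZ$ (resp.\ $\pcompY$). Your remark that the $Y$-case is new to this paper but structurally identical to the $Z$-case, and your deferral of the non-$\alpha$-consistent bookkeeping to the VXXZ24 argument, both match how the paper itself handles the claim.
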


Initially, in \cref{eq:global:initial-M0-bound} we required $M_0 \ge 8\cdot \frac{\numtriple}{\numxblock}$. Now we finalize all constraints on $M_0$, and set 
\begin{align} \label{eq:hashmodulus}
M_0 & = \max\left\{8\cdot \frac{\numtriple}{\numxblock}, \;
\frac{\numalpha \cdot \pcompY}{\numyblock} \cdot 80 N, \;
\frac{\numalpha \cdot \pcompZ}{\numzblock} \cdot 80 N\right\} \\ 
& = 2^{\max\{H(\alpha)+P_\alpha - H(\alpha_\itX), \; \eta_\itY - H(\splresavg_{\itY, *, *, *}) + H(\alpha), \; \lambda_\itZ - H(\splresavg_{\itZ, *, *, *}) + H(\alpha) \} \cdot A_1 n \pm o(n)}.
\end{align}
For every $b \in B$ and every level-$\lvl$ block triple $X_I Y_J Z_K$ consistent with $\alpha$ that is hashed to bucket $b$ in asymmetric hashing, we consider the probability that it remains in $\TZUseful$ and $\TZUseful \vert_{X_I Y_J Z_K}$ is a copy of $\T^*$ with a small number of holes. 

First, by \cref{item:lem:more-asym-hash:item2} in \cref{lem:more-asym-hash}, the level-$\lvl$ block $X_I Y_J Z_K$ remains with probability at least $\frac{3}{4}$. By \cref{cl:global:prob-of-holes}, the expected fractions of holes for level-1 $Z$-blocks is at most $\frac{\numalpha \cdot \pcompZ}{\numzblock \cdot M_0} \le \frac{1}{80 N}$. By Markov's inequality, this fraction exceeds $\frac{1}{8N}$ with probability $\le 1/10$. The same applies to the fraction of holes of $Y$-variables. Therefore, by union bound, with constant probability, the level-$\lvl$ block $X_I Y_J Z_K$ remains and $\TZUseful \vert_{X_I Y_J Z_K}$ is a broken copy of $\T^*$ whose fraction of holes is $\le \frac{1}{8N}$ in all three dimensions. The expected number of such $X_I Y_J Z_K$ is $\numalpha \cdot M^{-1 - o(1)}$, so in expectation, we obtain $\numalpha \cdot M^{-1 - o(1)}$ independent broken copies of $\T^*$ with $\le \frac{1}{8N}$ fraction of holes, and by \cref{thm:fix-holes}, we can degenerate them into $\numalpha \cdot M^{-1 - o(1)}$ unbroken copies of $\T^*$. 

\subsection{Summary}

In conclusion, the above algorithm degenerates 
$\bigbk{\CW_q^{\otimes 2^{\lvl - 1}}}^{\otimes A_1 \cdot n}$ into 
\[\numalpha \cdot M_0^{-1-o(1)} \ge 2^{A_1 n \cdot \min\BK{H(\alphx^{(1)}) - P_\alpha^{(1)}, \; H(\splavg{\itY}^{(1)}) - \eta_\itY^{(1)}, \;  H(\splavg{\itZ}^{(1)}) - \lambda_\itZ^{(1)}} - o(n)}\]
independent copies of a level-$\lvl$ interface tensor $\T^*$ with parameter list
\[ \left\{ \bk{ n \cdot A_1 \cdot \alpha^{(1)}(i, j, k), \, i, j, k, \, \splres^{(1)}_{\itX, i, j, k}, \splres^{(1)}_{\itY, i, j, k}, \splres^{(1)}_{\itZ, i, j, k} } \right\}_{i + j + k = 2^{\lvl}}. \]

In region $r\in [6]$, let $\pi_r: \{X,Y,Z\}\to \{X,Y,Z\}$ be the $r$-th permutation in the lexicographic order and we perform the same procedure with $\pi_r(X)$-blocks in place of  $X$-blocks, $\pi_r(Y)$-blocks in place of $Y$-blocks, and $\pi_r(Z)$-blocks in place of $Z$-blocks. Note that we have described the procedure in the first region which corresponds to the identity permutation. In the end, we take the tensor product over the output tensor of the algorithm over all $6$ regions.

\section{Constituent Stage}
\label{sec:constituent-tensor}
\label{sec:constituent}

The framework for our constituent stage algorithm is almost identical to the framework for our global stage algorithm, but for the sake of rigor and notations, we present these two stages separately. 

In the level-$\lvl$ constituent stage, the algorithm takes as input a collection of level-$\lvl$ $\eps$-interface tensors, and degenerates them into the tensor product between a matrix multiplication tensor (obtained from the parts of the interface tensor where $i_t = 0$, $j_t = 0$, or $k_t = 0$) and a collection of independent copies of level-$(\lvl-1)$ $\eps'$-interface tensors for some $\eps' > 0$ (obtained from the other parts). 

More specifically, given an $\eps$-interface tensor with parameters
\[\{(n_t, i_t, j_t, k_t, \splresXt, \splresYt, \splresZt)\}_{t \in [s]},\]
we use the notation $n := \sum_t n_t$ and $N := 2^{\lvl-1} \cdot n$. 

\begin{remark}
\label{rmk:constituent:assumptions_on_complete_split_dist}
We can assume without loss of generality that the parameters additionally satisfy the following:
\begin{itemize}
    \item For every $t \in [s]$ with $j_t = 0$, and every $L \in \{0, 1, 2\}^{2^{\lvl-1}}$, 
    \[
    \splresXt(L) = \splresZt(\vec{2} - L).
    \]
  The same holds between $\splresXt$ and $\splresYt$ when $k_t = 0$, and between $\splresYt$ and $\splresZt$ when $i_t = 0$.
  \item For every $t \in [s]$ and every $L \in \{0, 1, 2\}^{2^{\lvl-1}}$, $\splresXt(L) = 0$ if $\sum_q L_q \ne i_t$. The same holds for $\splresYt, \splresZt$ with respect to $j_t, k_t$ correspondingly. 
\end{itemize}
\end{remark}

To obtain matrix multiplication tensors from the parts of the interface tensors where $i_t = 0$ or $j_t = 0$ or $k_t = 0$, we invoke the following theorem. 

\begin{theorem}[\cite{VXXZ24}]
  \label{thm:consituent_MM_terms}
  If $k_t=0$, then
  \[ T_{i_t,j_t,k_t}^{\otimes n_t}[\splresXt, \splresYt, \splresZt, \eps] \equiv \angbk{1, M, 1}, \]
  where
  \[ M = 2^{n_t (H(\splresXt) \pm o_{1/\eps}(1)) \pm o(n)} \cdot q^{n_t\sum_{(\hat{i}_1, \hat{i}_2, \ldots, \hat{i}_{2^{\lvl-1}})} \splresXt(\hat{i}_1, \hat{i}_2, \ldots, \hat{i}_{2^{\lvl-1}}) \sum_{p=1}^{2^{\lvl-1}} \ind[\hat{i}_p = 1]}. \]
  Similar results hold for the case when $i_t = 0$ or $j_t = 0$.
\end{theorem}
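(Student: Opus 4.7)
The plan is to exploit the observation that when $k_t = 0$, the entire $Z$-side of the tensor collapses to a single variable, so the restricted tensor must be a trilinear form of the shape $\sum_{a,b} c_{a,b}\, x_a y_b z$, and then use the specific structure of $\CW_q$ to show that the $X$ and $Y$ variables pair up so that $c_{a,b}$ is a permutation matrix, yielding $\angbk{1,M,1}$.

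First, I would show that $T_{i_t, j_t, 0}^{\otimes n_t}[\splresXt, \splresYt, \splresZt, \eps]$ has exactly one $Z$-variable. Since $k_t = 0$ and $\hat K \in \{0,1,2\}^{2^{\lvl-1} n_t}$ must satisfy $\sum_p \hat K_{(t-1)2^{\lvl-1}+p} = 0$ for every $t$, the only possible $\hat K$ is the all-zero sequence, and by \cref{rmk:constituent:assumptions_on_complete_split_dist} the distribution $\splresZt$ must be supported there. The corresponding level-$1$ $Z$-block contains the single variable $z_0^{\otimes 2^{\lvl-1} n_t}$. Call this unique variable $z$.

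Second, I would describe the $X$- and $Y$-side in terms of the three level-$1$ constituent tensors with $\hat k = 0$: $T^{(1)}_{2,0,0} = x_{q+1} y_0 z_0$, $T^{(1)}_{0,2,0} = x_0 y_{q+1} z_0$, and $T^{(1)}_{1,1,0} = \sum_{i=1}^{q} x_i y_i z_0 \equiv \angbk{1,q,1}$. Fix any level-$1$ $X$-sequence $\hat I$ approximately consistent with $\splresXt$. Since each chunk of $\hat I$ sums to $i_t$ and each $\hat K$-coordinate equals $0$, the unique matching $\hat J$ satisfies $\hat J = \vec{2} - \hat I$, which is automatically approximately consistent with $\splresYt$ by the compatibility assumption $\splresYt(L) = \splresXt(\vec{2}-L)$ stated in \cref{rmk:constituent:assumptions_on_complete_split_dist}. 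Kronecker-multiplying the level-$1$ pieces yields
\[ (T_{i_t,j_t,0}^{\otimes n_t})\big\vert_{X_{\hat I}, Y_{\vec 2 - \hat I}, z} \;\equiv\; \angbk{1, q^{s(\hat I)}, 1}, \]
where $s(\hat I)$ is the number of coordinates of $\hat I$ equal to $1$, because each $1$-coordinate contributes a factor of $\angbk{1,q,1}$ while the $0$- and $2$-coordinates contribute $\angbk{1,1,1}$ factors.

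Third, I would assemble the pieces. Different valid $\hat I$'s give disjoint $X$- and $Y$-variable blocks (since the level-$1$ partition is indexed by $\hat I$), but all share the single variable $z$. A sum of $\angbk{1,a,1}$ and $\angbk{1,b,1}$ over disjoint $X,Y$ variables and common $z$ is easily seen to be $\angbk{1,a+b,1}$ after relabeling, so
\[ T_{i_t,j_t,0}^{\otimes n_t}[\splresXt, \splresYt, \splresZt, \eps] \;\equiv\; \angbk{1, M, 1}, \qquad M \;=\; \sum_{\hat I} q^{s(\hat I)}, \]
where the sum is over $\hat I$ approximately consistent with $\splresXt$.

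Finally, I would compute $M$ by the standard type-counting argument: the number of such $\hat I$ is $2^{n_t(H(\splresXt) \pm o_{1/\eps}(1)) \pm o(n)}$, and for any such $\hat I$,
\[ s(\hat I) = n_t \sum_{(\hat i_1,\ldots,\hat i_{2^{\lvl-1}})} \splresXt(\hat i_1,\ldots,\hat i_{2^{\lvl-1}}) \sum_{p=1}^{2^{\lvl-1}} [\hat i_p = 1] \;\pm\; o(n_t) \]
up to the $\eps$-slack, so multiplying gives the claimed expression. The only slightly delicate point, rather than an obstacle, is verifying that different $\hat I$'s give genuinely independent $X$- and $Y$-summands sharing only $z$, which is immediate from the block partition; the rest is bookkeeping, and the analogous cases $i_t = 0$ or $j_t = 0$ follow by symmetry after invoking the matching identification in \cref{rmk:constituent:assumptions_on_complete_split_dist}.
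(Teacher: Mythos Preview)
The paper does not provide a proof of this theorem; it is stated with a citation to \cite{VXXZ24} and used as a black box. Your proposal supplies exactly the natural argument one would expect: collapse the $Z$-side to a single variable because $k_t=0$ forces $\hat K=\vec 0$; observe that each $\hat I$ determines $\hat J=\vec 2-\hat I$ and that \cref{rmk:constituent:assumptions_on_complete_split_dist} makes the $\splresYt$-consistency condition equivalent to the $\splresXt$-consistency condition; identify each $(X_{\hat I},Y_{\vec 2-\hat I},z)$ piece as $\angbk{1,q^{s(\hat I)},1}$ via the coordinatewise structure of $\CW_q$; and then use the elementary fact that a sum of $\angbk{1,a,1}$'s sharing a common $z$-variable but disjoint $X$- and $Y$-variables is again an inner-product tensor. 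The final count is standard type-counting.

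One small remark on the bookkeeping in your last step: the value $s(\hat I)$ is only determined up to an additive $O(\eps\, n_t)$ error because $\hat I$ is merely $\eps$-approximately consistent with $\splresXt$, so strictly speaking there is an error term on the $q$-exponent as well. This is harmless, since $q^{O(\eps\, n_t)}=2^{O(\eps\, n_t)}$ is absorbed by the $2^{n_t\cdot o_{1/\eps}(1)}$ factor already present, but it is worth saying explicitly. Otherwise your argument is correct and is essentially the standard proof.
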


We use \cref{thm:consituent_MM_terms} to degenerate the terms $t \in [s]$ where $i_t = 0$ or $j_t = 0$ or $k_t = 0$ into matrix multiplication tensors, and we can take their product to obtain a single matrix multiplication tensor. So now we are left with the remaining terms $t\in [s]$ where $i_t\ne 0, j_t\ne 0, k_t\ne 0$. Without loss of generality, we can reorder the terms so that the remaining terms are the first $s' \le s$ terms. 

For every $t \in [s']$, we have a triple of level-$\lvl$ complete split distributions $(\splresXt, \splresYt, \splresZt)$ associated with the $n_t$-th tensor power of the constituent tensor $T_{i_t,j_t,k_t}$. We define a distribution $\splonelevelXt$ on $\{0, \ldots, 2^{\lvl-1}\}^2$ as follows: for $l_\itX, r_\itX \in \{0, \ldots, 2^{\lvl-1}\}$,
\[
  \splonelevelXt(l_\itX, r_\itX) \; \defeq \sum_{\substack{(\hat{i}_1, \hat{i}_2, \ldots, \hat{i}_{2^{\lvl - 1}})\,: \\  \hat{i}_1 + \cdots + \hat{i}_{2^{\lvl - 2}} = l_\itX, \\ \hat{i}_{2^{\lvl-2}+1} + \cdots + \hat{i}_{2^{\lvl - 1}} = r_\itX}} \splresXt(\hat{i}_1, \hat{i}_2, \ldots, \hat{i}_{2^{\lvl-1}}).
\]
The distribution $\splonelevelXt$ specifies how a level-$\lvl$ index sequence $i_t$ splits into two level-$(\lvl - 1)$ index sequences. We define $\splonelevelYt$ and $\splonelevelZt$ similarly.

Let $\alpha_t$ be a distribution on possible combinations of $(l_\itX, l_\itY, l_\itZ)\in \{0, \ldots, 2^{\lvl-1}\}^3$ such that the marginals of $\alpha_t$ are consistent with $\splonelevelXt(l_\itX, i-l_\itX)$, $\splonelevelYt(l_\itY, j-l_\itY)$, $\splonelevelZt(l_\itZ, k-l_\itZ)$. Moreover, let $\splresXt[t, i', j', k']$, $\splresYt[t, i', j', k']$, $\splresZt[t, i', j', k']$ be level-$(\lvl-1)$ complete split distributions. Given $t\in [s']$, we define the following quantities in \cref{tab:notation-constituent}.

\begin{table}[ht]
    \centering
    {\def\arraystretch{1.5}
    \begin{tabular}{|| c  p{12cm} ||}
    \hline
       Notation & Definition \\
       \hline\hline

       \multirow{2.5}*{$P_{\alpha, t}$} & The penalty term $P_{\alpha, t} \defeq \max_{\alpha_t' \in D} H(\alpha_t') - H(\alpha_t) \ge 0$ where $D$ is the set of distributions whose marginal distributions on the three dimensions are consistent with $\splonelevelXt(l_\itX, i-l_\itX)$, $\splonelevelYt(l_\itY, j-l_\itY)$, and $\splonelevelZt(l_\itZ, k-l_\itZ)$, respectively. \\ 
        
        \hline

        $\alpha_t(i', \+, \+)$ & $\alpha_t(i', \+, \+) \defeq \sum_{j' > 0, k' > 0} \alpha_t(i', j', k')$ \\ 

        $\alpha_t(\+, j', \+)$ & $\alpha_t(\+, j', \+) \defeq \sum_{i' > 0, k' > 0} \alpha_t(i', j', k')$ \\

        $\alpha_t(\+, \+, k')$ & $\alpha_t(\+, \+, k') \defeq \sum_{i' > 0, j' > 0} \alpha_t(i', j', k')$ \\ 
        
        \hline

        $\alpha_t(i', \<, \<)$ & $\alpha_t(i', \<, \<) \defeq \sum_{j' < j_t, k' < k_t} \alpha_t(i', j', k')$ \\

        $\alpha_t(\<, j', \<)$ & $\alpha_t(\<, j', \<) \defeq \sum_{i' < i_t, k' < k_t} \alpha_t(i', j', k')$ \\

        $\alpha_t(\<, \<, k')$ & $\alpha_t(\<, \<, k') \defeq \sum_{i' < i_t , j' < j_t} \alpha_t(i', j', k')$ \\

        \hline

        $\splresavg_{\itX, t, i', \+, \+}$ & $\splresavg_{\itX, t, i', \+, \+} \defeq \frac{1}{\alpha_t(i',\+, \+)} \sum_{j' > 0, k' > 0} \alpha_t(i', j', k') \cdot \splres_{\itX, t, i', j', k'}$ \\
        
        $\splresavg_{\itY, t, \+, j', \+}$ & $\splresavg_{\itY, t, \+, j', \+} \defeq \frac{1}{\alpha_t(\+, j', \+)} \sum_{i' > 0, k' > 0} \alpha_t(i', j', k') \cdot \splres_{\itY, t, i', j', k'}$ \\

        $\splresavg_{\itZ, t, \+, \+, k'}$ & $\splresavg_{\itZ, t, \+, \+, k'} \defeq \frac{1}{\alpha_t(\+, \+, k')} \sum_{i' > 0, j' > 0} \alpha_t(i', j', k') \cdot \splres_{\itZ, t, i', j', k'}$ \\

        \hline

        $\eta_{\itY, t}$ & $\eta_{\itY, t} := \sum_{i', j'}  \bigbk{\alpha_t(i', j', 0) + \alpha_t(i_t \!-\! i', j_t \!-\! j', k_t \!-\! k')} \cdot H(\splres_{\itY, t, i', j', 0}) + 
        \sum_{j'} \bigbk{\alpha_t(*, j', \+) + \alpha_t(*, j_t \!-\! j', \<)} \cdot H(\splresavg_{\itY, t, *, j', \+})$ \\

        $\lambda_{\itZ, t}$ & 
        $\lambda_{\itZ,t} \defeq \sum_{i', j', k' \,:\, i' = 0 \textup{ or } j' = 0} \bigbk{\alpha_t(i', j', k') + \alpha_t(i_t \!-\! i', j_t \!-\! j', k_t \!-\! k')} \cdot H(\splres_{\itZ, t, i', j', k'}) + \sum_{k'} \bigbk{\alpha_t(\+, \+, k') + \alpha_t(\<, \<, k_t \!-\! k')} \cdot H(\splresavg_{\itZ, \+, \+, k_t-k'})$\\

        \hline
        
    \end{tabular}
    }
    \caption{Table of notations with respect to distributions $\{\alpha_t\}_{t \in [s']}$ over all possible combinations of $(l_\itX, l_\itY, l_\itZ)$ such that the marginals of $\alpha_t$ are consistent with $\splonelevelXt(l_\itX, i-l_\itX)$, $\splonelevelYt(l_\itY, j-l_\itY)$, $\splonelevelZt(l_\itZ, k-l_\itZ)$, and level-$(\lvl-1)$ complete split distributions $\splresXt[t, i', j', k']$, $\splresYt[t, i', j', k']$, $\splresZt[t, i', j', k']$.}
    \label{tab:notation-constituent}
\end{table}

Our notations satisfy the following general rules:

\begin{enumerate}
    \item Given $t\in [s']$, when we refer to values in the the distribution $\alpha_t(i',j',k')$, we may replace any of the input by the symbol $*$, $\+$, or $\<$. If an input coordinate is a ``$*$'', then it means the sum over $\alpha$ evaluated at all the inputs in this input coordinate; if an input coordinate is a ``$\+$'', then it means the sum over $\alpha$ evaluated at all the inputs that are $> 0$ in this input coordinate; if an input coordinate is a ``$\<$'', then it means the sum over $\alpha$ evaluated at all the inputs that are $<$ the $t$-th coordinate of the corresponding index sequence $i,j,k$ in this input coordinate. See examples in \cref{tab:notation-constituent}.

    \item Similar to before, for a given $t \in [s']$, we use $\splresavg_{\itX, t, i', j', k'}$ with $i',j',k'$ replaced by either $*$ or $\+$ to denote the weighted average of the corresponding values with respect to the distribution $\alpha_t$. When an input coordinate is replaced by ``$*$'', it means the weighted average is taken over all inputs in this input coordinate; when an input coordinate is replaced by ``$\+$'', it means the weighted average is taken over all inputs that are $> 0$ in this input coordinate. See examples in \cref{tab:notation-constituent}.

    \item For some $t \in [s']$ and a given family of sets $S_{t, i, j, k}$, we may replace any of the subscripts $i, j, k$ by the symbol $*$ or $\+$. If any coordinate of the subscript is a ``$*$'', then it represents the union over $S_{t, i, j, k}$ with subscript $\ge 0$ on this coordinate; if any coordinate of the subscript is a ``$\+$'', then it represents the union over $S_{t, i,j,k}$ with subscript $> 0$ on this coordinate. For example, $S_{t, *, j, k} = \bigcup_{i\ge 0} S_{t, i,j,k}$, and $S_{t, *,\+,k}= \bigcup_{i\ge 0, j>0} S_{t, i,j,k}$.
\end{enumerate}

In the following proposition, we will use the above definitions for different $t \in [s']$ and $r \in [6]$. We will use $t$ in the subscripts and $(r)$ in the superscripts on variables to specify that they are computed from values of $\alpha_t^{(r)}$, $\splres_{\itX, t}^{(r)}$, $\splres_{\itY, t}^{(r)}$, $\splres_{\itZ, t}^{(r)}$, $\bigBK{\splresXt[t, i', j', k']^{(r)}}_{i',j',k'}$, $\bigBK{\splresYt[t, i', j', k']^{(r)}}_{i',j',k'}$, $\bigBK{\splresZt[t, i', j', k']^{(r)}}_{i',j',k'}$. 

\begin{prop}
  \label{prop:constituent-stage-no-eps}
  For $\eps > 0$, an $s'$-term level-$\lvl$ $\eps$-interface tensor with parameters
  \[\{(n_t, i_t, j_t, k_t, \splresXt, \splresYt, \splresZt)\}_{t \in [s']}\]
  for $i_t, j_t, k_t > 0 \; \forall \; t \in [s']$ can be degenerated into
  \[ 2^{(\sum_{r=1}^6 E_r) - o(n) - \oeps(n)} \]
  independent copies of a level-$(\lvl-1)$ interface tensor with parameter list
  \[ \left\{ \bk{ n_t \cdot A_{t,r} \cdot \bigbk{\alpha_{t}^{(r)}(i', j', k')+\alpha_{t}^{(r)}(i_t \!-\! i', j_t \!-\! j', k_t \!-\! k')}, \, i', j', k', \, \splresXt[t, i', j', k']^{(r)}, \splresYt[t, i', j', k']^{(r)}, \splresZt[t, i', j', k']^{(r)} }\right\}\]
  with $t \in [s']$, $r \in [6]$, $i' + j' + k' = 2^{\lvl-1}$, $0 \le i' \le i_t$, $0 \le j' \le j_t$, $0 \le k' \le k_t$, such that
  \begin{itemize}
  \item $0 \le A_{t,r} \le 1$ for every $t \in [s']$, $r \in [6]$, and $\sum_{r=1}^6 A_{t, r}=1$ for every $t \in [s']$;
  \item For every $t$, and for every $W \in \{X, Y, Z\}$, $\sum_{r=1}^6 A_{t,r} \splres_{\itW, t}^{(r)} = \splres_{\itW, t}$ (the $\splres_{\itW, t}^{(r)}$'s are intermediate variables that will be used later); 
  \item For every $W \in \{X, Y, Z\}$, $r \in [6]$, and $i' + j' + k' = 2^{\lvl - 1}$, $\splres_{\itW, t, i', j', k'}^{(r)}$ is a level-$(\lvl-1)$ complete split distribution;
  \item For every $W \in \{X, Y, Z\}$, $t \in [s']$ and $r \in [6]$,
    \[ \splres_{\itW, t}^{(r)} = \sum_{i', j', k'} \alpha_{t}^{(r)}(i', j', k') \cdot \left(\splres_{\itW, t, i', j', k'}^{(r)} \times \splres_{\itW, t,  i_t - i', j_t - j', k_t - k'}^{(r)}\right); \]
  \item For each $r \in [6]$, define $\pi_r: \{X, Y, Z\} \to \{X, Y, Z\}$ as the $r$-th permutation in the lexicographic order. Then
  \begin{align*}
      E_r \defeq \min\Bigg\{
      & \sum_{t \in [s']} A_{t,r} \cdot n_t \cdot \left( H(\gamma_{\pi_r(X), t}^{(r)}) - P_{\alpha, t}^{(r)}\right), \\
      & \sum_{t \in [s']} A_{t,r} \cdot n_t \cdot \left( H(\splres_{\pi_r(Y), t}^{(r)}) - \eta_{\pi_r(Y), t}^{(r)}\right), \\
    & \sum_{t \in [s']} A_{t,r} \cdot n_t \cdot \left(H(\splres_{\pi_r(Z), t}^{(r)}) - \lambda_{\pi_r(Z), t}^{(r)}\right) \Bigg\}. \\
  \end{align*}
  \end{itemize}
\end{prop}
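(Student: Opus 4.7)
\medskip
\noindent\textbf{Proof plan.} The plan is to mirror the global stage argument of \cref{prop:global-stage-no-eps}, adapted so that the role of the single base tensor $\CW_q^{\otimes 2^{\lvl-1}}$ is played by each of the $s'$ terms $T_{i_t, j_t, k_t}^{\otimes n_t}[\splresXt, \splresYt, \splresZt, \eps]$, and so that the role of ``level-$1$ blocks inside level-$\lvl$ blocks'' is now played by ``level-$(\lvl-1)$ blocks inside level-$\lvl$ blocks.'' First I would split the $n_t$-th power for each term into six regions of sizes $A_{t,1} n_t, \ldots, A_{t,6} n_t$, with each region assigned a permutation $\pi_r \in S_{\{X,Y,Z\}}$. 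Inside region $r$ we will drive the algorithm so that the role of the $X$-dimension is played by $\pi_r(X)$ (the ``uniquely hashed'' dimension), etc. The intermediate complete split distributions $\splres_{W,t}^{(r)}$ in the statement are precisely the restrictions of $\splres_{W,t}$ to the region of size $A_{t,r} n_t$; the constraint $\sum_r A_{t,r}\splres_{W,t}^{(r)} = \splres_{W,t}$ is exactly the averaging over regions.

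Fixing a region $r$ and, for clarity, taking $\pi_r = \mathrm{id}$, I would then repeat the five steps of \cref{sec:global} inside each term $t$, level by level. The ``level-$\lvl$ triples'' are now index sequences valued in $\{0,1,\dots,2^{\lvl}\}$, and the ``level-$(\lvl-1)$ sub-blocks'' inside a level-$\lvl$ block $X_I$ correspond to splitting each coordinate $I_q$ into $(l,I_q-l)$, which is precisely what $\splonelevelXt$ controls. Concretely:
\begin{enumerate}
\item Zero out to make the chosen splits of each level-$\lvl$ index sequence consistent with $\splonelevelXt,\splonelevelYt,\splonelevelZt$, and zero out further to be consistent with $\alpha_t$; this gives counts $\numxblock^{(t)},\numyblock^{(t)},\numzblock^{(t)},\numalpha^{(t)},\numtriple^{(t)}$ analogous to the global ones, with $H$ replaced by $H(\splonelevelXt), H(\splonelevelYt), H(\splonelevelZt)$ and $H(\alpha_t)$.
\item Run asymmetric hashing simultaneously over all $t$ with a single modulus $M_0$ chosen as
\[
M_0 = \max\Bigl\{\, 8\cdot \tfrac{\prod_t \numtriple^{(t)}}{\prod_t \numxblock^{(t)}},\ 80N\cdot \tfrac{\prod_t \numalpha^{(t)} \pcompY^{(t)}}{\prod_t \numyblock^{(t)}},\ 80N\cdot \tfrac{\prod_t \numalpha^{(t)} \pcompZ^{(t)}}{\prod_t \numzblock^{(t)}}\,\Bigr\},
\]
so that after hashing every remaining level-$\lvl$ $X$-tuple $(I^{(1)},\ldots,I^{(s')})$ sits in a unique level-$\lvl$ triple.
\item Perform the $Y$-compatibility zero-out in each term separately, using the term-wise analogue of \cref{def:global:Y-compatibility} (with $2^{\lvl}$ replaced by the target split $(i',j',k')$ of $(i_t,j_t,k_t)$, and with $\splresavg_{Y,t,*,j',*}$ playing the role of $\splresavg_{Y,*,j,*}$); this yields the analogue of \cref{cl:global:Y-compatible} by the same proof, using the symmetry of \cref{rmk:constituent:assumptions_on_complete_split_dist} in place of \cref{rmk:assumptions_on_complete_split_dist}. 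Then the $Z$-compatibility/usefulness zero-outs go through verbatim as in \cref{cl:global:Z-compatible}.
\item Apply \cref{thm:fix-holes} to each broken copy of the target level-$(\lvl-1)$ interface tensor. Here the role of ``level-$1$'' blocks is played by level-$(\lvl-1)$ blocks, which is exactly what \cref{thm:fix-holes} needs.
\end{enumerate}

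The yield calculation follows the template of \cref{subsec:outline}: after hashing and the compatibility/usefulness zero-outs, the expected number of surviving level-$\lvl$ triples is $\prod_t \numalpha^{(t)} \cdot M_0^{-1-o(1)}$. Taking logarithms, $\log \prod_t \numalpha^{(t)} = \sum_t A_{t,r}\, n_t\, H(\alpha_t^{(r)}) \pm o(n)$, and the three candidates inside $M_0$ yield logarithms
\[
\sum_t A_{t,r} n_t (H(\alpha_t^{(r)})+P_{\alpha,t}^{(r)}-H(\splonelevelXt^{(r)})),\ \ \sum_t A_{t,r} n_t(\eta_{Y,t}^{(r)}-H(\splonelevelYt^{(r)})+H(\alpha_t^{(r)})),
\]
and the $Z$-analogue. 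Subtracting these from $\log\prod_t \numalpha^{(t)}$ and identifying $H(\splonelevelWt^{(r)}) = H(\splres_{W,t}^{(r)}) - H(\splres_{W,t,\cdot,\cdot,\cdot}^{(r)}|\text{level-}(\lvl{-}1))$ bookkeeping gives exactly the three expressions inside the $\min$ defining $E_r$, summed over $t$, for each $r \in [6]$. Taking the tensor product over the six regions multiplies these contributions, producing $2^{(\sum_r E_r)-o(n)-\oeps(n)}$ copies of the target level-$(\lvl-1)$ interface tensor with parameters proportional to $A_{t,r}\bigl(\alpha_t^{(r)}(i',j',k') + \alpha_t^{(r)}(i_t-i',j_t-j',k_t-k')\bigr)$ (the factor of two arises because each split-off $(i',j',k')$ is paired with its complement within a level-$\lvl$ block).

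The main obstacle I anticipate is the bookkeeping inside the hashing/compatibility analysis, because the relevant quantities now involve products over $t$ and the ``typicalness'' of a level-$(\lvl-1)$ sub-block must be defined jointly across terms. Concretely, the definitions of $\pcompY^{(t)}$ and $\pcompZ^{(t)}$ must track not only $\alpha_t$ but also the pairing between a $(l,i_t-l)$-type slot and a complementary $(i_t-l,l)$-type slot inside the level-$\lvl$ block, and the entropy computations (yielding $\eta_{Y,t}$ and $\lambda_{Z,t}$) must carry the complementary-pair terms $\alpha_t(i_t-i',j_t-j',k_t-k')$ present in the parameter list. Once these are set up correctly, each step parallels the global-stage proof and each claim (hashing counts, compatibility claims, hole-fraction bounds) transfers by the same argument, so no fundamentally new idea is required beyond the formal adaptation.
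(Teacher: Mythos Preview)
Your overall architecture matches the paper's proof: divide each term into six regions, run more-asymmetric hashing over the level-$(\lvl-1)$ index sequences of length $2n$ with a single modulus, then perform $Y$-compatibility, $Y$-usefulness, $Z$-compatibility, and $Z$-usefulness zero-outs term by term, and finally fix holes via \cref{thm:fix-holes}. That is exactly what the paper does in Sections~6.1--6.6.

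There is, however, one systematic confusion you should correct. You repeatedly say that in the constituent stage ``the role of level-$1$ blocks is played by level-$(\lvl-1)$ blocks.'' This is wrong: only the \emph{outer} role changes (hashing is now over level-$(\lvl-1)$ index sequences rather than level-$\lvl$ ones), while the \emph{inner} blocks on which compatibility, usefulness, and holes are defined remain genuine level-$1$ blocks $Y_{\hat J}, Z_{\hat K}$, just as in the global stage. In particular, \cref{thm:fix-holes} is applied to a level-$(\lvl-1)$ interface tensor whose holes are level-$1$ blocks, not level-$(\lvl-1)$ blocks. This confusion propagates into your yield calculation: the logarithm of the second candidate in $M_0$ is $\sum_t A_{t,r} n_t\bigl(H(\alpha_t^{(r)}) + \eta_{Y,t}^{(r)} - H(\splres_{Y,t}^{(r)})\bigr)$, with $H(\splres_{Y,t}^{(r)})$ (the entropy of the full level-$\lvl$ complete split distribution) rather than $H(\splonelevelYt^{(r)})$. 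This is because $\pcompY$ measures compatibility of a level-$1$ block $Y_{\hat J}$ with a level-$(\lvl-1)$ triple, and the count of typical $Y_{\hat J}$ inside a fixed $Y_J$ is $2^{(H(\splresYt)-H(\splonelevelYt))A_{t,1}n_t}$; the $H(\splonelevelYt)$ then cancels against $\numyblock$, leaving $H(\splresYt)$. Once you make this replacement, the subtraction $\log\numalpha - \log M_0$ gives $E_r$ directly, and no extra ``bookkeeping identity'' of the form $H(\splonelevelWt) = H(\splresWt) - (\cdots)$ is needed (that identity, as you wrote it, is not correct and does not repair the mismatch).
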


\begin{theorem}
  \label{thm:constituent-stage-with-eps}
  For $\eps > 0$, we can degenerate $2^{o(n)}$ independent copies of $s'$-term level-$\lvl$ $3\eps$-interface tensor with parameters
  \[\{(n_t, i_t, j_t, k_t, \splresXt, \splresYt, \splresZt)\}_{t \in [s']}\]
  where $i_t, j_t, k_t > 0\ \forall\ t \in [s']$ into
  \[2^{(\sum_{r=1}^6 E_r) - o(n) - \oeps(n)}\]
  independent copies of a level-$(\lvl-1)$ $\eps$-interface tensor with parameter list
  \[ \left\{ \bk{ n_t \cdot A_{t,r} \cdot \bigbk{\alpha_{t}^{(r)}(i', j', k')+\alpha_{t}^{(r)}(i_t-i', j_t-j', k_t-k')}, i', j', k', \splresXt[t, i', j', k']^{(r)}, \splresYt[t, i', j', k']^{(r)}, \splresZt[t, i', j', k']^{(r)} } \right\}\]
  for $t \in [s']$, $r \in [6]$, $i' + j' + k' = 2^{\lvl-1}$, $0 \le i' \le i_t$, $0 \le j' \le j_t$, $0 \le k' \le k_t$ satisfying the same properties as in \cref{prop:constituent-stage-no-eps}.
\end{theorem}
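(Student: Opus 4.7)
The plan is to reduce \cref{thm:constituent-stage-with-eps} to the exact version \cref{prop:constituent-stage-no-eps} via the same discretization and pigeonhole template that converts \cref{prop:global-stage-no-eps} into \cref{thm:global-stage-with-eps} (a technique that is standard in prior work such as \cite[Theorem 5.3]{VXXZ24}). The underlying principle is that a $3\eps$-interface tensor differs from an exact interface tensor only in a choice-space of polynomial size in $n$, so restricting to a single ``most popular'' exact configuration costs only a factor of $2^{o(n)}$, which is absorbed into the existing slack on the input side.

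Concretely, I would first decompose each term $T_{i_t,j_t,k_t}^{\otimes n_t}[\splresXt, \splresYt, \splresZt, 3\eps]$ as a union of subtensors of the form $T_{i_t,j_t,k_t}^{\otimes n_t}[\splresXt', \splresYt', \splresZt']$, one for each triple of split distributions on $\{0,1,2\}^{2^{\lvl-1}}$ with denominators dividing $n_t$ that lies within $3\eps$ of the targets. The number of such triples is polynomial in $n$ and therefore $2^{o(n)}$, so by averaging over the $2^{o(n)}$ input copies and over these configurations one can isolate $2^{o(n)}$ copies of a single \emph{exact} level-$\lvl$ interface tensor $\T^\#$ whose parameters are $3\eps$-close to the originals. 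Next, I would invoke \cref{prop:constituent-stage-no-eps} on $\T^\#$, picking parameters $\alpha_t^{(r)}$, $A_{t,r}$, and output level-$(\lvl-1)$ split distributions so that all marginal and consistency constraints hold exactly relative to the primed inputs while remaining within $O(\eps)$ of the targets specified in the theorem. By continuity of $H$, $P_{\alpha,t}$, $\eta_{Y,t}$, and $\lambda_{Z,t}$ in their probability arguments, each $E_r$ drops by at most $\oeps(n)$. Finally, since each exact output interface tensor has parameters at most $\oeps(1) < \eps$ away from the target, it is a subtensor of the target level-$(\lvl-1)$ $\eps$-interface tensor, and the counts combine to the claimed $2^{\sum_r E_r - o(n) - \oeps(n)}$.

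The main obstacle is the bookkeeping in the middle step: one must choose perturbed distributions $\alpha_t^{(r)'}$ and $\splres_{W, t, i', j', k'}^{(r)'}$ so that (i) the constraints of \cref{prop:constituent-stage-no-eps}, in particular the marginal conditions relating $\alpha_t^{(r)}$ to $\splonelevelXt^{(r)}$ and the decomposition $\sum_r A_{t,r} \splres_{W,t}^{(r)} = \splres_{W,t}$, all hold exactly with respect to the primed inputs $\splresXt'$, and (ii) the perturbations remain of order $\oeps(1)$ so the final containment into the target $\eps$-interface tensor goes through. This is essentially a continuity and compactness check: the feasible region for the no-$\eps$ program varies continuously with the input distributions, no probability needs to cross zero, and all auxiliary quantities in \cref{tab:notation-constituent} (including $P_{\alpha,t}$, $\eta_{Y,t}$, and $\lambda_{Z,t}$) are continuous functions of those inputs, so the arising loss in each $E_r$ is absorbed into $\oeps(n)$ as claimed.
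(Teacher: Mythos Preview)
Your proposal is correct and matches the paper's approach exactly: the paper does not give a proof of \cref{thm:constituent-stage-with-eps} at all, stating only that ``the proof of \cref{thm:constituent-stage-with-eps} assuming \cref{prop:constituent-stage-no-eps} is the same as the proof of \cite[Theorem 6.3]{VXXZ24}, so we omit the proof here.'' Your discretization-and-pigeonhole reduction to the exact \cref{prop:constituent-stage-no-eps}, followed by continuity of the entropy-type quantities in \cref{tab:notation-constituent}, is precisely that template.
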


The proof of \cref{thm:constituent-stage-with-eps} assuming \cref{prop:constituent-stage-no-eps} is the same as the proof of \cite[Theorem 6.3]{VXXZ24}, so we omit the proof here. We prove \cref{prop:constituent-stage-no-eps} in the remainder of this section. 

\subsection{Dividing into Regions}

For every $t \in [s']$, we divide the $t$-th term to $6$ regions. More specifically, we pick $A_{t, r} \ge 0$ for $r \in [6]$ such that $\sum_r A_{t, r} = 1$, where $A_{t,r}$ denotes the proportion of the $r$-th region inside the $t$-th term. For each region $r$, we pick complete split distributions $\splresXt^{(r)}, \splresYt^{(r)}, \splresZt^{(r)}$ for the $X$, $Y$, $Z$-dimensions respectively, so that  
\[
\sum_{r=1}^6 \splresXt^{(r)} A_{t,r}= \splresXt, \quad \sum_{r=1}^6 \splresYt^{(r)} A_{t,r}= \splresYt, 
\quad 
\sum_{r=1}^6 \splresZt^{(r)} A_{t,r}= \splresZt. 
\]
These complete split distributions also need to satisfy conditions in \cref{rmk:constituent:assumptions_on_complete_split_dist}.

Then, we only keep level-$1$ blocks that are consistent with these complete split distributions. More specifically, we keep a level-$1$ $X$-block only if for all $t \in [s']$, $r \in [6]$, its corresponding portion in the $r$-th region of the $t$-th part is $\eps$-approximate consistent with $\splresXt^{(r)}$. We similarly zero out level-$1$ $Y$ and $Z$-blocks. 

The following claim shows the structure of the remaining tensor. We omit its proof as it is simple and similar to \cite[Claim 6.4]{VXXZ24}. 

\begin{claim}
  \label{cl:dividing_into_region_zero_out}
  After the above zero-out, we obtain a tensor that is isomorphic to
  \[
    \bigotimes_{r=1}^6 \bigotimes_{t = 1}^{s'} T_{i_t, j_t, k_t}^{\otimes A_{t,r} n_t}[\splresXt^{(r)}, \splresYt^{(r)}, \splresZt^{(r)}, \eps].
  \]
\end{claim}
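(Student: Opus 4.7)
The plan is to directly unpack the definitions: each tensor power $T_{i_t,j_t,k_t}^{\otimes n_t}$ factors canonically as a six-fold Kronecker product according to a partition of its $n_t$ coordinates into the six regions, and the zero-out imposed on the $r$-th region is exactly the zero-out defining the corresponding smaller $\eps$-interface tensor term. No substantive new argument is needed beyond careful bookkeeping; I expect no essential obstacle.

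More concretely, for each $t \in [s']$, I would partition the $n_t$ copies into six consecutive blocks of sizes $A_{t,1} n_t, \ldots, A_{t,6} n_t$ (after rounding, absorbing any $o(n)$ discrepancy into the error terms that appear later in the analysis). This partition yields the canonical isomorphism
\[
T_{i_t,j_t,k_t}^{\otimes n_t} \equiv \bigotimes_{r=1}^6 T_{i_t,j_t,k_t}^{\otimes A_{t,r}n_t},
\]
and under it a level-$1$ $X$-index sequence $\hat I \in \{0,1,2\}^{2^{\lvl-1} n_t}$ decomposes as the concatenation $\hat I = \hat I^{(1)} \circ \cdots \circ \hat I^{(6)}$ of subsequences of lengths $2^{\lvl-1} A_{t,r} n_t$, and similarly for $\hat J$ and $\hat K$.

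Next, I would check directly from the description of the zero-out in the paragraph preceding the claim that the level-$1$ triple $X_{\hat I} Y_{\hat J} Z_{\hat K}$ survives for the $t$-th term precisely when, for every $r \in [6]$, the triple $(\hat I^{(r)}, \hat J^{(r)}, \hat K^{(r)})$ is $\eps$-approximately consistent with $(\splresXt^{(r)}, \splresYt^{(r)}, \splresZt^{(r)})$. Under the isomorphism above, the tensor made of exactly these surviving level-$1$ triples is, by definition of $T_{i,j,k}^{\otimes m}[\cdot,\cdot,\cdot,\eps]$,
\[
\bigotimes_{r=1}^6 T_{i_t,j_t,k_t}^{\otimes A_{t,r}n_t}\!\left[\splresXt^{(r)}, \splresYt^{(r)}, \splresZt^{(r)}, \eps\right].
\]
Taking the tensor product over $t \in [s']$ and reordering the six-fold-by-$s'$ product gives the claimed expression.

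The only point requiring a short remark is consistency with the original $\eps$-interface tensor: because $\splresXt = \sum_{r} A_{t,r} \splresXt^{(r)}$ and analogously for $Y, Z$, the regional $\eps$-consistency of $(\hat I^{(r)}, \hat J^{(r)}, \hat K^{(r)})$ with $(\splresXt^{(r)}, \splresYt^{(r)}, \splresZt^{(r)})$ implies global $\eps$-consistency of $(\hat I, \hat J, \hat K)$ with $(\splresXt, \splresYt, \splresZt)$ by the triangle inequality for the $L_\infty$ distance between the corresponding complete split distributions. Thus every surviving triple is a valid level-$1$ triple of the input $\eps$-interface tensor, so the claimed tensor is genuinely obtained from the input by a zero-out, completing the verification.
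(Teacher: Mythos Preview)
Your proposal is correct and is exactly the intended argument: the paper omits the proof, noting only that it is simple and similar to the analogous claim in \cite{VXXZ24}, and your direct unpacking of the definitions together with the final consistency remark (showing that regional $\eps$-consistency implies the global $\eps$-consistency of the input interface tensor, so the described zero-out really is a zero-out of the input) is precisely the straightforward verification the paper has in mind.
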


In the remainder of this section, we will focus on the first region of the tensor, which we denote as 
\[ \T^{(1)} \defeq \bigotimes_{t = 1}^{s'} T_{i_t, j_t, k_t}^{\otimes A_{t,1} n_t}[\splresXt^{(1)}, \splresYt^{(1)}, \splresZt^{(1)}, \eps]. \]
We will omit the superscript $(1)$ from now on. 

\subsection{More Asymmetric Hashing}

Next, we apply more asymmetric hashing. Recall that for each $t\in [s']$, $\alpha_t$ is a distribution over the set $\{(i', j', k') \in \mathbb{Z}_{\ge 0}^3 \mid i' + j' + k' = 2^{\lvl-1}\}$ with marginal distributions on the $X$, $Y$, $Z$-dimensions equal to $\splonelevelXt(i', i_t \!-\! i')$, $\splonelevelYt(j', j_t \!-\! j')$, $\splonelevelZt(k', k_t \!-\! k')$, respectively. We zero out $X$, $Y$, $Z$-blocks that are not consistent with the distributions $\{\gamma_{\itX, t}\}_t, \{\gamma_{\itY, t}\}_t, \{\gamma_{\itZ, t}\}_t$. That is, if there exists some $t \in [s']$ where the $t$-th part of the level-$(\lvl - 1)$ index sequence of some $X$-block is not consistent with $\gamma_{\itX, t}$, we zero out this $X$-block. We zero out $Y$ and $Z$-blocks similarly with respect to $\gamma_{\itY,t}$ and $\gamma_{\itZ,t}$. 

We define the following quantities:
\begin{itemize}
    \item $\numxblock, \numyblock, \numzblock$: the number of remaining level-$(\lvl-1)$ $X$, $Y$, $Z$-blocks, respectively.
    \item  $\numalpha$: the number of remaining block triples  consistent with $\{\alpha_t\}_{t \in [s']}$.
    \item $\numtriple$: the number of remaining level-$(\lvl-1)$ block triples.
\end{itemize}
These quantities can be approximated as in the following claim:

\begin{claim}\label{claim:constituent-hash-quantities}
\EquationOnSameLine{\numxblock = 2^{\sum_{t} H(\splonelevelXt) \cdot A_{t,1} n_t \pm o(n)}, \hfill\; \numyblock = 2^{\sum_{t} H(\splonelevelYt) \cdot A_{t,1} n_t \pm o(n)}, \hfill\; \numzblock = 2^{\sum_{t} H(\splonelevelZt) \cdot A_{t,1} n_t \pm o(n)},}
\[\numalpha = 2^{\sum_t H(\alpha_t) \cdot A_{t,1} n_t \pm o(n)}, \quad \numtriple = 2^{\sum_t (H(\alpha_t) + P_{\alpha, t}) \cdot A_{t,1} n_t \pm o(n)}.\]
\end{claim}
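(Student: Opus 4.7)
The plan is to reduce each of the five estimates to a standard method-of-types calculation, essentially identical in spirit to the corresponding book-keeping step of the global stage (the unnumbered claim just after the more asymmetric hashing step of Section~\ref{sec:global}) and to \cite[Claim 6.6]{VXXZ24}. First I would pin down the enumeration: after the hashing zero-outs in region~$1$, a remaining level-$(\lvl-1)$ $X$-block of $\T^{(1)}$ is specified, for each $t\in[s']$, by a sequence of $A_{t,1}n_t$ pairs $(l_X^{(s)},r_X^{(s)})\in\{0,\dots,2^{\lvl-1}\}^2$ with $l_X^{(s)}+r_X^{(s)}=i_t$ (each pair being the level-$(\lvl-1)$ splitting of one of the $n_t$ level-$\lvl$ positions inside $X_{i_t}^{\otimes n_t}$), subject to the constraint that the empirical distribution of these pairs equals $\splonelevelXt$. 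Analogous descriptions hold for $Y$-blocks, $Z$-blocks, and triples.

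For $\numxblock$, the number of consistent sequences contributed by the $t$-th term is the multinomial coefficient
\[
\binom{A_{t,1}n_t}{\{A_{t,1}n_t\cdot\splonelevelXt(l_X,r_X)\}_{l_X,r_X}} \;=\; 2^{H(\splonelevelXt)\cdot A_{t,1}n_t\pm o(n)}
\]
by Stirling's approximation. Multiplying over the (constant number of) $s'$ terms delivers the claimed bound, and the analogous calculation with $Y$ or $Z$ in place of $X$ gives $\numyblock$ and $\numzblock$. For $\numalpha$, a remaining triple consistent with $\alpha_t$ is specified positionwise by a joint triple $(l_X,l_Y,l_Z)\in\{0,\dots,2^{\lvl-1}\}^3$ with $l_X+l_Y+l_Z=2^{\lvl-1}$ whose empirical joint distribution equals $\alpha_t$ (the complementary halves $r_W=w_t-l_W$ are then forced and automatically satisfy $r_X+r_Y+r_Z=2^{\lvl-1}$). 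The multinomial count is $2^{H(\alpha_t)A_{t,1}n_t\pm o(n)}$, and the product over $t$ finishes this bound.

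For $\numtriple$, the constraint is only that each of the $X$-, $Y$-, $Z$-index sequences individually has the prescribed marginal empirical distribution $\splonelevelXt,\splonelevelYt,\splonelevelZt$, so the positionwise empirical joint is some type class $\alpha_t'$ whose three marginals match these. Summing over the polynomially (in $n$) many such type classes, the count is dominated by the single choice of $\alpha_t'$ maximizing entropy subject to the marginal constraints; by the definition of the penalty, this maximum equals $H(\alpha_t)+P_{\alpha,t}$. The product over $t$ yields the stated estimate for $\numtriple$.

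I do not anticipate a genuine obstacle here: this is a purely routine method-of-types computation. The only mild subtleties are verifying that the $o(n)$ additive errors in the exponents remain $o(n)$ after combining across the $s'$ terms (immediate, since $s'$ does not depend on $n$) and after summing over the polynomially many type classes in the $\numtriple$ step (which costs only an additive $O(\log n)=o(n)$ in the exponent). No new ideas beyond those already used in the analogous claim in \cite{VXXZ24} are required.
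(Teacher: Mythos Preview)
Your proposal is correct and is precisely the standard method-of-types argument one would expect here; the paper itself states this claim without proof (treating it as a routine counting step analogous to the global-stage claim and to \cite[Claim~6.6]{VXXZ24}), so there is nothing to compare against beyond noting that your sketch fills in exactly the intended argument.
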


Now we apply the standard hashing procedure first used in \cite{cw90}. Let $M \in [M_0, 2M_0]$ be a prime number for some integer $M_0$  satisfying
\[M_0 \ge 8\cdot \frac{\numtriple}{\numxblock}.\]

We pick independent elements $b_0, \{w_p\}_{p=0}^{2n} \in \Z_M$ uniformly at random, and use the hash functions $\hashx, \hashy, \hashz : \{0, \ldots, 2^{\lvl-1}\}^{2n} \rightarrow \Z_M$ defined as:
\begin{align*}
    \hashx(I) &= b_0 + \left(\sum_{p=1}^{2n} w_p \cdot I_p \right) \bmod M,\\
    \hashy(J) &= b_0 + \left(w_0 + \sum_{p=1}^{2n} w_p \cdot J_p \right) \bmod M,\\
    \hashz(K) &= b_0 + \frac{1}{2}\left(w_0+\sum_{p=1}^{2n} w_p \cdot (2^{\lvl-1} - K_p) \right) \bmod M.
\end{align*}

Then, let $B\subseteq \Z_M$ be the Salem-Spencer set without any $3$-term arithmetic progressions from \cref{thm:salemspencer}, where $|B| = M^{1-o(1)}$. We zero out all the level-$(\lvl-1)$ blocks $X_I$ with $\hashx(I) \notin B$, $Y_J$ with $\hashy(J) \notin B$, and $Z_K$ with $\hashz(K) \notin B$. Similar to before, now all the remaining block triples $X_I Y_J Z_K$ must have $\hashx(I) = \hashy(J) = \hashz(K) = b$ for some $b \in B$. 

If there are two remaining level-$(\lvl-1)$ triples $X_I Y_J Z_K$ and $X_I Y_{J'}Z_{K'}$ sharing the same $X$-block that are hashed to the same hash value $b \in B$, we zero out $X_I$. Then for every level-$(\lvl-1)$ block $X_I$, we check whether the unique triple containing it is consistent with $\{\alpha_t\}_{t \in [s']}$; if not, we zero out $X_I$. So every level-$(\ell-1)$ $X$-block is contained in a unique level-$(\lvl-1)$ triple $X_I Y_J Z_K$ that is consistent with $\{\alpha_t\}_{t \in [s']}$. We call the tensor after this step $\THash$.

The procedure we described above satisfies the following properties.

\begin{lemma}[Properties of more asymmetric hashing]\label{lem:more-asym-hash-constituent}

The above described procedure and its output $\THash$ satisfy the following:

\begin{enumerate}
    \item 
    \label{item:lem:more-asym-hash:constituent:item1}
    \textup{(Implicit in \cite{cw90}, see also \cite{duan2023})} For any level-$(\lvl-1)$ block triple $X_I Y_J Z_K\in \T$ and every bucket $b\in \{0,\dots, M-1\}$, we have 
    \[\Pr\Bk{\tall \hashx(I) = \hashy(J) = \hashz(K) = b} = \frac{1}{M^2}.\]
    Furthermore, for any $b \in \{0, \ldots, M - 1\}$, we have that any two different block triples $X_I Y_J Z_K, \allowbreak X_I Y_{J'} Z_{K'} \in \T$ that share the same $X$-block satisfy
    \[\Pr\Bk{\tall \hashx(I) = \hashy(J') = \hashz(K') = b \;\middle\vert\; \hashx(I) = \hashy(J) = \hashz(K) = b} = \frac{1}{M}.\]
    The same holds for different blocks that share the same $Y$-block or $Z$-block.

    \item
    \label{item:lem:more-asym-hash:constituent:item2}
    \textup{(Similar to {\cite[Claim 5.6]{VXXZ24}})} For every $b \in B$ and every level-$(\lvl-1)$ block triple $X_I Y_J Z_K \in \T$ consistent with $\alpha$, we have
    \[\Pr\Bk{X_IY_JZ_K\in \THash \;\middle\vert\; \hashx(I) = \hashy(J) = \hashz(K) = b}\ge \frac{3}{4}.\]

    \item 
    \label{item:lem:more-asym-hash:constituent:item3}
    \textup{({\cite[Claim 5.7]{VXXZ24}})} \[\E[\textup{number of level-$(\lvl-1)$ triples in $\THash$}] \ge \numalpha \cdot M_0^{-1-o(1)}.\]
\end{enumerate}
\end{lemma}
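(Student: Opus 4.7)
The plan is to follow the standard template for laser-method hashing analyses, adapting it to the asymmetric guarantee that only level-$(\lvl-1)$ $X$-blocks (rather than both $X$- and $Y$-blocks) end up in a unique surviving triple. For \cref{item:lem:more-asym-hash:constituent:item1}, I would first verify the identity $h_X(I) + h_Y(J) = 2 h_Z(K)$, which holds for every level-$(\lvl-1)$ triple because $I_p + J_p + K_p = 2^{\lvl-1}$ in every coordinate; substituting into the definitions causes the $2b_0 + w_0$ and the $\sum_p w_p$ contributions to line up. Since $b_0$ is independent and uniform over $\Z_M$, $h_X(I)$ is uniform; once $h_X(I) = b$ is fixed, the independent uniform $w_0$ keeps $h_Z(K)$ uniform over $\Z_M$, so $\Pr[h_X(I) = h_Z(K) = b] = 1/M^2$, and then the linear identity forces $h_Y(J) = b$ deterministically. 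For the conditional-independence statement, observe that different triples $X_I Y_J Z_K \ne X_I Y_{J'} Z_{K'}$ sharing $X_I$ must have $J \ne J'$ (since the triple is determined by any two of its blocks), so there exists a coordinate $p$ with $J_p \ne J'_p$. Conditioning on the hash values of the first triple fixes an affine combination of $b_0, w_0, \{w_q\}$, but because $M$ is prime and the residue $J'_p - J_p$ is a unit in $\Z_M$, the random variable $h_Y(J')$ remains uniform and independent over $\Z_M$ conditionally, yielding the $1/M$ bound. The same argument applies to triples sharing a common $Y$- or $Z$-block.

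For \cref{item:lem:more-asym-hash:constituent:item2}, the plan is a simple union bound. Conditioning on $h_X(I) = h_Y(J) = h_Z(K) = b$, the triple $X_I Y_J Z_K$ gets zeroed out in $\THash$ only if some other triple $X_I Y_{J'} Z_{K'}$ in $\T$ also hashes to $b$, or if the resulting unique triple containing $X_I$ is inconsistent with $\{\alpha_t\}_{t \in [s']}$. The latter does not apply here since the triple $X_I Y_J Z_K$ is assumed consistent with $\alpha$. The total number of triples sharing a fixed $X$-block is at most $\numtriple/\numxblock$ up to $2^{o(n)}$ factors (from \cref{claim:constituent-hash-quantities}), and by \cref{item:lem:more-asym-hash:constituent:item1} each competitor hashes to $b$ with conditional probability $1/M \le 1/M_0$. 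Thus the expected number of competitors is at most $\numtriple/(\numxblock M_0) \le 1/8$ by our choice $M_0 \ge 8\numtriple/\numxblock$, and Markov's inequality gives survival probability $\ge 3/4$.

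Finally, \cref{item:lem:more-asym-hash:constituent:item3} follows by combining the previous two items with linearity of expectation. Among the $\numalpha$ triples in $\T$ consistent with $\{\alpha_t\}$, each hashes to some $b \in B$ with probability $|B|/M^2 \ge M^{-1-o(1)}$ by \cref{item:lem:more-asym-hash:constituent:item1} together with the Salem--Spencer bound $|B| \ge M^{1-o(1)}$, and conditioned on this \cref{item:lem:more-asym-hash:constituent:item2} guarantees survival in $\THash$ with probability $\ge 3/4$. Multiplying and using $M \le 2M_0$ gives the claimed expected count $\numalpha \cdot M_0^{-1-o(1)}$. The only mildly delicate step in the whole argument is the conditional uniformity in \cref{item:lem:more-asym-hash:constituent:item1}, where one must verify that the linear dependencies among the already-conditioned hash values of the first triple do not pin down the value of $h_Y(J')$; the primality of $M$ and the coordinate-wise freedom of the $w_p$'s at a position where $J_p \ne J'_p$ are exactly what is needed to push this through.
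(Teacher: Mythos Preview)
Your argument is correct and is precisely the standard laser-method hashing analysis that the paper defers to via its citations; the paper itself does not write out a proof of this lemma but simply attributes the three items to \cite{cw90,duan2023} and \cite[Claims 5.6--5.7]{VXXZ24}. Your handling of the extra consistency-with-$\{\alpha_t\}$ check in the constituent stage (which is absent in the global-stage version) is also correct, and the only cosmetic point is that the per-$X$-block triple count $\numtriple/\numxblock$ is in fact exact by coordinate-permutation symmetry, so the ``up to $2^{o(n)}$'' hedge is unnecessary.
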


\subsection{\texorpdfstring{\boldmath$Y$}{Y}-Compatibility Zero-Out}
Let
\[ S^{(I, J, K)}_{t, i', j', k'} \defeq \{p \textup{ is in the $t$-th part} \mid I_p = i', J_p = j', K_p = k'\}, \]
and
\[S^{(J)}_{t, *, j', *} := \{p \textup{ is in the $t$-th part} \mid J_p = j'\}, \quad S^{(K)}_{t, *, *, k'} := \{p \textup{ is in the $t$-th part} \mid K_p = k'\}. \]
If clear from the context, we will drop the superscript $(I, J, K)$, $(J)$, or $(K)$.\footnote{If we follow our general rules for notation described at the beginning of \cref{sec:constituent-tensor} strictly, $S_{t,*,j',*}^{(J)}$ and $S_{t,*,*,k'}^{(K)}$ would be denoted as $S_{t,*,j',*}^{(I, J, K)}$ and $S_{t,*,*,k'}^{(I, J, K)}$, but notice that the extra superscripts can be dropped as they do not affect the values of $S_{t,*,j',*}^{(I, J, K)}$ and $S_{t,*,*,k'}^{(I, J, K)}$.
}

Recall that in $\THash$, every $X$-block $X_I$ is in a unique block triple. Thus, given $X_I$, we can uniquely determine the block triple $X_I Y_J Z_K$ containing it. 
So we can zero out a level-$1$ block $X_{\hat{I}} \in X_I$ if there exist $t, i', j', k'$ such that $\split(\hat I, S_{t, i',j',k'}) \ne \splres_{\itX, t, i', j', k'}$.

The goal of the $Y$-compatibility zero-out step is to ensure each level-$1$ Y-block belongs to a unique block triple. 

\subsubsection{\texorpdfstring{\boldmath$Y$}{Y}-Compatibility Zero-Out I}

For every level-1 $Y$-block $Y_{\hat J}$, if there is some $j'$ where $\split(\hat J, S_{t, *, j', *}) \ne \splresavg_{\itY, t, *,j',*}$, then we zero out $Y_{\hat J}$. We call the tensor after this zeroing out $\TYComp$. 

Next, we define the notion of compatibility for level-1 $Y$-blocks.

\begin{definition}[$Y$-Compatibility]
  \label{def:constituent:Y-compatibility}
  Given a level-$(\lvl-1)$ block triple $X_IY_J Z_K$ and a level-$1$ block $Y_{\hat J} \in Y_J$, we say $Y_{\hat J}$ is compatible with $X_IY_J Z_K$ if
  \begin{enumerate}
  \item 
    \label{item:constituent:Y-compatibility1} 
    
    For every $t \in [s']$ and every $(i', j', k') \in \Z^3_{\ge 0} \cap ([0, i_t] \times [0, j_t] \times [0, k_t])$ with $i' + j' + k'=2^{\lvl - 1}$ and $k' = 0$, $\split(\hat J, S_{t, i',j',k'}) = \splres_{\itY,t, i',j',k'}$.
  \item
    \label{item:constituent:Y-compatibility2}
    For every $t \in [s']$ and $j' \in \{0, 1, \ldots, \min\{2^{\lvl-1}, j_t\}\}$,  $\split(\hat J, S_{t, *,j',*}) = \splresavg_{\itY,t,  *,j',*}$.
  \end{enumerate}
\end{definition}

\begin{claim}
  \label{cl:constituent:Y-compatible}
  In $\TYComp$, for every level-1 block triple $X_{\hat I}Y_{\hat J}Z_{\hat K}$ and the level-$\lvl$ block triple $X_I Y_J Z_K$ that contains it, $Y_{\hat J}$ is compatible with $X_I Y_J Z_K$.
\end{claim}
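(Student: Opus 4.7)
The plan is to mirror the global-stage argument from Claim~\ref{cl:global:Y-compatible}, adapting it to the finer indexing by $t \in [s']$ and to level-$(\lvl-1)$ triples (rather than level-$\lvl$ triples). The two items of \cref{def:constituent:Y-compatibility} will be verified separately: Item~\ref{item:constituent:Y-compatibility2} will follow essentially by construction, whereas Item~\ref{item:constituent:Y-compatibility1} (the $k'=0$ constraint) will require chasing through how zero-outs on $X$-variables force a constraint on the matching $Y$-variables.

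First, for Item~\ref{item:constituent:Y-compatibility2}, recall that $Y$-compatibility zero-out I explicitly removed every level-$1$ block $Y_{\hat J}$ for which there exists some $t$ and some $j'$ with $\split(\hat J, S_{t, *, j', *}) \neq \splresavg_{Y, t, *, j', *}$. Thus any surviving $Y_{\hat J}$ automatically satisfies Item~\ref{item:constituent:Y-compatibility2}.

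For Item~\ref{item:constituent:Y-compatibility1}, I would argue as follows. Because $\THash$ has each level-$(\lvl-1)$ $X$-block $X_I$ in a unique triple $X_I Y_J Z_K$, the zero-out on level-$1$ $X$-blocks performed at the start of the $Y$-compatibility zero-out step guarantees that every remaining $X_{\hat I} \in X_I$ satisfies $\split(\hat I, S_{t, i', j', k'}) = \splres_{X, t, i', j', k'}$ for all $t, i', j', k'$. Now fix any $t$ and any $(i', j', k')$ with $i'+j'+k' = 2^{\lvl-1}$ and $k' = 0$. Since a level-$(\lvl-1)$ index $K_p = 0$ and is the sum of $2^{\lvl-2}$ nonnegative level-$1$ entries in $\{0,1,2\}$, each of these level-$1$ entries must itself be $0$. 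Hence on every level-$1$ sub-position $\hat p$ of every $p \in S_{t, i', j', 0}$, the defining equation $\hat I_{\hat p} + \hat J_{\hat p} + \hat K_{\hat p} = 2$ of $\CW_q$ forces $\hat J_{\hat p} = 2 - \hat I_{\hat p}$. Consequently for every $L \in \{0,1,2\}^{2^{\lvl-2}}$,
\[
\split(\hat J, S_{t, i', j', 0})(L) \;=\; \split(\hat I, S_{t, i', j', 0})(\vec{2} - L) \;=\; \splres_{X, t, i', j', 0}(\vec{2} - L).
\]
Then the level-$(\lvl-1)$ analog of \cref{rmk:constituent:assumptions_on_complete_split_dist} (applied to the inner complete split distributions $\splres_{X, t, i', j', k'}, \splres_{Y, t, i', j', k'}, \splres_{Z, t, i', j', k'}$, which can be assumed WLOG to satisfy the same symmetry relations, since violating them would make the target interface tensor zero) gives $\splres_{X, t, i', j', 0}(\vec{2} - L) = \splres_{Y, t, i', j', 0}(L)$, yielding $\split(\hat J, S_{t, i', j', 0}) = \splres_{Y, t, i', j', 0}$ as required.

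The main subtlety — and the step I would want to double-check carefully — is the invocation of the inner symmetry $\splres_{X, t, i', j', 0}(L) = \splres_{Y, t, i', j', 0}(\vec{2}-L)$ at the level-$(\lvl-1)$ distributions, since \cref{rmk:constituent:assumptions_on_complete_split_dist} as stated in the paper is phrased only for the outer split distributions $\splresXt, \splresYt, \splresZt$. I expect this to be a routine WLOG assumption: any $(\splres_{X, t, i', j', k'}, \splres_{Y, t, i', j', k'}, \splres_{Z, t, i', j', k'})$ violating these relations produces a zero term of the output interface tensor, so the claim holds trivially in that case. Apart from this, the argument is purely bookkeeping: the $k'=0$ constraint kills the $\hat K$ coordinates, the $X$-variable zero-out controls $\hat I$, and the $\CW_q$ partition equation then pins down $\hat J$ completely on the relevant positions.
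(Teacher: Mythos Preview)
Your proposal is correct and follows exactly the approach the paper intends: the paper's own proof consists of the single sentence ``The proof is similar to the proof of \cref{cl:global:Y-compatible},'' and your argument is precisely the natural adaptation of that global-stage proof to the constituent setting (level-$(\lvl-1)$ triples, indexing by $t$, and length-$2^{\lvl-2}$ inner split sequences). Your flagged subtlety about needing the symmetry relation $\splres_{X,t,i',j',0}(\vec 2 - L) = \splres_{Y,t,i',j',0}(L)$ for the \emph{inner} split distributions is a valid point that the paper glosses over; your justification (that violating it makes the target interface-tensor term zero, exactly as in \cref{rmk:assumptions_on_complete_split_dist}) is the right one.
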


The proof is similar to the proof of \cref{cl:global:Y-compatible}. 

\subsubsection{\texorpdfstring{\boldmath$Y$}{Y}-Compatibility Zero-Out II: Unique Triple}

After the previous step, every remaining level-1 $Y$-block $Y_{\hat J}$ is compatible with all the level-$(\lvl-1)$ block triples containing it. In this step, if $Y_{\hat J}$ is compatible with more than one level-$(\lvl-1)$ block triple, we zero it out. At this point, every level-1 $Y$-block belongs to a unique level-$(\lvl-1)$ block triple with which the block is compatible.

\subsubsection{\texorpdfstring{\boldmath$Y$}{Y}-Usefulness Zero-Out}

After the previous step, each remaining level-1 block $Y_{\hat J}$ will belong to a unique block triple $X_I Y_J Z_K$, so given $Y_{\hat J}$, $S_{t, i', j', k'}$ is well-defined for every $t, i', j', k'$. Hence, we can zero out $Y_{\hat J}$ such that there exist some $t, i', j', k'$ where $\split(\hat J, S_{t, i', j', k'}) \ne \splres_{\itY, t, i', j', k'}$.

Similar to before, we define the following notion of usefulness. 

\begin{definition}[$Y$-Usefulness]
For a level-$(\lvl-1)$ block triple $X_I Y_J Z_K$ and a level-1 block $Y_{\hat J} \in Y_J$, we say $Y_{\hat J}$ is useful for $X_I Y_J Z_K$ if $\split(\hat J, S_{t, i', j', k'}) = \splres_{\itY, t, i', j', k'}$ for every $t, i', j', k'$. 
\end{definition}

Using the above definition, the zero-out in this step can be equivalently described as follows: We zero out all level-1 block $Y_{\hat J}$ that is not useful for the unique triple containing it. 

After this step, we call the tensor $\TYUseful$. 

\subsection{\texorpdfstring{\boldmath$Z$}{Z}-Compatibility Zero-Out}

Similar to before, now we perform compatibility zero-outs for $Z$-blocks. 

\subsubsection{\texorpdfstring{\boldmath$Z$}{Z}-Compatibility Zero-Out I}

For every level-1 block $Z_{\hat K} \in Z_K$, we zero out $Z_{\hat K}$ if there exist $t, k'$ where $\split(\hat K, S_{t, *,*,k'}) \ne \splresavg_{\itZ, t, *,*,k'}$.
We call the remaining tensor $\TZComp$.

Now we are ready to define the notion of compatibility for level-1 $Z$-blocks which remains identical to the definition in \cite{VXXZ24}. 

\begin{definition}[$Z$-Compatibility]
  \label{def:constituent:compatibility}

  For a level-$(\lvl-1)$ block triple $X_IY_J Z_K$ and a level-$1$ block $Z_{\hat{K}} \in Z_K$, we say $Z_{\hat K}$ is compatible with $X_IY_J Z_K$ if
  \begin{enumerate}
  \item 
    \label{item:constituent:compatibility1} For every $t$ and every $(i', j', k') \in \mathbb{Z}_{\ge 0}^3 \cap [0, i_t] \times [0, j_t] \times [0, k_t]$ with $ i' + j' + k' = 2^{\lvl-1}$, $i' = 0 \text{ or } j' = 0$, there is $\split(\hat K, S_{t, i',j',k'}) = \splres_{\itZ,t, i',j',k'}$.
  \item
    \label{item:constituent:compatibility2}
    For every $t$ and every index $k' \in \{0, 1, \ldots, \min\{2^{\lvl-1}, k_t\}\}$,  $\split(\hat K, S_{t, *,*,k'}) = \splresavg_{\itZ,t,  *,*,k'}$.
  \end{enumerate}
\end{definition}

\begin{claim}
  \label{cl:constituent:compatible}
  In $\TZComp$, for every remaining level-1 block triple $X_{\hat I}Y_{\hat J}Z_{\hat K}$ and the level-$(\lvl-1)$ block triple $X_I Y_J Z_K$ that contains it, $Z_{\hat K}$ is compatible with $X_I Y_J Z_K$.
\end{claim}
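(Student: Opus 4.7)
My plan is to follow the template of \cref{cl:constituent:Y-compatible}, which in turn mirrors \cref{cl:global:Y-compatible}, but with the roles of $Y$ and $Z$ interchanged. I would need to verify both conditions of \cref{def:constituent:compatibility} for $Z_{\hat K}$ against the (uniquely determined) level-$(\lvl-1)$ triple $X_I Y_J Z_K$ that contains the given level-$1$ triple $X_{\hat I} Y_{\hat J} Z_{\hat K}$; uniqueness of that triple follows because the $X$-hashing step has already forced each level-$(\lvl-1)$ $X$-block to sit in a single triple.

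The second condition (\cref{item:constituent:compatibility2}) is immediate from $Z$-compatibility zero-out I, which by definition deletes every $Z_{\hat K}$ for which $\split(\hat K, S_{t, *, *, k'}) \ne \splresavg_{Z, t, *, *, k'}$ for some $t, k'$. For the first condition (\cref{item:constituent:compatibility1}), I would fix $t \in [s']$ and $(i', j', k')$ with $i'+j'+k'=2^{\lvl-1}$, $0 \le i' \le i_t$, $0 \le j' \le j_t$, $0 \le k' \le k_t$, and either $i'=0$ or $j'=0$, and split into two symmetric cases. If $i'=0$, then for every level-$(\lvl-1)$ coordinate $p$ in the $t$-th part with $I_p=0,\ J_p=j',\ K_p=k'$, all $2^{\lvl-2}$ underlying level-$1$ entries $\hat I_{\hat p}$ must vanish, so the tensor-power relation $\hat I_{\hat p}+\hat J_{\hat p}+\hat K_{\hat p}=2$ forces $\hat K_{\hat p}=2-\hat J_{\hat p}$. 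Consequently, for every $L \in \{0,1,2\}^{2^{\lvl-2}}$,
\[
\split(\hat K, S_{t,0,j',k'})(L) = \split(\hat J, S_{t,0,j',k'})(\vec{2}-L) = \splres_{Y,t,0,j',k'}(\vec{2}-L) = \splres_{Z,t,0,j',k'}(L),
\]
where the middle equality uses the $Y$-usefulness zero-out and the last equality is the level-$(\lvl-1)$ analogue of the $Y$-$Z$ symmetry asserted in \cref{rmk:constituent:assumptions_on_complete_split_dist} (which may be assumed WLOG on the output split distributions by the same argument that justifies the remark). The case $j'=0$ is dual: now $\hat J_{\hat p}=0$, so $\hat K_{\hat p}=2-\hat I_{\hat p}$, and I would invoke the $X$-zero-out performed at the start of the $Y$-compatibility stage, which guarantees $\split(\hat I, S_{t,i',0,k'})=\splres_{X,t,i',0,k'}$, together with the corresponding $X$-$Z$ symmetry.

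I do not expect a substantive obstacle: the entire argument is a direct rewrite of the $Y$-compatibility proof. The only care needed is bookkeeping of which earlier zero-out supplies each required equality — namely, the initial $X$-zero-out, the $Y$-usefulness zero-out, and the just-performed $Z$-compatibility zero-out I — together with the correct WLOG symmetry of the level-$(\lvl-1)$ split distributions, all of which are already in place by the time $\TZComp$ is formed.
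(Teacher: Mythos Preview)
Your proposal is correct and follows essentially the same route the paper intends (the paper itself defers the proof to \cite[Claim 6.10]{VXXZ24}). You have correctly adapted the cited argument to the present setting: whereas in \cite{VXXZ24} both level-$(\lvl-1)$ $X$- and $Y$-blocks are in unique triples after hashing, here only $X$-blocks are, so for the $i'=0$ case you rightly invoke the $Y$-usefulness zero-out (already performed before $\TZComp$) in place of a direct $Y$-consistency zero-out. The WLOG symmetry $\splres_{Z,t,0,j',k'}(L)=\splres_{Y,t,0,j',k'}(\vec 2-L)$ and $\splres_{Z,t,i',0,k'}(L)=\splres_{X,t,i',0,k'}(\vec 2-L)$ on the \emph{output} level-$(\lvl-1)$ split distributions is indeed not literally stated in \cref{rmk:constituent:assumptions_on_complete_split_dist} (which concerns the input level-$\lvl$ distributions), but your parenthetical is right that the identical ``otherwise the target tensor is zero'' argument justifies it.
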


The proof of this claim is the same as the proof of Claim 6.10 in \cite{VXXZ24}. 

\subsubsection{\texorpdfstring{\boldmath$Z$}{Z}-Compatibility Zero-Out II: Unique Triple}

In this step, we zero out every level-$1$ $Z$-block $Z_{\hat K}$ that is compatible with more than one level-$(\lvl-1)$ block triple and they become holes. After this step, each remaining level-$1$ $Z$-block $Z_{\hat{K}}\in Z_K$ belongs to a unique level-$(\lvl-1)$ triple $X_I Y_J Z_K$ containing it, and the block is compatible with that triple.

\subsubsection{\texorpdfstring{\boldmath$Z$}{Z}-Usefulness Zero-Out}

Next, we further zero out some level-1 $Z$-blocks using the following definition of usefulness.

\begin{definition}[$Z$-Usefulness]
For a level-$(\lvl-1)$ block triple $X_I Y_J Z_K$ and a level-1 block $Z_{\hat K} \in Z_K$, we say $Z_{\hat K}$ is useful for $X_I Y_J Z_K$ if for every $t, i', j', k'$, we have $\split(\hat K, S_{t, i',j',k'}) = \splres_{\itZ,t, i',j',k'}$.
\end{definition}

For each $Z_{\hat K}$, it appears in a unique triple $X_I Y_J Z_K$ by the previous zero-out. If $Z_{\hat K}$ is not useful for this triple, we zero out $Z_{\hat K}$. We call the result tensor $\TZUseful$.

\subsection{Fixing Holes}

Eventually, we want to ensure that the subtensor of $\TZUseful$ restricted to each level-$(\lvl-1)$ block triple $X_I Y_J Z_K$ is isomorphic to 
\[ \mathcal{T}^* = \bigotimes_{t \in [s']} \; \bigotimes_{i'+j'+k' = 2^{\lvl-1}} T_{i',j',k'}^{\otimes A_{t,1} \cdot (\alpha_t(i', j', k')+\alpha_t(i_t-i', j_t-j', k_t-k')) \cdot n_t} [\splres_{\itX, t, i', j', k'}, \splres_{\itY, t, i', j', k'}, \splres_{\itZ, t, i', j', k'}]. \]

Similar to before, there are some holes caused by our degeneration process. The three types of holes are the following: 
\begin{itemize}
    \item The input of the constituent stage does not include all level-$1$ blocks, as we enforce some $\eps$-approximate complete split distribution on the input. As analyzed in the previous work \cite{VXXZ24}, the fraction of this type of holes can be upper bounded by $1/n^2$ in all three dimensions. The high-level intuition is that, if we take a random level-$1$ block from $\mathcal{T}^*$, it will likely be $\eps$-approximate consistent with the complete split distribution in the input. Thus, most level-$1$ blocks from $\mathcal{T}^*$ should appear in the input, which implies the fraction of this type of holes is small.
    \item The holes on $Y$-blocks caused by $Y$-compatibility zero-outs. These holes were not considered before, so the main focus of the analysis will be this case. 
    \item The holes on $Z$-blocks caused by $Z$-compatibility zero-outs. The analysis of these holes is similar to the previous work \cite{VXXZ24}, so we will omit most proofs of this case in the following. 
\end{itemize}

Next, we focus on holes on level-1 blocks caused by compatibility zero-outs. These are the fractions of (typical) $Y_{\hat{J}}$ and  $Z_{\hat{K}}$ that are compatible with multiple level-$(\lvl-1)$ triples. Similar to before, we define typicalness, $\pcompY$, and $\pcompZ$. 

\begin{definition}[$Y$-Typicalness]
  A level-1 $Y$-block $Y_{\hat{J}}$ in some level-$\lvl$ $Y$-block $Y_J$ is \emph{typical} if \\
  $\Vert\split(\hat J, S_{t, *,*,*}) - \splresavg_{\itY,t, *,*,*}\Vert_\infty \le \eps$ for every $t \in [s']$. 
\end{definition}

\begin{definition}[$Z$-Typicalness]
  A level-1 $Z$-block $Z_{\hat{K}}$ in some level-$\lvl$ $Z$-block $Z_K$ is \emph{typical} if \\
  $\Vert\split(\hat K, S_{t, *,*,*}) - \splresavg_{\itZ,t, *,*,*}\Vert_\infty \le \eps$ for every $t \in [s']$. 
\end{definition}

\begin{definition}[$\pcompY$]
  For fixed $Y_{\hat{J}}$ and $Y_J$ where $Y_{\hat{J}} \in Y_J$ and $\hat J$ has level-$\lvl$ complete split distributions $\midBK{\xiYt}_{t \in [s']}$, we define $\pcompY^*(\midBK{\xiYt}_{t \in [s']})$ as the probability that a uniformly random level-$(\lvl-1)$ block triple $X_I Y_J Z_K$ consistent with $\{\alpha_t\}_{t \in [s']}$ is compatible with $Y_{\hat{J}}$. We further define 
  \[\pcompY \, \defeq \max_{\substack{\midBK{\xiYt}_{t \in [s']} \,: \\ \norm{\xiYt - \splresYt}_\infty \le \eps \; \forall t}} \pcompY^*(\midBK{\xiYt}_{t \in [s']}).\]
\end{definition}

\begin{definition}[$\pcompZ$]
  For fixed $Z_{\hat{K}}$ and $Z_K$ where $Z_{\hat{K}} \in Z_K$ and $\hat K$ has level-$\lvl$ complete split distributions $\midBK{\xiZt}_{t \in [s']}$, we define $\pcompZ^*(\midBK{\xiZt}_{t \in [s']})$ as the probability that a uniformly random level-$(\lvl-1)$ block triple $X_I Y_J Z_K$ consistent with $\{\alpha_t\}_{t \in [s']}$ is compatible with $Z_{\hat{K}}$. We further define 
  \[\pcompZ \, \defeq \max_{\substack{\midBK{\xiZt}_{t \in [s']} \,: \\ \norm{\xiZt - \splresZt}_\infty \le \eps \; \forall t}} \pcompZ^*(\midBK{\xiZt}_{t \in [s']}).\]
\end{definition}

Similar to before, $\pcompY^*(\midBK{\xiYt}_{t \in [s']})$ does not depend on the choice of $Y_{\hat J}$ and $Y_J$ by symmetry, so it is well-defined, and thus is $\pcompY$. It is also the case for $\pcompZ^*(\midBK{\xiZt}_{t \in [s']})$ and $\pcompZ$.

Next, we upper bound the value of $\pcompY$. 

\begin{claim}
  The value of $\pcompY^*(\midBK{\xiYt}_{t \in [s']})$ is at most
  \[
    2^{\sum_{t \in [s']} \left( \eta_{\itY,t} - H(\xiYt) + H(\gamma_{\itY, t}) \right) A_{t,1} \cdot n_t \pm o(n)}.
  \]
  Furthermore,
  \[
    \pcompY \,\le\, 2^{\sum_{t \in [s']} \left( \eta_{\itY,t} - H(\splresYt) + H(\gamma_{\itY, t}) + \oeps(1) \right) A_{t,1} \cdot n_t + o(n)}.
    \numberthis \label{eq:constituent:pcomp_upper_bound}
  \]
\end{claim}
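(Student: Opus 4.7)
The plan is to mirror the counting argument used in the global-stage analysis of $\pcompY$ that appears earlier in this paper, now carried out term-by-term across $t \in [s']$. For fixed complete split distributions $\{\xiYt\}_{t \in [s']}$ on $\hat J$, I would define two counts:
\[
P^* := \#\BK{(I,J,K,\hat J) \mmid X_IY_JZ_K \text{ consistent with } \{\alpha_t\}, \, Y_{\hat J}\in Y_J, \, \split(\hat J, t\text{-th part}) = \xiYt \ \forall t},
\]
and $Q^*$ the same count with the added requirement that $Y_{\hat J}$ is compatible with $X_IY_JZ_K$ in the sense of \cref{def:constituent:Y-compatibility}. By the symmetry between typical $Y_{\hat J}$'s and between the $Y_J$'s consistent with $\{\gamma_{Y,t}\}$, we have $\pcompY^*(\{\xiYt\}) = Q^*/P^*$, so the job reduces to estimating $P^*$ and $Q^*$ separately.

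For $P^*$, use \cref{claim:constituent-hash-quantities} to count the triples consistent with $\{\alpha_t\}$, giving the factor $\prod_t 2^{H(\alpha_t) A_{t,1} n_t \pm o(n)}$. Inside each fixed $Y_J$ consistent with $\gamma_{Y,t}$, the number of $\hat J$'s with the prescribed level-$\lvl$ split $\xiYt$ on the $t$-th part is $2^{(H(\xiYt) - H(\gamma_{Y,t})) A_{t,1} n_t \pm o(n)}$ (this is a standard multinomial calculation identical to the $P$-calculation in the global stage). Multiplying across $t$ gives
\[
P^* = \prod_{t \in [s']} 2^{(H(\alpha_t) + H(\xiYt) - H(\gamma_{Y,t})) A_{t,1} n_t \pm o(n)}.
\]

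For $Q^*$, I first reprove the analogue of the alternative characterization used in the global stage: for a fixed triple consistent with $\{\alpha_t\}$, $Y_{\hat J}$ is $Y$-compatible iff for every $t$ and every $(i',j',0)$ we have $\split(\hat J, S_{t,i',j',0}) = \splres_{Y,t,i',j',0}$, and for every $t$ and every $j'$, $\split(\hat J, S_{t,*,j',\+}) = \splresavg_{Y,t,*,j',\+}$; this is essentially the same derivation as in the proof of the global $\pcompY$-claim and should go through unchanged once the subscript $t$ is attached everywhere. Then, for each fixed triple, $\{S_{t,i',j',0}\}_{i',j'} \cup \{S_{t,*,j',\+}\}_{j'}$ partitions the $t$-th part of $[n]$, and independent multinomial counts on each part yield $\prod_t 2^{\eta_{Y,t} A_{t,1} n_t \pm o(n)}$ compatible $\hat J$'s per triple. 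Multiplying by the number of triples gives
\[
Q^* = \prod_{t \in [s']} 2^{(H(\alpha_t) + \eta_{Y,t}) A_{t,1} n_t \pm o(n)},
\]
and dividing by $P^*$ yields the stated bound on $\pcompY^*(\{\xiYt\})$.

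For the bound on $\pcompY$ itself, I use that $\pcompY = \max_{\{\xiYt\}} \pcompY^*(\{\xiYt\})$ over $\|\xiYt - \splresYt\|_\infty \le \eps$, and that inside the exponent the only dependence on $\xiYt$ is through $-H(\xiYt)$. Since the Shannon entropy is uniformly continuous on the simplex and each $\xiYt$ is within $\eps$ in $L_\infty$ of $\splresYt$, continuity gives $|H(\xiYt) - H(\splresYt)| = \oeps(1)$, which I substitute into the exponent to obtain the final bound \cref{eq:constituent:pcomp_upper_bound}. The main thing to be careful about is the bookkeeping of subscripts and the product-over-$t$ structure (as opposed to the single-term form in the global stage); no genuinely new estimate is needed beyond those already used in the global $\pcompY$-argument and the $\pcompZ$-argument inherited from \cite{VXXZ24}.
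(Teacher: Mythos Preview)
Your proposal is correct and follows essentially the same approach as the paper: define the counts $P$ and $Q$, compute $P$ directly, upper-bound $Q$ via the alternative compatibility characterization together with a partition-based multinomial count, then divide and invoke entropy continuity for the $\eps$-perturbation. One small point: your displayed expression for $Q^*$ should be an inequality ($Q^* \le \ldots$) rather than an equality, since the partition count you describe silently drops the constraint that $\hat J$ has split distributions $\{\xiYt\}$---this is exactly what the paper does too, and is why the claim only asserts an upper bound on $\pcompY^*$; also keep in mind that in the constituent stage each level-$\lvl$ position contributes a left and a right level-$(\lvl-1)$ position, which is why the set sizes (and hence $\eta_{Y,t}$) pick up both $\alpha_t(i',j',k')$ and $\alpha_t(i_t-i',j_t-j',k_t-k')$ terms.
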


\begin{proof}
We define the following two quantities:
\begin{enumerate}
    \item[($P$)] The number of tuples $(I, J, K, \hat{J})$ where $X_I Y_J Z_K$ is consistent with $\{\alpha_t\}_{t \in [s']}$, $Y_{\hat J} \in Y_J$, and $\hat Y$ has complete split distributions $\midBK{\xiYt}_t$.
    \item[($Q$)] The number of tuples $(I, J, K, \hat{J})$ where $X_I Y_J Z_K$ is consistent with $\{\alpha_t\}_{t \in [s']}$, $Y_{\hat J} \in Y_J$, $\hat Y$ has complete split distributions $\midBK{\xiYt}_t$, and additionally $Y_{\hat J}$ is compatible with the triple $X_I Y_J Z_K$.

\end{enumerate}
Notice that by definition and by symmetry $\pcompY(\midBK{\xiYt}_{t \in [s']}) = Q/P$. 

We first compute the simpler quantity $P$, which is the number of level-$(\lvl-1)$ block triples $X_I Y_J Z_K$ consistent with $\{\alpha_t\}_{t \in [s']}$ (this quantity is $\numalpha = 2^{\sum_t (H(\alpha_t) \cdot A_{t, 1} \cdot n_t) \pm o(n)}$) times the number of $Y_{\hat J} \in Y_J$ consistent with $\midBK{\xiYt}_t$ (this quantity is $2^{(H(\xiYt)-H(\splonelevelYt))) \cdot A_{t, 1} \cdot n_t \pm o(n)}$). Overall, 
\[
P = 2^{\sum_t (H(\alpha_t) + H(\xiYt)-H(\splonelevelYt)) \cdot A_{t, 1} \cdot n_t \pm o(n)}. 
\]

Next, we consider how to compute $Q$. In fact, we will show an upper bound on $Q$ (as we only need an upper bound on $\pcompY(\midBK{\xiYt}_{t \in [s']})$) by dropping the condition that ``$\hat J$ has complete split distributions $\midBK{\xiYt}_t$''.

Similar to before, we have the following claim (whose proof we omit):
\begin{claim}
    Given a level-$\ell$ triple $X_IY_JZ_K$ that is consistent with $\{\alpha_t\}_{t \in [s']}$, a level-$1$ block $Y_{\hat J}$ with $Y_{\hat J}\in Y_J$ is compatible with $X_IY_JZ_K$ if and only if the following hold:
    \begin{enumerate}
    \item For every $t \in [s']$ and every $(i', j', k') \in \Z^3_{\ge 0} \cap ([0, i_t] \times [0, j_t] \times [0, k_t])$ with $i' + j' + k' = 2^{\lvl - 1}$ and $k' = 0$, there is $\split(\hat J, S_{t, i',j',k'}) = \splres_{\itY,t, i',j',k'}$.
    \item For every $t \in [s']$ and $j' \in \{0, \ldots, j_t\}$,  $\split(\hat J, S_{t, *, j', \+}) = \splresavg_{\itY, t, *, j', \+}$, where $S_{t, *, j', \+} = \bigcup_{i' \ge 0, k' > 0} S_{t, i', j', k'}$.
    \end{enumerate}
\end{claim}

The benefit of the above claim is that $\{S_{t, i',j',0}\}_{t, i', j'} \cup \{S_{t, *, j', \+}\}_{t, j'}$ forms a partition of all the indices in the first region of the $t$-th part of the interface tensor, after we fix some $X_I Y_J Z_K$ that is consistent with $\{\alpha_t\}_{t \in [s']}$. Thus, we can compute the possibilities of $\hat{J}$ on each of these subsets. 

For $S_{t, i',j',0}$, the number of possibilities of $\hat{J}$ is \[
2^{H(\splres_{\itY, t, i', j', 0}) \cdot (\alpha_t(i', j', 0) + \alpha_t(i_t - i', j_t - j', k_t - k')) \cdot A_{t, 1} \cdot n_t \pm o(n)}.
\]
For $S_{t, *, j', \+}$, the number of possibilities of $\hat{J}$ is \[
2^{H(\splresavg_{\itY, t, *, j', \+}) \cdot (\alpha_t(*, j', \+) + \alpha_t(*, j_t - j', \<)) \cdot A_{t, 1} \cdot n_t \pm o(n)}.
\]
Thus, The total number of possible $\hat{J}$ for a fixed triple $X_I Y_J Z_K$ is 
\begin{align*}
& 2^{\sum_t \left(\sum_{i', j'} H(\splres_{\itY, t, i', j', 0}) \cdot (\alpha_t(i', j', 0) + \alpha_t(i_t - i', j_t - j', k_t - k')) 
+ 
\sum_{j'} H(\splresavg_{\itY, t, *, j', \+}) \cdot (\alpha_t(*, j', \+) + \alpha_t(*, j_t - j', \<))
\right) \cdot A_{t, 1} \cdot n_t \pm o(n)} \\
={} & 2^{\sum_t \eta_{\itY, t} \cdot A_{t, 1} \cdot n_t \pm o(n)}.  
\end{align*}
Multiplying the above with the number of triples $X_I Y_J Z_K$, we get 
\[
Q \le 2^{\sum_t (H(\alpha_t) + \eta_{\itY, t}) \cdot A_{t, 1} \cdot n_t \pm o(n)}. 
\]
Finally, 
\[
\pcompY^*(\midBK{\xiYt}_{t \in [s']}) = Q / P \le 2^{\sum_t (\eta_{\itY, t} - H(\xiYt) + H(\splonelevelYt)) \cdot A_{t, 1} \cdot n_t \pm o(n)}
\]

The bound \eqref{eq:constituent:pcomp_upper_bound} on $\pcompY$ follows because the $L_\infty$ distance between $\midBK{\xiYt}_t$ and $\midBK{\splresYt}_t$ is at most $\eps$. 
\end{proof}

The following upper bound of $\pcompZ$ is the same as that in \cite{VXXZ24}, so we omit its proof.

\begin{claim}[{\cite[Claim 6.13]{VXXZ24}}]
  The value of $\pcompZ^*(\midBK{\xiZt}_{t \in [s']})$ is at most
  \[
    2^{\sum_{t \in [s']} \left( \lambda_{\itZ,t} - H(\xiZt) + H(\gamma_{\itZ, t}) \right) A_{t,1} \cdot n_t \pm o(n)}.
  \]
  Furthermore,
  \[
    \pcompZ \,\le\, 2^{\sum_{t \in [s']} \left( \lambda_{\itZ,t} - H(\splresZt) + H(\gamma_{\itZ, t}) + \oeps(1) \right) A_{t,1} \cdot n_t + o(n)}.
    \numberthis \label{eq:constituent:pcompZ_upper_bound}
  \]
\end{claim}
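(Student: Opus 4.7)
The plan is to repeat the template used for $\pcompY^*$, with $Z$ now playing the role of $Y$. Define $P$ to be the number of tuples $(I,J,K,\hat K)$ such that $X_IY_JZ_K$ is consistent with $\{\alpha_t\}_{t\in[s']}$, $Z_{\hat K}\in Z_K$, and $\hat K$ has level-$\lvl$ complete split distributions $\{\xiZt\}_t$; and define $Q$ as the subset of these tuples for which $Z_{\hat K}$ is additionally compatible with $X_IY_JZ_K$ in the sense of \cref{def:constituent:compatibility}. By symmetry across the choice of $Z_{\hat K}\in Z_K$ with prescribed split distributions, $\pcompZ^*(\{\xiZt\}_t)=Q/P$, so it suffices to bound these two quantities separately.

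The denominator factors as $P=\numalpha\cdot 2^{\sum_t(H(\xiZt)-H(\splonelevelZt))A_{t,1}n_t\pm o(n)}$, using \cref{claim:constituent-hash-quantities} for the triple count and standard multinomial counting for the number of $\hat K\in K$ with a given level-$\lvl$ complete split distribution. For the numerator, the key step is a reformulation of $Z$-compatibility that respects a clean partition of the indices: I claim that $Z_{\hat K}$ is compatible with $X_IY_JZ_K$ iff $\split(\hat K,S_{t,i',j',k'})=\splres_{Z,t,i',j',k'}$ for every $t$ and every $(i',j',k')$ with $i'=0$ or $j'=0$, and $\split(\hat K,S_{t,\+,\+,k'})=\splresavg_{Z,t,\+,\+,k'}$ for every $t$ and every $k'$. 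The forward direction is immediate from the definitions; the reverse direction follows by taking the weighted average of the fixed-distribution constraints over $(i',j',k')$ with $i'=0$ or $j'=0$ and combining with the average-distribution constraint on $S_{t,\+,\+,k'}$ to recover condition (2) of \cref{def:constituent:compatibility}, exactly as in the analogous reformulation used for $Y$-compatibility.

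With this reformulation, for a fixed triple $X_IY_JZ_K$ the sets $\{S_{t,i',j',k'}:i'=0\text{ or }j'=0\}\cup\{S_{t,\+,\+,k'}\}_{k'}$ form a partition of the indices in the $t$-th part, so I can count admissible $\hat K$ block-by-block via the entropy formula. Summing the block entropies and tracking the doubled weights $\alpha_t(i',j',k')+\alpha_t(i_t-i',j_t-j',k_t-k')$, which arise because each level-$(\lvl-1)$ constituent index can appear as either half of the level-$\lvl$ split, exactly recovers $\lambda_{Z,t}$ as defined in \cref{tab:notation-constituent}, yielding $Q\le\numalpha\cdot 2^{\sum_t\lambda_{Z,t}A_{t,1}n_t\pm o(n)}$. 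Dividing by $P$ gives the first inequality. For the second inequality, $\pcompZ=\max\pcompZ^*$ subject to $\norm{\xiZt-\splresZt}_\infty\le\eps$ for all $t$, and since entropy is continuous one has $|H(\xiZt)-H(\splresZt)|\le\oeps(1)$, which yields the stated bound. The main obstacle is the careful bookkeeping of the doubled weights and the partition structure, but as the statement is essentially \cite[Claim 6.13]{VXXZ24}, no new ideas beyond those already used in the $\pcompY$ analysis are required.
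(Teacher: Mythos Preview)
Your proposal is correct and follows exactly the approach the paper would take: the paper omits this proof entirely, citing \cite[Claim 6.13]{VXXZ24} and noting it is the same as there, but your sketch is precisely the $Z$-analog of the $\pcompY$ argument the paper does spell out in full, with the partition $\{S_{t,i',j',k'}:i'=0\text{ or }j'=0\}\cup\{S_{t,\+,\+,k'}\}_{k'}$ replacing $\{S_{t,i',j',0}\}\cup\{S_{t,*,j',\+}\}_{j'}$ and $\lambda_{Z,t}$ replacing $\eta_{Y,t}$.
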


The proof of the following claim is essentially the same as that of \cref{cl:global:prob-of-holes}.
\begin{claim}
  \label{cl:constituent:prob-of-holes}
  For every $b \in B$, every level-$(\lvl-1)$ block triple $X_I Y_J Z_K$ consistent with $\{\alpha_t\}_{t \in [s']}$, and every typical $Z_{\hat{K}}\in Z_K$, the probability that $Z_{\hat{K}}$ is compatible with multiple level-$(\lvl-1)$ block triples in $\TZComp$ is at most
  \[ \frac{\numalpha \cdot \pcompZ}{\numzblock \cdot M_0}, \]
  conditioned on $\hashx(I) = \hashy(J) = \hashz(K) = b$.

  Similarly, for every $b \in B$, every level-$(\lvl-1)$ block triple $X_I Y_J Z_K$ consistent with $\{\alpha_t\}_{t \in [s']}$, and every typical $Y_{\hat{J}}\in Y_J$, the probability that $Y_{\hat{J}}$ is compatible with multiple level-$(\lvl-1)$ block triples in $\TYComp$ is at most
  \[ \frac{\numalpha \cdot \pcompY}{\numyblock \cdot M_0}, \]
  conditioned on $\hashx(I) = \hashy(J) = \hashz(K) = b$.
\end{claim}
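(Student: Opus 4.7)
The plan is to mirror the proof of \cref{cl:global:prob-of-holes} from the global stage, now with the constituent-stage counts and split distributions. Since the level-1 block $\hat{K}$ uniquely determines the level-$(\lvl-1)$ $Z$-block $Z_K$ containing it, any triple in $\TZComp$ other than $X_I Y_J Z_K$ that contains $Z_{\hat{K}}$ must be of the form $X_{I'} Y_{J'} Z_K$ with $(I', J') \ne (I, J)$. For $Z_{\hat{K}}$ to be compatible with multiple triples in $\TZComp$, such a competing triple must (i) be consistent with $\midBK{\alpha_t}_{t \in [s']}$, (ii) have $Z_{\hat{K}}$ compatible with it, and (iii) survive the hashing into the same bucket $b$. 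I would first count compatible competitors, then bound the probability each survives hashing, then union-bound.

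For the count: by symmetry over $Z$-blocks consistent with $\midBK{\gamma_{Z,t}}_{t}$, the number of $\alpha$-consistent triples with fixed $Z$-block $Z_K$ is $\numalpha/\numzblock$. Among these, by definition $\pcompZ^*(\midBK{\xiZt}_t)$ is the fraction compatible with $Z_{\hat{K}}$, where $\midBK{\xiZt}_t$ are the level-$\lvl$ complete split distributions of $\hat{K}$. Because $Z_{\hat{K}}$ is typical, $\norm{\xiZt - \splresZt}_\infty \le \eps$ for every $t$, so $\pcompZ^*(\midBK{\xiZt}_t) \le \pcompZ$ by the maximality in the definition of $\pcompZ$. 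Hence the number of compatible competitors is at most $\numalpha \cdot \pcompZ / \numzblock$.

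For each such competitor, the second part of item~\ref{item:lem:more-asym-hash:constituent:item1} of \cref{lem:more-asym-hash-constituent}, applied to triples sharing the $Z$-block $Z_K$, yields
\[
\Pr\bigl[h_X(I') = h_Y(J') = h_Z(K) = b \mmid h_X(I) = h_Y(J) = h_Z(K) = b\bigr] = 1/M.
\]
A union bound over all compatible competitors then gives the desired bound
\[
\frac{\numalpha \cdot \pcompZ}{\numzblock \cdot M} \le \frac{\numalpha \cdot \pcompZ}{\numzblock \cdot M_0}.
\]
The second half of the claim, concerning $Y_{\hat{J}}$, is entirely analogous: $\hat{J}$ determines its containing level-$(\lvl-1)$ block $Y_J$ uniquely, competing triples take the form $X_{I'} Y_J Z_{K'}$, and the same part of \cref{lem:more-asym-hash-constituent} applies to pairs of triples sharing a $Y$-block. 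I do not expect a genuine obstacle, since the argument is a direct adaptation of \cref{cl:global:prob-of-holes}; the only delicate point is verifying that the constituent-stage notion of typicalness is precisely tuned to guarantee $\pcompZ^*(\midBK{\xiZt}_t) \le \pcompZ$ and $\pcompY^*(\midBK{\xiYt}_t) \le \pcompY$, which follows directly from how these quantities are defined as a maximum over $\eps$-perturbed split distributions.
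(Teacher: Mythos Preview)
Your proposal is correct and matches the paper's approach; the paper itself declines to give a proof, stating only that it is ``essentially the same as that of \cref{cl:global:prob-of-holes},'' and your write-up is precisely that adaptation (count $\alpha$-consistent triples through the shared $Z$- or $Y$-block, restrict to compatible ones via $\pcompZ$ or $\pcompY$, then use the pairwise hashing property of \cref{lem:more-asym-hash-constituent} and a union bound). One small remark: the inequality $\pcompZ^*(\{\xiZt\}_t)\le \pcompZ$ is guaranteed not so much by the constituent-stage typicalness condition as by the region zero-out in \cref{cl:dividing_into_region_zero_out}, which already enforces $\|\xiZt - \splresZt\|_\infty \le \eps$ for every surviving level-$1$ block; but this only strengthens your conclusion.
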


Recall that we first require $M_0$ to be $\ge 8 \cdot \frac{\numtriple}{\numxblock}$. Now, we finalize our choice of $M_0$ as:
\begin{gather*}
M_0 = \max\left\{\frac{8\numtriple}{\numxblock}, \; \frac{\numalpha \cdot \pcompY}{\numyblock} \cdot n^2, \; \frac{\numalpha \cdot \pcompZ}{\numzblock} \cdot n^2\right\} \\
\le 2^{\max\BK{\sum_t (H(\alpha_t) - P_{\alpha, t} - H(\gamma_{\itX, t})) A_{t,1} \cdot n_t, \;\; \sum_t (H(\alpha_t) + \eta_{\itY, t} - H(\splresYt)) A_{t,1} \cdot n_t, \;\; \sum_t (H(\alpha_t) + \lambda_{\itZ,t} - H(\splresZt)) A_{t,1} \cdot n_t} + o(n)}.
\end{gather*}

We consider the fraction of holes in $Y$-variables caused by $Y$-compatibility zero-outs. By \cref{cl:constituent:prob-of-holes} and by the upper bound on $M_0$, the probability that each typical $Y_{\hat{J}}$ is compatible with multiple triples is at most $\frac{1}{n^2}$; the same also holds for typical level-$1$ $Z$-blocks. As discussed earlier at the beginning of this subsection, there is another type of holes caused by the input complete split distributions, whose fraction is $O(1/n^2)$. Overall, we expect to get $\numalpha \cdot M^{-1-o(1)}$ copies of $\T^*$ whose fraction of holes is $O(1/n^2)$. By \cref{thm:fix-holes}, we can degenerate them into $\numalpha \cdot M^{-1-o(1)}$ independent copies of unbroken $\T^*$. 

\subsection{Summary}

In conclusion, the above algorithm degenerates $\bigotimes_{t = 1}^{s'} T_{i_t, j_t, k_t}^{\otimes A_{t,1} n_t}\bigBk{\splresXt^{(1)}, \splresYt^{(1)}, \splresZt^{(1)}, \eps}$ into 
\begin{align*}
& \numalpha \cdot M_0^{-1-o(1)}\\
\ge{} & 2^{\min\left\{\sum_{t \in [s']} A_{t,1} \cdot n_t \cdot \left( H(\splonelevelXt[t]^{(1)}) - P_{\alpha, t}^{(1)}\right), \;\; \sum_{t \in [s']} A_{t,1} \cdot n_t \cdot \left(H(\splres_{\itY, t}^{(1)}) - \eta_{\itY, t}^{(1)}\right), \;\; \sum_{t \in [s']} A_{t,1} \cdot n_t \cdot \left(H(\splres_{\itZ, t}^{(1)}) - \lambda_{\itZ, t}^{(1)}\right)\right\} - \oeps(n) - o(n)}
\end{align*}
independent copies of a level-$(\lvl-1)$ interface tensor $\T^*$ with parameter list
\[
  \begin{aligned}
    \Big\{ \Big(
    A_{t,1} \cdot n_t \cdot \bigbk{\alpha^{(1)}_t(i', j', k')+\alpha^{(1)}_t(i_t \!-\! i', j_t \!-\! j', k_t \!-\! k')}, & \\
    i', \; j', \; k', \; \splres^{(1)}_{\itX, t, i', j', k'}, \; \splres^{(1)}_{\itY, t, i', j', k'}, \; \splres^{(1)}_{\itZ, t, i', j', k'} & \Big) \Big\} _{t \in [s'], \, i' + j' + k' = 2^{\lvl-1}}.
  \end{aligned}
\]

The above algorithm was described for the first region; the algorithm for other regions is identical except that we permute the roles of the $X$, $Y$, $Z$-dimensions. In the end, we take the tensor product over the output tensor of the algorithm over all $6$ regions.

\newcommand{\numgrow}{V}
\newcommand{\numgrowl}[1][\l]{\numgrow_{#1}}
\newcommand{\interl}[1][\l]{\T_{#1}}

\section{Numerical Result}
\label{sec:numerical}

The way we combine theorems from previous sections to obtain numerical bounds on $\omega(1, \kappa, 1)$ is similar to previous works (e.g., \cite{VXXZ24}). Let $\l^* > 0$ be an integer denoting the highest level we hope to analyze, and let $N = 2^{\l^* - 1} \cdot n$. We repeatedly apply \cref{thm:global-stage-with-eps,thm:consituent_MM_terms,thm:constituent-stage-with-eps} to degenerate a direct sum of $2^{o(n)}$ copies of $\CW_q^{\otimes N} \equiv \bigbk{\CW_q^{\otimes 2^{\l^* - 1}}}^{\otimes n}$ into a direct sum of matrix multiplication tensors $\angbk{a, a^{\kappa}, a}$, shown in \cref{alg:framework}.

\begin{figure}[ht]
  \centering
  \begin{tcolorbox}
    \captionof{algocf}{Procedure of degeneration (similar to \cite{VXXZ24})}{\label{alg:framework}}
    Let $\eps > 0$ be a fixed constant and $\l^* > 0$ be an integer.
    \begin{enumerate}
    \item Degenerate a direct sum of $2^{o(n)}$ independent copies of $\bigbk{\CW_q^{\otimes 2^{\l^* - 1}}}^{\otimes n}$ into $\numgrowl[\l^*]$ independent copies of a level-$\l^*$ $(\eps \cdot 3^{\l^*})$-interface tensor $\interl[\l^*]$, where the number of copies $\numgrowl[\l^*]$ and the parameter list of $\interl[\l^*]$ are given by \cref{thm:global-stage-with-eps} and \cref{prop:global-stage-no-eps}.
    \item For each $\l = \l^*, \ldots, 2$:
      \begin{itemize}
      \item Degenerate \emph{every} $2^{o(n)}$ copies of the level-$\l$ $(\eps \cdot 3^{\l})$-interface tensor $\interl[\l]$ into a direct sum of $\numgrowl[\l - 1]$ copies of the Kronecker product of a level-$(\l - 1)$ $(\eps \cdot 3^{\l - 1})$-interface tensor $\interl[\l - 1]$ and a matrix multiplication tensor $\angbk{a_\l, b_\l, c_\l}$. The number of copies $\numgrowl[\l - 1]$, the parameter list of $\interl[\l - 1]$, and the size of the matrix multiplication tensor $\angbk{a_\l, b_\l, c_\l}$ are all given by \cref{thm:constituent-stage-with-eps} and \cref{prop:constituent-stage-no-eps}.
      \end{itemize}
    \item Degenerate the level-$1$ $3\eps$-interface tensor $\interl[1]$ into a matrix multiplication tensor $\angbk{a_1, b_1, c_1}$ according to \cref{thm:consituent_MM_terms}.
    \item The above steps will produce a direct sum of $\numgrow \defeq \prod_{\l = 1}^{\l^*} \numgrowl[\l]$ copies of $\angbk{A, B, C} \equiv \bigotimes_{\l = 1}^{\l^*} \angbk{a_\l, b_\l, c_\l}$.
    \end{enumerate}
    Letting $n \to \infty$ and applying Sch{\"o}nhage's asymptotic sum inequality 
    (\cref{thm:schonhage-ineq-rect}) on the above degeneration result in a bound on $\omega(1, \kappa, 1)$ which depends on $\eps$. Then, we let $\eps \to 0$ to obtain the bound $\omega(1, \kappa, 1) \le \omega'$, as long as
    \[
      \lim_{\eps \to 0} \lim_{n \to \infty} \numgrow^{1/n} \cdot \min\bigBK{A, B^{1/\kappa}, C}^{\omega'/n} \ge (q + 2)^{2^{\l^* - 1}}.
      \numberthis \label{eq:asymptotic_sum_inequality_in_algorithm}
    \]
  \end{tcolorbox}
  \vspace{-1.5em}
\end{figure}

\bigskip

Every step of degeneration in \cref{alg:framework} takes a set of parameters, which includes the distribution $\alpha$ over constituent tensors, the fractions of tensor powers $A_1, \ldots, A_6$ assigned to six regions, etc. Given an assignment to these parameters, we can calculate
\[
  \lim_{\eps \to 0} \lim_{n \to \infty} \numgrowl^{1/n}, \quad \lim_{\eps \to 0} \lim_{n \to \infty} a_\l^{1/n}, \quad \lim_{\eps \to 0} \lim_{n \to \infty} b_\l^{1/n}, \quad \lim_{\eps \to 0} \lim_{n \to \infty} c_\l^{1/n}
\]
by \cref{thm:global-stage-with-eps,thm:consituent_MM_terms,thm:constituent-stage-with-eps}. One can substitute them into \eqref{eq:asymptotic_sum_inequality_in_algorithm} to verify a claimed bound on $\omega(1, \kappa, 1)$.

\paragraph{Optimization.}

Finding the set of parameters that leads to the best bound for $\omega(1, \kappa, 1)$ can be formulated as an optimization problem:
\[
  \begin{array}{cl}
    \textup{minimize} & \qquad \omega' \\
    \textup{subject to} & \textup{all constraints in \cref{thm:global-stage-with-eps,thm:consituent_MM_terms,thm:constituent-stage-with-eps}} \\
                      & \displaystyle \lim_{\eps \to 0} \lim_{n \to \infty} \numgrow^{1/n} \cdot \min\bigBK{A, B^{1/\kappa}, C}^{\omega'/n} \ge (q + 2)^{2^{\l^* - 1}}.
  \end{array}
  \numberthis \label{eq:optimization_problem}
\]
To optimize \eqref{eq:optimization_problem}, we utilize the software package SNOPT \cite{SNOPT}, which employs a \emph{sequential quadratic programming (SQP)} algorithm. Although SNOPT does not guarantee finding the optimal solution of the optimization problem, any feasible solution that provides a reasonably good bound on $\omega(1, \kappa, 1)$ is acceptable.

\cite[Section 8]{VXXZ24} mentioned several tricks in the optimization, including setting specific initial points, using the exponential form of Lagrange multiplier constraints, and transforming minimum functions into linear inequalities. We use all of them in the same way. See \cite{VXXZ24} for details.

\paragraph{Numerical results.}

We wrote a MATLAB \cite{MATLAB2022} program which uses SNOPT \cite{SNOPT} to solve the optimization problem \eqref{eq:optimization_problem}. By executing the program with various values of $\kappa$, we derived upper bounds for $\omega(1, \kappa, 1)$, as listed in \cref{table:result}. All these bounds were derived by analyzing $\CW_5^{\otimes 4}$, the fourth power\footnote{We have no doubt that analyzing the eighth power and beyond would yield further improvements. However, the optimization solvers are already very slow in solving our program for the 4th power, so solving the 8th power program seems infeasible, at least with the current solver.} of the CW tensor with $q = 5$. In particular, we showed that $\omega \le 2.371339$ and $\mu \le 0.527500$.\footnote{Unfortunately, we could not obtain a bound on $\alpha$ via SNOPT, as it crashed due to potential numerical instabilities: the number of parameters in the optimization problem is significantly larger than in previous works, hence the numerical errors accumulated more and crashed a subroutine in SNOPT.}
The code and the sets of parameters are available at \url{https://osf.io/mw5ak/?view_only=5769f03789354793b61e11aac4dd85dd}.

\bibliographystyle{alpha}
\bibliography{ref}
\end{document}